\documentclass[11pt,a4paper]{article}
\usepackage[a4paper]{geometry}
\usepackage{amssymb,latexsym,amsmath,amsfonts,amsthm}
\usepackage{graphicx}
\usepackage{epstopdf}
\usepackage{tikz}
\usepackage{pgflibraryshapes}
\usetikzlibrary{arrows,decorations.markings}
\usepackage{subfigure}
\usepackage{overpic}
\usepackage{fullpage}
\usepackage{comment,verbatim}
\usepackage{hyperref}
\usepackage{color}
\usepackage{mathrsfs}
\usepackage[title,titletoc]{appendix}
\usepackage{ulem}

\renewcommand{\Im}{\mathrm{Im}\,}

\newcommand{\ds}{\displaystyle}

\newcommand{\Boh}{\mathcal{O}}

\newtheorem{theorem}{Theorem}[section]
\newtheorem{lemma}[theorem]{Lemma}
\newtheorem{proposition}[theorem]{Proposition}

\newtheorem{corollary}[theorem]{Corollary}

\newtheorem{rhp}[theorem]{RH problem}

\newtheorem{Riemann-Hilbert Problem}{Definition}

\theoremstyle{definition}

\theoremstyle{remark}

\newtheorem{remark}[theorem]{Remark}

\numberwithin{equation}{section}

\hyphenation{pa-ra-me-tri-za-tion}

\begin{document}

\title{Asymptotics of the deformed Fredholm determinant of the confluent hypergeometric kernel}

\author{Dan Dai\footnotemark[1] \ and Yu Zhai\footnotemark[2]}

\renewcommand{\thefootnote}{\fnsymbol{footnote}}
\footnotetext[1]{Department of Mathematics, City University of Hong Kong, Tat Chee
Avenue, Kowloon, Hong Kong. E-mail: \texttt{dandai@cityu.edu.hk}}
\footnotetext[2]{Department of Mathematics, City University of Hong Kong, Tat Chee
Avenue, Kowloon, Hong Kong. E-mail: \texttt{yuzhai4-c@my.cityu.edu.hk}}

\date{\today}

\maketitle

\begin{abstract}
In this paper, we consider the deformed Fredholm determinant of the confluent hypergeometric kernel. This determinant represents the gap probability of the corresponding determinantal point process where each particle is removed independently with probability $1- \gamma$, $0 \leq \gamma <1$. We derive asymptotics of the deformed Fredholm determinant when the gap interval tends to infinity, up to and including the constant term. As an application of our results, we establish a central limit theorem for the eigenvalue counting function and a global rigidity upper bound for its maximum deviation.
\end{abstract}

\tableofcontents

\section{Introduction}

Eigenvalue spacing statistics are one of the fundamental problems in random matrix theory. There are mainly two regimes when we consider properties of the eigenvalue distribution as the matrix size gets large. In the global regime, the limiting eigenvalue distributions are very different in various matrix ensembles. However, in the local regime, the eigenvalue  distributions only rely on symmetry classes and some general characteristics of the ensembles and  display fascinating universal features; for example, see \cite{Erd:Yau2012,Kui11,Meh2004,Pas1992}. Let us focus on a random unitary ensemble with the following joint eigenvalue probability density
\begin{equation} \label{eq:ue-def}
p(x_1, \cdots, x_n) = \frac{1}{Z_n} \prod_{1 \leq i<j \leq n}(x_i-x_j)^2 \prod_{j=1}^n w(x_j),
\end{equation}
where $Z_n$ is a normalization constant and $w(x)$ satisfies certain regularity conditions. Near an interior point where the limiting mean eigenvalue density is positive (after an appropriate scaling if necessary), the limiting eigenvalue correlation kernel is given by the sine kernel. This result has been established for a large class of functions $w(x)$ in \eqref{eq:ue-def} and is well-known as the bulk universality. For instance, Lubinsky \cite{Lubi2009} proved that the bulk universality holds under an extremely weak condition: $w(x)$ is positive and continuous at the  interior point. If there exist singularities inside the bulk, the limiting correlation kernel near the interior singular point changes accordingly. One class of important singularities is the Fisher-Hartwig class, which combines a root-type and a jump-type singularity; see \cite{Fish:Har1968}. Let us illustrate with a unitarily invariant ensemble with quadratic potential and a Fisher-Hartwig singularity at the origin, where the joint probability density is given by \eqref{eq:ue-def} with
\begin{equation}
w(x) = |x|^{2\alpha} \chi_\beta(x) e^{-x^2}, \qquad \alpha > -\frac{1}{2},
\end{equation}
and
\begin{equation} \label{def:chi-beta}
\chi_\beta(x)= \begin{cases}
e^{\pi i\beta}, \quad & x<0,\\
     e^{-\pi i\beta}, \quad & x>0,
\end{cases} \qquad   \beta \in i \mathbb{R}.
\end{equation}
In this case, the limiting correlation kernel near the origin is given by the confluent hypergeometric kernel \cite{Bor:Dei2002,Bor:Ols2001} below
\begin{equation}\label{chgkernel}
  K^{(\alpha,\beta)}(x,y)=\frac{1}{2\pi i}\frac{\Gamma(1+\alpha+\beta)\Gamma(1+\alpha-\beta)}{\Gamma(1+2\alpha)^2}\frac{\mathbb{A}(x)\mathbb{B}(y)-\mathbb{A}(y)\mathbb{B}(x)}{x-y},
\end{equation}
where
\begin{equation}
  \mathbb{A}(x)=\chi_\beta(x)^{\frac{1}{2}}|2x|^{\alpha}e^{-ix}\phi(1+\alpha+\beta,1+2\alpha,2ix), \quad \mathbb{B}(x)=\overline{\mathbb{A}(x)},
\end{equation}
and $\phi(a,b,z)$ is the confluent hypergeometric function (cf. \cite[Chap. 13]{NIST})
\begin{equation}
  \phi(a,b,z)=1+\sum_{k=1}^{\infty}\frac{a(a+1)\cdots(a+k-1)z^k}{b(b+1)\cdots(b+k-1)k!}.
\end{equation}
Besides the unitary ensembles mentioned above, the confluent hypergeometric kernel $K^{(\alpha,\beta)}(x,y)$  \eqref{chgkernel} also appears in the study of circular unitary ensembles with a Fisher-Hartwig singularity \cite{Dei:Kra:Vas2011}, and ergodic measures in infinite random matrices \cite{Bor:Dei2002, Bor:Ols2001}. Moreover, it reduces to the type-I Bessel kernel and sine kernel when the parameters $\alpha$ and $\beta$ equal 0. To be precise, we have
\begin{align}
K^{(\alpha,0)}(x,y) & = K^{\textrm{Bess,1}} (x,y) = \frac{|x|^\alpha |y|^\alpha }{ x^\alpha y^\alpha}  \frac{\sqrt{xy}}{2}
\frac{J_{\alpha  + \frac{1}{2}}(x) J_{\alpha  - \frac{1}{2}}(y) - J_{\alpha  - \frac{1}{2}}(x) J_{\alpha  + \frac{1}{2}}(y) }{x-y} , \label{eq: chg-bessel} \\
K^{(0,0)}(x,y) & = K^{\sin} (x,y) = \frac{\sin (x-y)}{\pi (x-y)}. \label{eq: chg-sine}
\end{align}

With the limiting correlation kernel, one may further investigate the eigenvalue spacing by considering the gap probability, namely, the probability that there is no eigenvalue in a given interval. Since eigenvalues of matrices drawn from unitary ensembles form a determinantal point process, the gap probability can be expressed as a Fredholm determinant. Let $\mathcal{K}^{(\alpha,\beta)}_s$  be the operator acting on $L^2(-s,s)$ with the confluent hypergeometric kernel given in \eqref{chgkernel}. Then, the Fredholm determinant
\begin{equation} \label{eq: F-det}
\det(I-\mathcal{K}^{(\alpha,\beta)}_s)
\end{equation}
gives the probability, in the bulking scaling limit, that there is no eigenvalue in $(-s,s)$ for the unitary ensemble \eqref{eq:ue-def} with a Fisher-Hartwig singularity at the center. Observing the relation between $\det(I-\mathcal{K}^{(\alpha,\beta)}_s)$ and a Toeplitz determinant, Deift et al. \cite{Dei:Kra:Vas2011} derived the large gap asymptotics as $s \to \infty$,  including the constant term. Recently, an integral expression has been established in Xu and Zhao \cite[Theorem 3]{Xu:Zhao2020}:
\begin{equation} \label{eq:F-TW-formula}
\det(I-\mathcal{K}^{(\alpha,\beta)}_s) = \exp\left( \int_0^{-4is} H(\tau; \alpha, \beta) d\tau \right),
\end{equation}
where $H(\tau; \alpha, \beta) $ is the Hamiltonian of a coupled Painlev\'e V system. The above representation is also regarded as a Tracy-Widom type formula due to its similarity to the integral representation of the celebrated Tracy-Widom distribution; see \cite{Tra:Wid1994}.

In the present paper, we will consider the deformed Fredholm determinant
\begin{equation} \label{eq: F-det-deform}
\det(I-\gamma \mathcal{K}^{(\alpha,\beta)}_s), \qquad \gamma \in [0,1),
\end{equation}
which gives the gap probability in $(-s,s)$ when each eigenvalue is removed independently with probability $1- \gamma$. Obviously when $\gamma=1$, this is the gap probability mentioned in \eqref{eq: F-det}. The deformed Fredholm determinant \eqref{eq: F-det-deform} is related to the well-known "thinning" operation in the context of point processes; see \cite{IPSS2008}. It is introduced to the study of random matrices by Bohigas and Pato \cite{Boh:Pato2006,Boh:Pato2004}. In the literature, the undeformed and deformed distributions have been well studied for the classical kernels such as the sine kernel, Bessel kernel and Airy kernel (see e.g. \cite{Bas:Wid1983,Boh:Car:Pato2009,Both:Buc2018,Both:Dei:Kra2015,Bud:Bus1995,Cha2021-Bessel,Cha2021-sine,Dei:Its:Kra2008,Ehr2010}), as well as the non-classical ones, such as the Pearcey kernel and the hard edge Pearcey kernel in \cite{Cha:Mor2021,Dai:Xu:Zhang2022,Dai:Xu:Zhang2022-2}. Although the integral expressions for the Fredholm determinant may look very similar between the undeformed case $\gamma = 1$ and the deformed one $0\leq \gamma <1$, the large gap asymptotics are essentially different. Non-trivial transitions take place when $\gamma \to 1^-$; for example, see \cite{Both2016,Both:Buc2018,Both:Dei:Kra2015}. Moreover, with the additional deforming parameter $\gamma$, the Fredholm determinant of the form \eqref{eq: F-det-deform} can be understood as a moment generating function. Then, one can apply its large gap asymptotics to obtain more information about the associated point process, such as central limit theorems for the counting function and global rigidity upper bounds for the eigenvalue fluctuation; see \cite{Cha2021,Cha2021-Bessel,Cha:Cla2020,Sosh2000}.

In the present work, we will derive the large gap asymptotics of \eqref{eq: F-det-deform} for $0 \leq \gamma < 1$, including the notoriously difficult constant term. Following similar ideas in \cite{Dei:Kra:Vas2011,Xu:Zhao2020}, we will make use of its relation with a Toeplitz determinant possessing Fisher-Hartwig singularities. Then, the Deift-Zhou nonlinear steepest descent method for Riemann-Hilbert (RH) problems (cf. \cite{Dei:Its:Zhou1997,dkmvz1999}) will be applied to derive asymptotics of the Toeplitz determinant. After modifying the model RH problem in \cite{Xu:Zhao2020} to suit the local parametrix construction, we will re-establish the integral representation \eqref{eq:F-TW-formula} for $0 \leq \gamma < 1$, whereas the Hamiltonian $H$ in the integrand is related to a different class of solutions to the  coupled Painlev\'e V system.

\section{Statement of results}

\subsection{Large gap asymptotics}

Our main result for the large gap asymptotics is given in the following theorem.

\begin{theorem}\label{mainresult}
Let $\mathcal{K}^{(\alpha,\beta)}_s$  be the operator acting on $L^2(-s,s)$ with the confluent hypergeometric kernel given in \eqref{chgkernel}.  Then, as $s \to +\infty$, we have
  \begin{equation}\label{fdeterminant}
    \det (I - \gamma \mathcal{K}^{(\alpha,\beta)}_s) =
   e^{2\alpha \pi c}  \left(G(1+ic)G(1-ic)\right)^2 (4s)^{2c^2} e^{-4cs}
    \left[1+O\left(\frac{1}{s}\right)\right], \quad  \gamma \in [0,1),
  \end{equation}
  for $\alpha$, $\beta$ and $\gamma$ in any compact subsets of $(-\frac{1}{2}, \infty)$, $i \mathbb{R}$ and $[0,1)$, respectively. Here,  $G(z)$ is the Barnes G-function and $c$ is a positive real constant defined as
  \begin{equation}\label{ic}
    c:=c(\gamma) = -\frac{1}{2 \pi}\ln (1-\gamma).
  \end{equation}
\end{theorem}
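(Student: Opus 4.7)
The plan is to adapt the Riemann--Hilbert (RH) analysis of \cite{Dei:Kra:Vas2011, Xu:Zhao2020} to the deformed regime $0\le \gamma<1$. The key structural observation is that when $\gamma<1$, the jump matrix of the integrable-kernel RH problem associated with $\gamma \mathcal{K}^{(\alpha,\beta)}_s$ no longer collapses to the identity on the gap interval $(-s,s)$; instead it retains a multiplicative factor $1-\gamma$. Factoring this through a Szeg\H{o}-type function introduces an additional Fisher--Hartwig-style jump of exponent $c=-\frac{1}{2\pi}\log(1-\gamma)$ at each endpoint $\pm s$, which is exactly the constant appearing in \eqref{ic}. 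This in turn forces the local parametrices at $\pm s$ to be confluent hypergeometric (rather than Bessel-type, as in the undeformed problem), with parameters combining $\alpha$, $\beta$ and $c$.

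Concretely, I would proceed in three steps. \textbf{Step 1.} Establish a differential identity for $\partial_s\log\det(I-\gamma\mathcal{K}^{(\alpha,\beta)}_s)$ in terms of the solution $Y$ of the RH problem for the deformed kernel, producing an integral representation of Tracy--Widom type,
\begin{equation*}
\det(I-\gamma\mathcal{K}^{(\alpha,\beta)}_s) = \exp\!\left(\int_0^{-4is}H(\tau;\alpha,\beta)\,d\tau\right),
\end{equation*}
analogous to \eqref{eq:F-TW-formula}, with $H$ now coming from a new class of solutions to the coupled Painlev\'e V system as anticipated at the end of the Introduction. \textbf{Step 2.} Perform the Deift--Zhou nonlinear steepest descent on $Y$: rescale $z\mapsto sz$, introduce a $g$-function normalisation $T$, open lenses $S$ around $(-1,1)$, build a global parametrix $P^{(\infty)}$ whose jump $(1-\gamma)^{\sigma_3}$ on $(-1,1)$ generates the Szeg\H{o}-type contribution $(4s)^{2c^2}$, and match at $\pm 1$ with local parametrices obtained from a suitably modified version of the confluent hypergeometric model RH problem of \cite{Xu:Zhao2020}. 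The small-norm problem should then give $R=I+O(1/s)$. \textbf{Step 3.} Read off the large-$\tau$ expansion of $H(\tau;\alpha,\beta)$ from the RH output and substitute it into the integral representation from Step 1 to obtain the $e^{-4cs}(4s)^{2c^2}\bigl(1+O(1/s)\bigr)$ portion of \eqref{fdeterminant}.

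The delicate point, as always in such problems, is the $s$-independent constant $e^{2\alpha\pi c}\bigl(G(1+ic)G(1-ic)\bigr)^{2}$. I expect to extract it by an integration-in-parameters argument: deriving a further differential identity for $\partial_\gamma\log\det$ (or $\partial_\alpha\log\det$), integrating along a path in $(\alpha,\gamma)$-space from a base point at which the answer is already known --- for example $(\alpha,\gamma)=(0,0)$, where \eqref{eq: chg-sine} reduces the problem to the thinned sine-kernel Fredholm determinant whose constant $\bigl(G(1+ic)G(1-ic)\bigr)^{2}$ is classical (see \cite{Bud:Bus1995, Boh:Car:Pato2009}) --- and then using Barnes $G$-function identities to recognise the final form. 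The apparent absence of $\beta$ from the constant should follow naturally, because $\beta\in i\mathbb{R}$ enters only via phase factors in the parametrices that cancel when all ingredients are combined into the real quantity $\det(I-\gamma\mathcal{K}^{(\alpha,\beta)}_s)$. This constant-term computation, together with the careful matching of orders across the parametrices, is where I anticipate essentially all of the technical difficulty.
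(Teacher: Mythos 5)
Your outline follows the same master plan as the paper: a Tracy--Widom type integral representation $\det(I-\gamma\mathcal{K}^{(\alpha,\beta)}_s)=\exp\bigl(\int_0^{-4is}H\,d\tau\bigr)$, a Deift--Zhou analysis in which the surviving $(1-\gamma)$ jump on the gap produces a Szeg\H{o}-type factor with exponent $c=-\frac{1}{2\pi}\ln(1-\gamma)$ and confluent hypergeometric local parametrices, and an integration-in-parameters argument for the constant. The packaging differs: the paper does not work directly with the kernel's RH problem but routes through a Toeplitz determinant with Fisher--Hartwig singularities ($\det(I-\gamma\mathcal{K}_s)=\lim_n D_n(2s/n)/D_n(0)$), a RH problem for orthogonal polynomials on the circle, and a separate model RH problem $\Psi(z;\tau)$ used as the local parametrix near $z=1$; the steepest descent with lenses on $(-1,1)$, the global parametrix with jump $(1-\gamma)^{\sigma_3}$, and the three confluent hypergeometric parametrices at $0,\pm1$ are then carried out on $\Psi$ as $i\tau\to+\infty$ (and no $g$-function is needed -- the phase is linear). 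Your direct-kernel route is viable (it is how the thinned sine kernel is usually treated) and more economical; the paper's route buys re-use of the Deift--Krasovsky--Vasilevska/Xu--Zhao machinery and the coupled Painlev\'e V interpretation. For the constant, your primary suggestion (integrate $\partial_\gamma\log\det$ from $\gamma=0$, where the determinant is trivially $1$) is exactly what the paper does; it does not anchor at the classical thinned sine constant, which it instead recovers as a corollary.

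Two ingredients in your plan are underspecified enough to flag. First, Step 3 only invokes the large-$\tau$ expansion of $H$, but $\int_0^{-4is}H\,d\tau$ also requires control of $H$ near $\tau=0$ (integrability, $H=O(\tau^{2\alpha})$) and pole-freeness on the whole ray; the paper secures this with a separate small-$\tau$ steepest descent and a vanishing lemma guaranteeing solvability of the model problem for all $\tau\in-i(0,\infty)$, and the small-$\tau$ data also fixes the integration constant in the differential identity relating the Toeplitz determinant to $H$. Second, ``deriving a further differential identity for $\partial_\gamma\log\det$'' needs to be made concrete: the paper uses the identity expressing $H$ as the classical action plus an exact differential, so that $\partial_\gamma$ of the action integral collapses to boundary terms $u_1\partial_\gamma v_1+u_2\partial_\gamma v_2$ evaluated at $\tau\to0$ and $\tau\to\infty$; this forces you to compute the asymptotics of $u_k,v_k$ (not just of $H$) at both ends, uniformly in $\gamma$, which is where the Barnes $G$-function and the factor $e^{2\alpha\pi c}$ actually emerge. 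These are gaps of detail rather than of strategy, but without them the constant term -- which you rightly identify as the crux -- is not yet within reach of your outline.
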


The above asymptotics show that the deformed gap probability, where each eigenvalue is removed independently with probability $1-\gamma$, is exponentially small while the gap gets large. When $\gamma$ increases in $(0,1)$,  the eigenvalues are less likely removed, so that the gap probability gets smaller. This agrees with the asymptotics in \eqref{fdeterminant} where the constant $c$ in the exponential factor is a positive and increasing function in $\gamma$. Recall the large gap asymptotics for $\gamma = 1$ in \cite[Theorem 1]{Dei:Kra:Vas2011}, which is much smaller with the exponent of the leading order in the exponential term doubled. More precisely, the exponential term $e^{-4cs}$ for $\gamma \in [0,1)$ in \eqref{fdeterminant}  is replaced by a ``super-exponential" term $e^{-\frac{s^2}{2} + 2 \alpha s}$ for $\gamma = 1$. This phenomenon seems to be universal for unitarily invariant matrix ensembles, which has been observed for the gap probability associated with classical Airy, sine and Bessel kernels in \cite{Bas:Wid1983,Both:Buc2018,Both:Its:Prok2019,Bud:Bus1995}, as well as the Pearcey kernel in \cite{Dai:Xu:Zhang2022}. Note that, in all of the asymptotic results mentioned, there is a non-trivial leading order transition as $\gamma \to 1^-$ and $s \to \infty$ simultaneously. The transition problem of $ \det (I - \gamma \mathcal{K}^{(\alpha,\beta)}_s)$ will not be addressed in the present work. We refer the interested readers for a full transition picture of the sine process in \cite{Both:Dei:Kra2018} and references therein.

In the derivation of the large gap asymptotics, it is always a challenging problem to determine the constant factor explicitly. In \eqref{fdeterminant}, we have successfully obtained it with the aid of the integral representation \eqref{eq:F-TW-formula}, which will be established for $\gamma \in [0,1)$ in Lemma \ref{lem: Integral-H}. The Hamiltonian $H$ in the integrand is closely related to a tau-function of the coupled Painlev\'e V system in dimension four discovered in \cite{Kawa2018}. Recently, a breakthrough has been achieved in \cite{Both:Its:Prok2019,Its:Lis:Pro2018,Its:Lis:Tyk2015,Its:Pro2016}, which shows that the tau-functions are equal to the classical action differential up to a total differential. Therefore, one can establish crucial differential identities for the $\tau$-function with respect to parameters ($\gamma$ in our case). As our subsequent asymptotic analysis is uniform for $\gamma$ in compact subsets of $[0,1)$, we can finally evaluate the constant explicitly by integrating with respect to $\gamma$; see similar derivations in \cite{Dai:Xu:Zhang2022,Dai:Xu:Zhang2022-2,Xu:Dai2019}.

\begin{remark}
  The parameter $\beta$ does not appear explicitly in the asymptotic formula \eqref{fdeterminant}, but is expected to be included in the $O(\frac{1}{s})$ term. This indicates that, on the level of large gap asymptotics, the jump type singularity $\chi_\beta(x)$ in \eqref{def:chi-beta} is less significant than the algebraic singularity $|x|^{-2\alpha}$ and the thinning operation at least for a symmetric gap $(-s,s)$. It would be interesting to see whether the similar situation still occurs for a non-symmetric gap, say $(-s_0, s_1)$.
\end{remark}

Let us recall the relation between the confluent hypergeometric kernel, the type-I Bessel kernel and sine kernel in \eqref{eq: chg-bessel} and \eqref{eq: chg-sine}. As Theorem \ref{mainresult} holds for $\alpha$ and $\beta$ in compact subsets containing 0, we obtain the following large gap asymptotics related to the  type-I Bessel kernel and sine kernel immediately.

\begin{corollary}
Let $\mathcal{K}^{\textrm{Bess,1}}_s$ and $\mathcal{K}^{\sin}_s$  be the operators acting on $L^2(-s,s)$ with the type-I Bessel kernel and the sine kernel given in  \eqref{eq: chg-bessel} and \eqref{eq: chg-sine}, respectively.  Then, as $s \to +\infty$, we have
  \begin{align}
  & \det (I - \gamma \mathcal{K}^{\textrm{Bess,1}}_s) =
   e^{2\alpha \pi c}  \left(G(1+ic)G(1-ic)\right)^2 (4s)^{2c^2} e^{-4cs}
    \left[1+O\left(\frac{1}{s}\right)\right], \quad  \gamma \in [0,1), \\
     & \det (I - \gamma \mathcal{K}^{\sin}_s) =
   \left(G(1+ic)G(1-ic)\right)^2 (4s)^{2c^2} e^{-4cs}
    \left[1+O\left(\frac{1}{s}\right)\right], \quad  \gamma \in [0,1), \label{eq:sine-kernel-asy}
  \end{align}
  for $\alpha$ and $\gamma$ in any compact subsets of $(-\frac{1}{2}, \infty)$ and $[0,1)$, respectively.
\end{corollary}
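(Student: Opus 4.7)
The plan is to derive both asymptotic formulas as direct specializations of Theorem \ref{mainresult}, using the kernel identifications already recorded in \eqref{eq: chg-bessel} and \eqref{eq: chg-sine}. The type-I Bessel kernel $K^{\text{Bess,1}}$ coincides with $K^{(\alpha,0)}$, while the sine kernel $K^{\sin}$ coincides with $K^{(0,0)}$. Consequently, as operators on $L^2(-s,s)$,
\[
\mathcal{K}^{\text{Bess,1}}_s = \mathcal{K}^{(\alpha,0)}_s, \qquad \mathcal{K}^{\sin}_s = \mathcal{K}^{(0,0)}_s,
\]
and the same identities hold for the deformed Fredholm determinants.

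The next step is to verify that the parameter values needed for the specialization lie in the admissible range of Theorem \ref{mainresult}. Since the asymptotic expansion \eqref{fdeterminant} holds uniformly whenever $\alpha$, $\beta$, $\gamma$ stay in compact subsets of $(-\frac{1}{2},\infty)$, $i\mathbb{R}$, $[0,1)$ respectively, and since $\{0\}$ is a compact subset of both $i\mathbb{R}$ and $(-\frac{1}{2},\infty)$, I may substitute $\beta=0$ (keeping $\alpha$ in any compact subset of $(-\frac{1}{2},\infty)$) and $\alpha=\beta=0$. The first substitution yields the Bessel kernel formula verbatim. For the second substitution, the prefactor $e^{2\alpha\pi c}$ reduces to $1$, producing the sine kernel formula \eqref{eq:sine-kernel-asy}.

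Finally, the error term $O(1/s)$ remains valid and uniform in $\gamma$ on compact subsets of $[0,1)$, because this is precisely the uniform error claim in Theorem \ref{mainresult}. There is no genuine obstacle in this corollary: the entire content is a direct appeal to the uniformity of the master asymptotic formula together with the two special-value identifications \eqref{eq: chg-bessel}--\eqref{eq: chg-sine}.
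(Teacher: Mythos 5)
Your proposal is correct and matches the paper's own reasoning: the corollary is obtained immediately by specializing Theorem \ref{mainresult} at $\beta=0$ (Bessel) and $\alpha=\beta=0$ (sine), relying on the identifications \eqref{eq: chg-bessel}--\eqref{eq: chg-sine} and the uniformity of \eqref{fdeterminant} on compact parameter sets containing these values. Nothing further is needed.
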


The above result for the sine kernel in \eqref{eq:sine-kernel-asy} agrees with those in \cite{Bas:Wid1983,Bud:Bus1995}; see also \cite[Eq. (1.33)]{Both:Its:Prok2019}.

\subsection*{Applications}

Besides the large gap asymptotics, Theorem \ref{mainresult} gives us more information about the determinantal point process characterized by the confluent hypergeometric kernel \eqref{chgkernel}. After choosing special parameter $\gamma$ in the Fredholm determinant \eqref{eq: F-det-deform} and viewing it as a moment generating function, we are able to obtain a central limit theorem for the related point process and establish a global rigidity upper bound for the maximum deviation of the points.

Let $N(s)$ be the random variable that represents the number of points in the process falling into the interval $(-s,s)$. Then, we have the following results.
\begin{corollary}
As $s \to +\infty$, we have
\begin{equation} \label{exp-var-asy}
  \mathbb{E}(N(s))= \mu(s)+O(s^{-1}\ln s),
  \qquad \mathrm{Var}(N(s))= \sigma(s)^2+\frac{1+\gamma_\mathrm{E} +2\ln2}{\pi^2}+O(s^{-1}(\ln s)^2),
 \end{equation}
 where $\gamma_{{\rm E}} = -\Gamma'(1) \approx 0.57721$ is Euler's constant, 
 \begin{equation}\label{musigma}
  \mu(s)=\frac{2s}{\pi}-\alpha, \qquad \sigma(s)^2=\frac{\ln s }{\pi ^2}.
 \end{equation}
The random variable $\frac{N(s) -\frac{2s}{\pi} + \alpha}{\sqrt{\ln s}/\pi  } $ converges in distribution to the normal law $ \mathcal{N}(0,1)$, that is,
 \begin{equation}\label{clt}
  \frac{N(s) -\frac{2s}{\pi} + \alpha}{\sqrt{\ln s}/\pi } \stackrel{d}{\longrightarrow} \mathcal{N}(0,1), \qquad \textrm{as } s \to +\infty.
 \end{equation}
 Moreover, for any $\epsilon > 0$, we have
 \begin{equation}\label{grub}
  \lim_{x \to +\infty}\mathbb{P}\left(\sup_{s>x}\left|
  \frac{N(s)-\frac{2s}{\pi}+\alpha}{\ln s}\right|\leq \frac{\sqrt{2}}{\pi}+\epsilon\right)=1.
 \end{equation}
\end{corollary}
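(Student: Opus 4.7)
The cornerstone is to reinterpret the deformed Fredholm determinant as the Laplace transform of $N(s)$: since independent thinning of a determinantal point process is again determinantal, one has
\[
\mathbb{E}\bigl[e^{-2\pi c N(s)}\bigr] = \det\bigl(I - \gamma\mathcal{K}^{(\alpha,\beta)}_s\bigr), \qquad \gamma = 1 - e^{-2\pi c},
\]
so Theorem \ref{mainresult} supplies asymptotics of the log Laplace transform of $N(s)$, uniformly for $c$ in compact subsets of $[0,\infty)$. The standard series $\ln G(1+z) = \tfrac{1}{2}z\ln(2\pi) - \tfrac{1}{2}z - \tfrac{1+\gamma_{\mathrm{E}}}{2}z^2 + O(z^3)$ gives $2[\ln G(1+ic) + \ln G(1-ic)] = 2(1+\gamma_{\mathrm{E}})c^2 + O(c^4)$, and substitution into \eqref{fdeterminant} yields the cumulant-type expansion
\[
\ln\mathbb{E}\bigl[e^{-2\pi c N(s)}\bigr] = -4cs + 2\pi\alpha c + 2c^2\bigl[\ln(4s) + 1 + \gamma_{\mathrm{E}}\bigr] + O(c^4) + O(1/s).
\]

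Matching the coefficients of $c$ and $c^2$ with those of the cumulant generating function of $N(s)$ immediately produces the leading terms $\mu(s) = 2s/\pi - \alpha$ and $\sigma(s)^2 + (1 + \gamma_{\mathrm{E}} + 2\ln 2)/\pi^2$ in \eqref{exp-var-asy}. The quantitative error rates $O(s^{-1}\ln s)$ and $O(s^{-1}(\ln s)^2)$ I would obtain by noting that the Riemann--Hilbert analysis underlying Theorem \ref{mainresult} is robust under small complex perturbations of $\gamma$, so the remainder is holomorphic in $c$ near the origin; Cauchy's integral formula on a circle in the $c$-plane, with a radius tuned to balance the $c^2\ln s$ Taylor coefficient against the $O(1/s)$ pointwise bound, converts this into the stated derivative estimates at $c=0$. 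For the central limit theorem \eqref{clt}, the substitution $c = t/(2\pi\sigma(s)) \to 0$ combined with cancellation of the centering collapses the expansion to
\[
\ln\mathbb{E}\bigl[e^{-t(N(s) - \mu(s))/\sigma(s)}\bigr] = \tfrac{1}{2}t^2 + o(1),
\]
the log Laplace transform of $\mathcal{N}(0,1)$; analytic continuation in $c$ to purely imaginary $t$ delivers convergence of characteristic functions, and L\'evy's continuity theorem concludes.

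For the global rigidity bound \eqref{grub}, the exponential Chebyshev inequality yields, on the lower-tail side,
\[
\mathbb{P}\bigl(\mu(s) - N(s) > \lambda\ln s\bigr) \leq e^{-2\pi c\lambda\ln s + 2\pi c\mu(s)}\det\bigl(I - \gamma\mathcal{K}^{(\alpha,\beta)}_s\bigr),
\]
and substituting the expansion above collapses the exponent to $-2\pi c\lambda\ln s + 2c^2\ln s + O(c^2) + O(1/s)$; optimising at $c = \pi\lambda/2$ gives a tail bound of order $s^{-\pi^2\lambda^2/2}$. The upper tail requires the MGF at a positive argument, equivalently $\gamma < 0$, which I would obtain by analytically continuing the RH construction through a small complex neighborhood of $\gamma = 0$; the same exponent results. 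With $\lambda = \sqrt{2}/\pi + \epsilon$ the exponent exceeds $1$, so the probabilities are summable along a linear sequence $s_n = n$; Borel--Cantelli combined with interpolation across $s \in (s_n, s_{n+1})$ (using monotonicity of $N(\cdot)$ and $|\mu(s_{n+1}) - \mu(s_n)| = 2/\pi$) yields \eqref{grub}. The main technical obstacle is justifying the analytic continuation of the determinant asymptotics through $\gamma = 0$ needed for the upper tail, since Theorem \ref{mainresult} is stated only for $\gamma \in [0,1)$; the rest amounts to careful bookkeeping of the asymptotic expansion.
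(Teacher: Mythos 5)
Your treatment of the mean, variance and the CLT is essentially the paper's own argument: both start from the generating-function identity \eqref{eintermsoffdeterminant}, insert the uniform asymptotics \eqref{fdeterminant} with $c=\nu$, expand the Barnes G-factor, and match the coefficients of $\nu$ and $\nu^2$; your expansion reproduces the constant $\frac{1+\gamma_{\mathrm{E}}+2\ln 2}{\pi^2}$ correctly. One caveat there: you propose to get the error rates $O(s^{-1}\ln s)$ and $O(s^{-1}(\ln s)^2)$ via Cauchy's formula on a circle in the $c$-plane, which presupposes that the asymptotics of Theorem \ref{mainresult} hold for complex $\gamma$ near $0$; the theorem is stated only for $\gamma\in[0,1)$, and the paper instead controls real $\gamma$-derivatives of the Riemann--Hilbert solution directly (see \eqref{rstoinfinity-diff}), which is where the extra $\ln s$ factors come from. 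That is a fixable point, but as written it rests on an extension not established in the paper.

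The genuine gap is in the rigidity bound \eqref{grub}, where you depart from the paper: the paper does not run a Chernoff/Borel--Cantelli argument at all, but folds the process to $X=\{|\xi_k|\}$ so that it has a smallest particle, checks that $\mu,\sigma$ are increasing and differentiable with $\sqrt{\sigma^2(\mu^{-1}(s))}=\frac{\sqrt{\ln s}}{\pi}(1+o(1))$, and then applies \cite[Theorem 1.2]{Cha:Cla2020}. In your self-contained route, the lower tail and the interpolation/Borel--Cantelli step are fine, since the Chebyshev bound there only uses $\gamma\in[0,1)$ where Theorem \ref{mainresult} is uniform. The upper tail, however, needs asymptotics of $\mathbb{E}\bigl[e^{+2\pi cN(s)}\bigr]=\det\bigl(I-(1-e^{2\pi c})\mathcal{K}^{(\alpha,\beta)}_s\bigr)$ with $\gamma=1-e^{2\pi c}<0$, and crucially not merely for $\gamma$ in a small neighborhood of $0$: to reach the sharp constant $\frac{\sqrt2}{\pi}$ the optimization forces $c\approx\pi\lambda/2\approx\frac{\sqrt 2}{2}$, i.e. $\gamma\approx 1-e^{\sqrt2\pi}\approx -84$, far outside any perturbative continuation through $\gamma=0$. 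If you only continue to $|c|\le\delta$ small, the tail bound is roughly $s^{-2\pi\delta\lambda}$, which gives a summable sequence only for a much larger (non-sharp) constant in \eqref{grub}. So the step you flag as "the main technical obstacle" is not a routine analytic continuation but an order-one extension of the large-gap asymptotics to negative $\gamma$ that neither you nor this paper proves; the paper's proof sidesteps exactly this by delegating the two-sided sup bound to the cited general theorem after the folding trick. As it stands, your proposal does not yield \eqref{grub} with the stated constant.
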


\begin{proof}
First, we recall the well-known joint probability generating function of the occupancy number $N(s)$ below:
 \begin{equation} \label{eintermsoffdeterminant}
  \mathbb{E}(e^{-2 \pi \nu N(s)})=\sum_{k=0}^{\infty}\mathbb{P}(N(s)=k)e^{-2 \pi \nu N(s)} =\det (I-(1-e^{-2 \pi \nu})\mathcal{K}^{(\alpha,\beta)}_s), \qquad \nu \geq 0,
 \end{equation}
 cf. \cite{johan2006,Sosh2000-2}. Based on the above identity, on the one hand, we have
 \begin{equation}\label{expanssionfore}
  \mathbb{E}(e^{-2 \pi \nu N(s)})=1-2\pi  \mathbb{E}(N(s)) \nu + 2\pi^2  \mathbb{E}(N(s)^2) \nu^2+O(\nu^3), \qquad   \nu \to 0.
 \end{equation}
 On the other hand, it follows by setting $c= \nu$ in \eqref{fdeterminant} that
 \begin{equation}\label{expanssionfore1}
    \det (I-(1-e^{-2 \pi \nu})\mathcal{K}^{(\alpha,\beta)}_s)
    = 2^{4 \nu ^2}\left(G(1+i\nu)G(1-i\nu)\right)^2e^{-2\pi \mu(s)\nu +2 \pi^2 \sigma(s)^2 \nu^2}(1+O(s^{-1})), \\
 \end{equation}
 as $s \to + \infty$, where $\mu(s)$ and $\sigma(s)$ are given in \eqref{musigma}. As
 \begin{equation}
G(1+\nu)=1+\frac{\ln(2\pi)-1}{2} \nu+\left(\frac{(\ln(2\pi)-1)^2}{8}-\frac{1+\gamma_{\textrm{E}}}{2}\right) \nu^2+\Boh(\nu^3),\qquad \nu\to 0,
\end{equation}
then we have the following expansion for the leading term in \eqref{expanssionfore1}:
\begin{equation} \label{expanssionfore1-exp}
  \begin{aligned}
    & 2^{4 \nu ^2}\left(G(1+i\nu)G(1-i\nu)\right)^2e^{-2\pi \mu(s)\nu +2 \pi^2 \sigma(s)^2 \nu^2}  \\
     & \quad = 1- 2 \pi \mu(s)\nu +2\pi^2\left(\frac{2\ln2}{\pi^2}+\frac{1+\gamma_\mathrm{E}}{\pi^2}+\mu(s)^2+\sigma(s)^2\right)\nu^2+O(\nu^3) , \quad   \nu \to 0.
  \end{aligned}
 \end{equation}
One can see from \eqref{eintermsoffdeterminant} and \eqref{expanssionfore} that 
\begin{align*}
\mathbb{E}(N(s)) &= -\frac{1}{2 \pi} \frac{\partial}{\partial \nu} \det\left(I-(1-e^{-2\pi \nu})  \mathcal{K}^{(\alpha,\beta)}_s \right) \Big|_{\nu = 0}, \\
{\rm Var}(N(s)) &= \frac{1}{4 \pi^2} \left[ \frac{\partial^2}{\partial \nu^2} \det\left(I-(1-e^{-2\pi \nu}) \mathcal{K}^{(\alpha,\beta)}_s  \right) - \Big( \frac{\partial}{\partial \nu} \det\left(I-(1-e^{-2\pi \nu}) \mathcal{K}^{(\alpha,\beta)}_s \right) \Big)^2 \right] _{\nu = 0}.
\end{align*}
Then, we obtain asymptotics of $\mathbb{E}(N(s))$ and ${\rm Var}(N(s)) $ in \eqref{exp-var-asy} by substituting \eqref{expanssionfore1} and \eqref{expanssionfore1-exp} into the above formulas, where the additional $\ln s$ factor in the error terms comes from the derivatives with respect to $\nu$; see also \eqref{rstoinfinity-diff} below.

Next, it follows from \eqref{eintermsoffdeterminant} and \eqref{expanssionfore1} that
 \begin{equation}
  \mathbb{E}\left(e^{\frac{N(s)-\mu(s)}{\sqrt{\sigma(s)^2}}\cdot t}\right)\to e^{\frac{t^2}{2}}, \qquad s \to +\infty,
 \end{equation}
 which gives us the  central limit theorem in \eqref{clt}.

The proof of \eqref{grub} is based on \cite[Theorem 1.2]{Cha:Cla2020}.  Let $Y=\{\xi_k \}_{k \geq 1}$ be a random point process associated with the confluent hypergeometric kernel \eqref{chgkernel}. Note that the original theorem in \cite{Cha:Cla2020} only applies to point processes which have almost surely a smallest particle. As $Y$ does not have a smallest particle almost surely, we instead consider
\begin{equation}
  X=\{|\xi_k|;\xi_k \in Y\};
 \end{equation}
see a similar treatment for the maximum deviation of the Pearcey process in \cite{Cha2021}. Let $\tilde{N}(s)$ be the number of particles in $X$ which are located in the interval $[0,s]$. Then, we have
\begin{equation}
  \tilde{N}(s)=\sharp(X \cap [0,s])=\sharp(Y \cap [-s,s])=N(s).
\end{equation}
Obviously, the function $\mu(s)$ and $\sigma(s)$ given in \eqref{musigma} are strictly increasing and differentiable with respect to $s$. Moreover, we have
\begin{equation}
  \sqrt{\sigma^2(\mu^{-1}(s))}
   =\frac{\sqrt{\ln s}}{\pi}\left(1+o(1)\right), \qquad \mathrm{as} \ s \to +\infty.
\end{equation}
Therefore, all conditions in \cite[Theorem 1.2]{Cha:Cla2020} are justified, and the upper bound \eqref{grub} follows accordingly.

This finishes the proof of the corollary.
\end{proof}

One may also expect a global rigidity lower bound for the point similar to \eqref{grub}, namely
\begin{equation}
    \lim_{x \to +\infty}\mathbb{P}\left(\sup_{s>x}\left|
    \frac{N(s)-\frac{2s}{\pi}+\alpha}{\ln s}\right|\geq \frac{\sqrt{2}}{\pi}-\epsilon\right)=1.
   \end{equation}
However, this is a much more difficult problem. For example, see the recent progress in \cite{Cla:Fahs:Lam:Webb2021}, where the convergence of the counting function to a  Gaussian multiplicative chaos measure is used to establish the lower bounds.

\subsection{The related coupled Painlev\'e V system and its asymptotics}

As we have mentioned before, the explicit evaluation of the constant term in \eqref{fdeterminant} relies on the integral representation \eqref{eq:F-TW-formula}. For $0 \leq \gamma < 1$, we will show this representation still holds in Lemma \ref{lem: Integral-H} below. The Hamiltonian $H$ in the integrand is related to a coupled Painlev\'e V system in dimension four. In case readers are interested in the related integrable system and properties of special solutions, we list the corresponding results below.

The Hamiltonian  $H(\tau; \alpha, \beta)=H(u_1,u_2,v_1,v_2,\tau;\alpha,\beta, \gamma)$ in  \eqref{eq:F-TW-formula} is given by
 \begin{equation}\label{sh}
 \begin{split}
  H(\tau; \alpha, \beta)=& \ \frac{1}{2}\left( H_V(u_1,v_1,\frac{\tau}{2},\alpha,\beta)- H_V(u_2,v_2,-\frac{\tau}{2},\alpha,\beta) \right) \\
 &  +\frac{1}{\tau} u_1u_2(v_1+v_2)(v_1-1)(v_2-1),
  \end{split}
 \end{equation}
 where $H_V(u,v,\tau,\alpha,\beta)$ is the Hamiltonian for the Painlev\'e V equation as follows
 \begin{equation}\label{hv}
  \tau H_V(u,v,\tau,\alpha,\beta)=u^2v(v-1)^2-\tau uv-\alpha u(v^2-1)-\beta u(v-1)^2.
 \end{equation}
 The Hamiltonian system
 \begin{equation}\label{dh}
  \frac{dv_k}{d\tau}=\frac{\partial H}{\partial u_k},\qquad
  \frac{du_k}{d\tau}=-\frac{\partial H}{\partial v_k}, \qquad k=1,2,
 \end{equation}
gives us the  following coupled Painlev\'e V system of dimension four (cf. \cite{Kawa2018})
\begin{small}
 \begin{equation}\label{p5system}
  \left\{
   \begin{aligned}
     & \tau \frac{d u_1}{d\tau}=\frac{\tau}{2}u_1-u_1^2(v_1-1)(3v_1-1)-u_1u_2(v_2-1)(2v_1+v_2-1)+2(\alpha+\beta)u_1v_1-2\beta u_1, \\
     & \tau\frac{d u_2}{d\tau}=-\frac{\tau}{2}u_2-u_2^2(v_2-1)(3v_2-1)-u_1u_2(v_1-1)(v_1+2v_2-1)+2(\alpha+\beta)u_2v_2-2\beta u_2, \\
     & \tau\frac{d v_1}{d\tau}=-\frac{\tau}{2}v_1+2u_1v_1(v_1-1)^2+u_2(v_1+v_2)(v_1-1)(v_2-1)-\alpha(v_1^2-1)-\beta(v_1-1)^2,      \\
     & \tau\frac{d v_2}{d\tau}=\frac{\tau}{2}v_2+2u_2v_2(v_2-1)^2+u_1(v_1+v_2)(v_1-1)(v_2-1)-\alpha(v_2^2-1)-\beta(v_2-1)^2.
   \end{aligned}
  \right.
 \end{equation}
\end{small}
Then, we have the following asymptotics for the Hamiltonian $H(\tau; \alpha, \beta)$ and the solutions $u_{k}(\tau)$, $v_{k}(\tau)$, $k=1,2$, of the coupled Painlev\'e V system.

\begin{theorem}\label{huvasymptotics}
With $\alpha >-\frac{1}{2}$, $\gamma \in [0,1)$, $\beta \in i \mathbb{R}$ and the constant $c$ given in \eqref{ic}, the Hamiltonian $H(\tau)=H(\tau; \alpha, \beta)$ in \eqref{sh} is pole-free for $\tau \in -i(0,+\infty)$ and satisfies the following asymptotic behavior
  \begin{equation}\label{hinfinity}
    H(\tau) = -ic+\frac{2c^2}{\tau}+O\left(\tau^{-2}\right),
                \qquad i\tau \to +\infty.
  \end{equation}
  Moreover, there exist solutions to the coupled Painlev\'e V system \eqref{p5system} with the following asymptotic behaviors as $i\tau \to +\infty$:
  \begin{eqnarray}
    u_1(\tau) & = & ic\frac{\Gamma(1+\alpha+\beta)\Gamma(1-ic)}{\Gamma(1+\alpha-\beta)\Gamma(1+ic)} 2^{2\beta}e^{\pi i(\alpha-\beta)}e^{-\pi  c}e^{\frac{\tau}{2}} \tau^{2(ic-\beta)}
    \left(1+O(\tau^{-1})\right),\label{u1infinity}    \\
    v_1(\tau) & = & \frac{\Gamma(1+\alpha-\beta)\Gamma(1+ic)}{\Gamma(1+\alpha+\beta) \Gamma(1-ic)}2^{-2\beta}e^{-\pi i(\alpha-\beta)}e^{\pi  c}
    e^{-\frac{\tau}{2}} \tau^{-2(ic-\beta)}\left(1+O(\tau^{-1})\right),\label{v1infinity}     \\
    u_2(\tau) & = & -ic\frac{\Gamma(1+\alpha+\beta)\Gamma(1+ic)}{\Gamma(1+\alpha-\beta)\Gamma(1-ic)}2^{2\beta}e^{-\pi i(\alpha+\beta)}e^{\pi  c} e^{-\frac{\tau}{2}} \tau^{-2(ic+\beta)}\left(1+O(\tau^{-1})\right),\label{u2infinity}     \\
    v_2(\tau) & = & \frac{\Gamma(1+\alpha-\beta)\Gamma(1-ic)}{\Gamma(1+\alpha+\beta)\Gamma(1+ic)}2^{-2\beta}e^{\pi i(\alpha+\beta)}e^{-\pi  c}
    e^{\frac{\tau}{2}} \tau^{2(ic+\beta)}\left(1+O(\tau^{-1})\right).\label{v2infinity}
  \end{eqnarray}
\end{theorem}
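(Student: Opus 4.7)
The plan is to deduce Theorem \ref{huvasymptotics} from a nonlinear steepest-descent analysis of the model Riemann-Hilbert problem $\Psi(z;\tau)$ whose isomonodromic deformation in $\tau$ produces the coupled Painlev\'e V system \eqref{p5system}; this is the modified version (designed to be compatible with $\gamma\in[0,1)$) of the model RH problem of Xu--Zhao mentioned at the end of the introduction. In this setup the Hamiltonian $H(\tau;\alpha,\beta)$ and the variables $u_k(\tau),v_k(\tau)$ appear as specific entries or Taylor coefficients of $\Psi(z;\tau)$ in its expansions at $z=0$ and $z=\infty$ (via the standard isomonodromic recipe expressing $H$ as $\mathrm{tr}(\sigma_3\Psi_1)$-type objects). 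Thus all the claimed asymptotics will be obtained simultaneously once the leading behavior of $\Psi$ as $i\tau\to+\infty$ is known.

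Concretely I would carry out the familiar sequence of transformations $\Psi\mapsto T\mapsto S\mapsto R$: (i) rescale $z=\tau\zeta$ and conjugate by $e^{\pm\tau z/4}$ to bring the jump matrices on the infinite rays into a form amenable to steepest descent; (ii) open lenses to turn the oscillatory jumps into exponentially small ones; (iii) construct a global parametrix $P^{(\infty)}$ away from $z=0$, which will be piecewise constant up to a scalar Szeg\H{o}-type function $D(z)$ carrying the exponents $\pm ic$, $\pm\beta$ and $\alpha$; (iv) build a local parametrix $P^{(0)}$ near $z=0$ out of the confluent hypergeometric parametrix with parameters involving $\alpha\pm ic$ and $\beta$, in the spirit of the one used in Deift--Krasovsky--Vasilevska and in \cite{Xu:Zhao2020}. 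The matching $P^{(\infty)}(P^{(0)})^{-1}=I+O(\tau^{-1})$ on a small circle around the origin reduces the problem to a small-norm RH problem $R=I+O(\tau^{-1})$. Tracing $R$ back through the transformations gives the full $\tau\to-i\infty$ expansion of $\Psi$; substituting into the isomonodromic formulas yields \eqref{hinfinity} for the Hamiltonian (the $-ic$ and $2c^2$ come directly from the monodromy exponent $\pm ic$ of the scalar function $D$ absorbing the thinning parameter) and \eqref{u1infinity}--\eqref{v2infinity} for the solutions.

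The pole-freeness on $-i(0,+\infty)$ will be obtained as a by-product: on this ray the jump matrices of $\Psi$ enjoy the symmetries and positivity properties (from $\alpha>-\tfrac12$, $\beta\in i\R$, $\gamma\in[0,1)$) needed to run the standard vanishing-lemma argument, so $\Psi(\cdot;\tau)$ exists uniquely for every such $\tau$, and any rational combination of its Taylor coefficients (in particular $H$ and the $u_k,v_k$) is regular there. The main technical obstacle will not be any of the steepest-descent steps themselves, but the bookkeeping of all the multiplicative constants appearing in \eqref{u1infinity}--\eqref{v2infinity}: the ratios $\Gamma(1+\alpha\pm\beta)/\Gamma(1+\alpha\mp\beta)$ and $\Gamma(1\pm ic)/\Gamma(1\mp ic)$ arise from the connection formulas of the confluent hypergeometric parametrix at $0$, the factors $2^{\pm 2\beta}$ from the rescaling $z\mapsto\tau\zeta$ combined with the branch structure of $D$, and the phase factors $e^{\pm\pi i(\alpha\pm\beta)}e^{\mp\pi c}$ from the choice of branches of $\chi_\beta$ and of $z^{\pm ic}$ across the negative real axis. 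Reproducing these prefactors exactly, in a way consistent across the four solutions $u_1,v_1,u_2,v_2$, is where the delicate work lies.
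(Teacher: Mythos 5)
Your overall framework (vanishing lemma for existence and pole-freeness, then Deift--Zhou steepest descent on the model RH problem and extraction of $H,u_k,v_k$ from the isomonodromic data) is the right one, and it matches the paper's strategy at that level. However, the core of your steepest-descent construction is set up with the wrong geometry for the regime $i\tau\to+\infty$, and as described it would fail. The model problem has three \emph{fixed} singular points $z=0,\pm1$: the power-type singularity $z^{\alpha}$ sits at $0$, while the $\gamma$-dependent (logarithmic) singularities and the constant jump $\mathrm{diag}\bigl((1-\gamma)^{-1},1-\gamma\bigr)$ live on $(-1,1)$ with endpoints $\pm1$. In the large-$\tau$ limit these points stay separated, so after conjugating out $e^{\frac{\tau z}{4}\sigma_3}$ and opening lenses one needs a diagonal global parametrix $\bigl(\frac{z+1}{z-1}\bigr)^{ic\sigma_3}z^{-\beta\sigma_3}$ (the exponents $\pm ic$ attach to $z=\pm1$, not to the origin) and \emph{three} local parametrices: confluent hypergeometric with parameters $(\alpha,\beta)$ at $z=0$ and with parameters $(0,\pm ic)$ at $z=\pm1$. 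Your proposal instead rescales $z=\tau\zeta$ (which is the appropriate move for the $i\tau\to0^{+}$ analysis, where $[-1,1]$ shrinks) and builds a single local parametrix at the origin with parameters ``$\alpha\pm ic$'', matching on one small circle. Under that rescaling the three singular points coalesce and the jumps inside your circle are not those of any single confluent hypergeometric model, so the matching condition cannot be achieved; without rescaling, a circle around $0$ leaves the non-decaying, $\pm ic$-type behavior at $z=\pm1$ unapproximated, and the error problem is not small-norm.

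This is not merely bookkeeping: the quantities you must control to get \eqref{u1infinity}--\eqref{v2infinity} are the residue matrices $A_1(\tau)$, $A_2(\tau)$ of the Lax matrix at $z=1$ and $z=-1$ (since $u_kv_k=-(A_k)_{11}$ and $v_k$ involves $(A_k)_{21}/(A_k)_{11}$ together with the auxiliary function $y(\tau)$ read off from $\Psi_0^{(0)}$ at $z=0$). The factors $e^{\pm\tau/2}\tau^{\pm2ic}$, $e^{\mp\pi c}$ and the ratios $\Gamma(1\pm ic)/\Gamma(1\mp ic)$ come precisely from the analytic prefactors $E^{(\pm1)}(z)$ and the $1/\zeta$ term of the confluent hypergeometric parametrices placed at $z=\pm1$, not from connection formulas at the origin. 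Your plan never produces expansions of $\Psi$ at $z=\pm1$, so it has no mechanism to extract $u_1,v_1,u_2,v_2$ separately, even granting the (incorrect) single-parametrix matching. The pole-freeness argument via the vanishing lemma and the identification of $H$ from the $z^{-1}$ coefficient at infinity are fine, but the central construction needs to be redone with the unscaled variable and local parametrices at all of $0,1,-1$.
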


\begin{theorem}\label{huvasymptotics1}
  The Hamiltonian and the solutions to the coupled Painlev\'e V system also have the following asymptotic behaviors as $i\tau \to 0^+$:
    \begin{eqnarray}
      H(\tau) & = & \frac{\gamma \,  \Gamma(1+\alpha-\beta)\Gamma(1+\alpha+\beta) \cos (\beta \pi) }{i \pi   2^{2\alpha+1}(2\alpha+1)(\Gamma(1+2\alpha))^2}|\tau|^{2 \alpha}+O\left(\tau^{2\alpha+1}\right),\label{hassto0}   \\
      u_1(\tau) & = & -\frac{\gamma \Gamma(1+\alpha-\beta)\Gamma(1+\alpha+\beta)}{ i \pi 2^{2\alpha+1} (\Gamma(1+2\alpha))^2}e^{-\pi i \beta}    |\tau|^{2\alpha}\left(1+O(\tau^{2\alpha+1})+O(\tau)\right),\label{u1assto0}   \\
      v_1(\tau) & = & 1+O(\tau^{2\alpha+1})+O(\tau),\label{v1assto0}\\
      u_2(\tau) & = & \frac{\gamma \Gamma(1+\alpha-\beta)\Gamma(1+\alpha+\beta)}{i \pi  2^{2\alpha+1} (\Gamma(1+2\alpha))^2}e^{\pi i \beta}
      |\tau|^{2\alpha}\left(1+O(\tau^{2\alpha+1})+O(\tau)\right),\label{u2assto0}\\
      v_2(\tau) & = & 1+O(\tau^{2\alpha+1})+O(\tau).\label{v2assto0}
    \end{eqnarray}
\end{theorem}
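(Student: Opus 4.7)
The plan is to exploit the fact that $i\tau \to 0^+$ corresponds, via the relation $\tau=-4is$, to the vanishing-gap regime $s\to 0^+$. In this regime the model RH problem for $\Psi(\zeta;\tau)$ that generates the coupled Painlev\'e V system (the $\gamma\in[0,1)$ modification of the problem in \cite{Xu:Zhao2020} set up earlier in the paper) degenerates in a controlled way to a closed-form limit problem, and simultaneously the Fredholm determinant admits an elementary small-$s$ Taylor expansion. Combining these two viewpoints gives the claimed asymptotics.

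For the Hamiltonian $H(\tau)$ the fastest route is via the integral representation \eqref{eq:F-TW-formula} (valid for $\gamma\in[0,1)$ by the forthcoming Lemma \ref{lem: Integral-H}). I would expand
\[
\log\det\bigl(I-\gamma\mathcal{K}^{(\alpha,\beta)}_s\bigr) = -\gamma\int_{-s}^s K^{(\alpha,\beta)}(x,x)\,dx + O\bigl(s^{4\alpha+2}\bigr), \qquad s\to 0^+,
\]
and compute the diagonal of the kernel from \eqref{chgkernel}. Using $\mathbb{A}(x)=\chi_\beta(x)^{1/2}|2x|^\alpha(1+O(x))$, the singular $|x|^{\alpha-1}$ pieces in the Wronskian $\mathbb{A}'\mathbb{B}-\mathbb{A}\mathbb{B}'$ cancel and a short calculation gives
\[
K^{(\alpha,\beta)}(x,x)=\frac{2^{2\alpha}\,\chi_\beta(x)\,\Gamma(1+\alpha+\beta)\Gamma(1+\alpha-\beta)}{\pi(2\alpha+1)\,\Gamma(1+2\alpha)^2}\,|x|^{2\alpha}\bigl(1+O(x)\bigr)
\]
near the origin. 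Using $\int_{-s}^s\chi_\beta(x)|x|^{2\alpha}\,dx=\tfrac{2\cos(\pi\beta)}{2\alpha+1}s^{2\alpha+1}$ and differentiating via $H(\tau)=(-4i)^{-1}\tfrac{d}{ds}\log\det\big|_{s=i\tau/4}$ yields \eqref{hassto0}.

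For the canonical variables $u_k(\tau)$, $v_k(\tau)$ I would perform a small-$\tau$ RH analysis of the model problem $\Psi(\zeta;\tau)$. As $\tau\to 0$ the two gap endpoints $\pm\tau/2$ collapse to the origin and the limit problem is solvable in closed form by a confluent hypergeometric parametrix $\Psi^{(0)}(\zeta)$ whose jump data encode $\alpha$, $\beta$ and the thinning parameter $\gamma$. Writing $\Psi(\zeta;\tau)=\Psi^{(0)}(\zeta)\bigl(I+\Delta(\zeta;\tau)\bigr)$, the remainder $\Delta$ satisfies a small-norm RH problem with jumps of size $O(|\tau|^{2\alpha+1})+O(|\tau|)$, the first contribution coming from the region near the collapsing endpoints and the second from the exponential factors $e^{\pm\tau\zeta/2}$ in the jumps away from the origin. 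A standard small-norm argument then gives $\Delta$ of the same order uniformly in $\zeta$, and reading off the appropriate Laurent coefficients of $\Psi$ at its singular point produces \eqref{u1assto0}--\eqref{v2assto0}. The Gamma-function ratios, the powers of~$2$ and the exponentials $e^{\pm\pi i\beta}$ all come from the explicit normalization constants of $\Psi^{(0)}$.

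The main obstacle will be the careful bookkeeping required to match $u_k,v_k$ --- defined as specific entries of the expansion of $\Psi$ --- with the explicit constants in $\Psi^{(0)}$, and in particular to extract the overall factor $\gamma$ (rather than $c$) in the leading order of $u_k$; this comes from an explicit cancellation among the Gamma-function prefactors of $\Psi^{(0)}$. A useful internal consistency check is provided by the Hamiltonian formula \eqref{sh}: since $v_k=1+o(1)$ the cross-term $u_1u_2(v_1+v_2)(v_1-1)(v_2-1)/\tau$ is subleading, so substituting \eqref{u1assto0}--\eqref{v2assto0} into \eqref{sh} should reproduce \eqref{hassto0} with the same prefactor, confirming the consistency of the two computations.
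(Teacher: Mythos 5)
Your plan for $u_k,v_k$ is essentially the paper's own argument: after rescaling so that the interval collapses, one builds a global parametrix from confluent hypergeometric (Kummer) functions, a local parametrix near the origin whose $\gamma$-dependence enters through a Cauchy-type integral $k(\zeta)$ supported on the collapsed interval, solves a small-norm problem with error $O(\tau^{2\alpha+1})$, and reads off $u_k,v_k$ from the connection data of $\Psi$ at $z=\pm1$ via $y(\tau)$, $d(\tau)$ and \eqref{u1v1}. One bookkeeping remark: in the paper the extra $O(\tau)$ in \eqref{u1assto0}--\eqref{v2assto0} comes from Taylor-expanding the analytic model $X^{(\infty,0)}$ at the collapsing endpoints $\pm|\tau|/2$ (see \eqref{pinfinity0aszto0}), not from exponential factors in far-away jumps, but this does not affect the viability of the plan.

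The genuine gap is in your derivation of \eqref{hassto0}. You propose to get it from the integral representation \eqref{eq:F-TW-formula}, citing the ``forthcoming Lemma \ref{lem: Integral-H}'', but in this paper that lemma is proved \emph{using} Theorem \ref{huvasymptotics1}: the vanishing of the constant $\mathcal{L}$ in \eqref{constantl} is deduced from \eqref{hassto0} together with the small-$\tau$ estimates \eqref{psi01assto0}--\eqref{psi0-1assto0}, all of which come out of the same small-$\tau$ RH analysis. So your route is circular as stated. More fundamentally, $H(\tau)$ in the theorem is defined through the model RH problem via \eqref{h}, and the identification of this $H$ with $-\tfrac{1}{4i}\tfrac{d}{ds}\log\det(I-\gamma\mathcal{K}^{(\alpha,\beta)}_s)$ at $\tau=-4is$ is exactly what Lemma \ref{lem: Integral-H} (via the Toeplitz analysis and $\mathcal{L}=0$) establishes; you may not assume it here without an independent proof (e.g. an IIKS-type differential identity), which your plan does not supply. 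Your kernel-diagonal computation does produce the correct constant, so the fix is cheap and is what the paper does: the same small-norm analysis already gives the $1/z$ coefficient of the rescaled solution, and \eqref{h} combined with \eqref{pinfinity1} and \eqref{rassto0} yields \eqref{hassto0} directly. This also exposes the weakness of your proposed consistency check: since $v_k-1$ is only known to $O(\tau)+O(\tau^{2\alpha+1})$, terms such as $\alpha u_k(v_k^2-1)/\tau$ in \eqref{sh} are not visibly subleading compared with $|\tau|^{2\alpha}$ (e.g. for $\alpha>0$), so substituting \eqref{u1assto0}--\eqref{v2assto0} into \eqref{sh} does not by itself recover the leading coefficient of $H$.
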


\begin{remark}
Comparing the above asymptotics results with those in \cite[Theorem 1]{Xu:Zhao2020} for the case $\gamma = 1$, one can see that the asymptotics as $i\tau \to 0^+$ are similar. Indeed, by taking $\gamma = 1$ in the above formulas  for  $i\tau \to 0^+$, one retrieve the asymptotics in \cite[Theorem 1]{Xu:Zhao2020}. However, the asymptotics as $i\tau \to +\infty$ are significantly different. One can observe similar phenomena for the classical Painlev\'e equations as well; for example, see \cite[Sec. 32.11]{NIST}. From a technical perspective, the proof of Theorem \ref{huvasymptotics} (as $i\tau \to +\infty$) is much harder than that of Theorem \ref{huvasymptotics1} (as $i\tau \to 0^+$).
\end{remark}

The rest of the paper is arranged as follows. With the inspiration of \cite{Dei:Kra:Vas2011}, we first establish a relation between the deformed Fredholm determinant and a Toeplitz determinant in Section \ref{sec: toep}. A RH problem related to the Toeplitz determinant is also constructed. Next, we set up a model RH problem for $\Psi(z; \tau)$ and derive several important differential identities in Section \ref{sec: model-RHP}. Then, we perform a steepest descent analysis for the RH problem associated with the Toeplitz determinant in Section \ref{analysisofy}, where the model problem for $\Psi(z; \tau)$ is adopted in the local parametrix construction. Sections \ref{Sec: large-s} and \ref{Sec: small-s} are devoted to the steepest descent analysis for the RH problem for $\Psi(z; \tau)$ as $i\tau \to +\infty$ and $i\tau \to 0^+$, respectively. The proofs of Theorems \ref{huvasymptotics} and \ref{huvasymptotics1} are also provided at the end of these two sections. Finally, in Section \ref{sec: main-proof}, we establish the integral representation for the deformed  Fredholm determinant and prove the large gap asymptotics in Theorem \ref{mainresult}.

\section{Relation with a Toeplitz determinant} \label{sec: toep}

In this section, we first establish a relation between the Fredholm determinant \eqref{eq: F-det-deform} and a Toeplitz determinant using similar ideas as in \cite{Cha2019,Dei:Kra:Vas2011,Xu:Zhao2020}. Then, we formulate a RH problem for the corresponding orthogonal polynomials on the unit circle, which is related to the Toeplitz determinant via a differential identity.

%
%

\subsection{The Fredholm determinant and the Toeplitz determinant}

Let $C$ be the unit circle and $C_t$ be an arc:
\begin{equation}
  C_t=\{e^{i \theta}; t \leq \theta \leq 2\pi - t \}, \qquad t \in (0, \pi).
\end{equation}
Consider a circular unitary ensemble with the following joint eigenvalue probability density
\begin{equation}
    p_n(\theta_1, \cdots, \theta_n) = \frac{1}{Z_n'} \prod_{j=1}^n |e^{i\theta_j}-1|^{2\alpha}e^{i \beta(\theta_j - \pi)}\prod_{j<k}(e^{i\theta_j}-e^{i\theta_k})^2, \qquad \theta_1, \cdots, \theta_n \in (0,2\pi),
\end{equation}
with $\alpha > -\frac{1}{2}$ and $\beta \in i \mathbb{R}$. Denote $A_k$ as the event that there are exactly $k$ eigenvalues in $C\setminus C_t$. Next, let us remove each eigenvalue independently with probability $1-\gamma$. It is easy to see that
\begin{multline} \label{eq: thinned-prob-1}
      \mathbb{P}(\textrm{all $n$ eigenvalues  in $C_t$ after thinning})
    = \sum_{k=0}^{n} (1-\gamma)^k \mathbb{P}(A_k)  \\
   \qquad \qquad \qquad = \sum_{k=0}^n (1-\gamma)^k
  \int_{A_k}
  \frac{1}{Z_n'} \prod_{j=1}^n |e^{i\theta_j}-1|^{2\alpha}e^{i \beta(\theta_j - \pi)}\prod_{j<k}(e^{i\theta_j}-e^{i\theta_k})^2 \prod_{j=1}^n d\theta_j.
\end{multline}
Define
\begin{equation} \label{eq: f-def}
 f(z;t)=\left\{
 \begin{aligned}
   & 1-\gamma, \qquad & z \in C\setminus C_t,             \\
   & 1, & \mathrm{otherwise}, \\
 \end{aligned}
 \right.
\end{equation}
and note that
\begin{equation}
  [0,2\pi)^n = \bigcup_{k=0}^n A_k,
\end{equation}
it then follows from \eqref{eq: thinned-prob-1} that
\begin{equation} \label{eq: thinned-prob-Toep}
  \begin{aligned}
  &   \mathbb{P}(\textrm{all $n$ eigenvalues  in $C_t$ after thinning}) \\
  & =
  \frac{1}{Z_n'} \int_{[0,2\pi]^n}
   \prod_{j=1}^n |e^{i\theta_j}-1|^{2\alpha}e^{i \beta(\theta_j - \pi)}\prod_{j<k}(e^{i\theta_j}-e^{i\theta_k})^2 \prod_{j=1}^n f(e^{i \theta_j}) d\theta_j = \frac{D_n(t)}{D_n(0)},
 \end{aligned}
\end{equation}
where $D_n(t)$ is a Toeplitz determinant given below:
\begin{equation}\label{dn}
D_n(t) = \det\left(\frac{1}{2\pi}
  \int_0^{2\pi}   |e^{i\theta}-1|^{2\alpha}e^{i \beta(\theta - \pi)} f(e^{i\theta}) \, e^{-i(k-j)\theta}d\theta
  \right)_{j,k=1}^n;
\end{equation}
see a similar derivation in \cite{Cha:Cla2017}. Finally, with an analogous argument as in \cite[Lemma 6]{Dei:Kra:Vas2011}, we have from \eqref{eq: thinned-prob-Toep} that
\begin{equation}\label{fredholmandtoeplitz}
    \det (I - \gamma \mathcal{K}^{(\alpha,\beta)}_s) = \lim_{n \to \infty} \frac{D_n(\frac{2s}{n})}{D_n(0)}.
\end{equation}

\subsection{Orthogonal polynomials on the unit circle}

It is well-known that Toeplitz determinants are related to orthogonal polynomials on the unit circle. Due to the explicit expression of the Toeplitz determinant in \eqref{dn}, we define a weight function as follows:
\begin{equation}\label{w}
 w(z;t)=\tilde{w}(z)f(z;t), \qquad z \in \mathbb{C} \setminus [0,+\infty),
\end{equation}
with $f(z;t)$ defined in \eqref{eq: f-def} and
\begin{equation} \label{def: w-tilde}
 \tilde{w}(z)=(z-1)^{2\alpha}z^{\beta-\alpha}e^{-\pi i (\alpha+\beta)},
\end{equation}
where both branches of $(z-1)^{2\alpha}$ and $z^{\beta-\alpha}$ are taken such that $\arg (\cdot) \in (0,2 \pi)$. Let
\begin{equation}\label{pn}
  p_n(z):=p_n(z;t)=\chi_n(t)z^n+\cdots,
\end{equation}
be polynomials orthonormal with respect to $w(z;t)$ on the unit circle, that is,
\begin{equation}\label{orthogonalrelation}
 \frac{1}{2\pi}\int_0^{2\pi} p_n(e^{i \theta};t) \,  \overline{p_m(e^{i \theta};t)} \, w(e^{i \theta};t)d\theta=\delta_{nm}.
\end{equation}
Then, $p_n(z)$ is related to the following RH problem.

%
%
%
%

\begin{rhp} \label{Y-RHP}
  The $2 \times 2$ matrix-valued function $Y(z)$ satisfies the following properties:
 \begin{itemize}
  \item [(a)] $Y(z)$ is analytic in $\mathbb{C} \setminus C$, where $C$ is taken in the counterclockwise direction.


  \item [(b)] $Y(z)$ satisfies the jump condition
        \begin{equation}
         Y_+(z)=Y_-(z)
         \begin{pmatrix}
          1 & z^{-n}w(z;t) \\ 0 & 1
         \end{pmatrix}, \qquad z \in C.
        \end{equation}
  \item [(c)] As $z \to \infty$, we have
        \begin{equation} \label{eq: Y-large-z}
         Y(z)=\left(I+O\left(\frac{1}{z}\right)\right)
         \begin{pmatrix}
          z^n & 0 \\ 0 & z^{-n}
         \end{pmatrix}.
        \end{equation}
  \item [(d)] Let $z_1=e^{it},\ z_2=e^{i(2\pi -t)}$ be the points where the jump discontinuity in $w(z;t)$ takes place. Then, we have
        \begin{equation}
         Y(z)=
         \begin{pmatrix}
          O(1) & O(\ln (z-z_i)) \\
          O(1) & O(\ln (z-z_i))
         \end{pmatrix}, \qquad \textrm{as } z \to z_i, \ i=1,2.
        \end{equation}
 \end{itemize}
\end{rhp}

The unique solution of the above RH problem is given by
\begin{equation}\label{ydef}
 Y(z)=
 \begin{pmatrix}
  \frac{p_n(z)}{\chi_n(t)} &
  \frac{1}{2\pi i \chi_n(t)}\int_C \frac{p_n(\zeta)w(\zeta)}{\zeta^n(\zeta-z)}d\zeta \vspace{2pt}  \\
  -\chi_{n-1}(t) p_{n-1}^{*}(z) &
  -\frac{\chi_{n-1}(t)}{2\pi i}\int_C \frac{p_{n-1}^{*}(\zeta)w(\zeta)}{\zeta^{n}(\zeta-z)}d\zeta
 \end{pmatrix},
\end{equation}
where $p_{n}^{*}(z)$ is the reverse polynomial of $p_n(z)$:
\begin{equation}
 p_{n}^{*}(z)=z^n \overline{p_n\left(\bar{z} ^{-1}\right)}=z^n \overline{p_n}(z^{-1}).
\end{equation}

\subsection{Differential identity}
At the end of this section, let us present an important differential identity which relates the Toeplitz determinant \eqref{dn} to the above RH problem for $Y$.
\begin{lemma}
 Let $D_n(t)$ be the Toeplitz determinant given in \eqref{dn}, then we have
 \begin{equation}\label{differentialindentityfordn}
  \frac{d}{dt}\ln D_n(t)=- \frac{\gamma}{2\pi} \sum_{i=1}^2
  z_i^{-n+1}\tilde{w}(z_i) \lim_{z \to z_i}\left(Y^{-1}(z) \frac{d}{dz}Y(z)\right)_{21}.
 \end{equation}
\end{lemma}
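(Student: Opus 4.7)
The plan is to combine Jacobi's formula with a Christoffel--Darboux interpretation of $M^{-1}$, and then convert the resulting diagonal kernel into the first-column data of $Y$.

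First, I would view $D_n(t)=\det M(t)$ with $M_{jk}(t)=\frac{1}{2\pi}\int_0^{2\pi}\tilde w(e^{i\theta})f(e^{i\theta};t)e^{-i(k-j)\theta}\,d\theta$. Since $t$ enters only through $f$ and $\partial_t f(e^{i\theta};t)=-\gamma[\delta(\theta-t)+\delta(\theta-(2\pi-t))]$, Jacobi's formula gives
\begin{equation*}
\frac{d}{dt}\ln D_n(t)=\mathrm{tr}\bigl(M^{-1}\dot M\bigr)=-\frac{\gamma}{2\pi}\sum_{i=1}^{2}\tilde w(z_i)\sum_{j,k}(M^{-1})_{kj}\,z_i^{\,j-k}.
\end{equation*}
Writing the orthonormal polynomials as $p_k(z)=\sum_j a_{kj}z^j$ and setting $A=(a_{kj})$, the orthogonality \eqref{orthogonalrelation} reads $AMA^*=I$, hence $M^{-1}=A^*A$. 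For $|z|=1$ the inner double sum collapses to $\sum_{k=0}^{n-1}|p_k(z)|^2=K_{n-1}(z,z)$, so
\begin{equation*}
\frac{d}{dt}\ln D_n(t)=-\frac{\gamma}{2\pi}\sum_{i=1}^{2}\tilde w(z_i)K_{n-1}(z_i,z_i).
\end{equation*}

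Next, I would translate the diagonal kernel into $(Y^{-1}Y')_{21}$. The classical Christoffel--Darboux identity $(1-z\bar w)K_{n-1}(z,w)=p_n^*(z)\overline{p_n^*(w)}-p_n(z)\overline{p_n(w)}$, combined with L'Hôpital in $\bar w$ at $w=z$ on $|z|=1$, gives
\begin{equation*}
K_{n-1}(z,z)=z^{1-n}\bigl[p_n^*(z)p_n'(z)-p_n(z)(p_n^*)'(z)\bigr].
\end{equation*}
Substituting the Szegő recurrence $p_n^*=(\chi_n/\chi_{n-1})[p_{n-1}^*-\alpha_{n-1}z\,p_{n-1}]$ and exploiting the Verblunsky--norm relation $\chi_{n-1}^2/\chi_n^2=1-|\alpha_{n-1}|^2$, the unwanted cross-terms cancel and one is left with
\begin{equation*}
p_n^*p_n'-p_n(p_n^*)'=\frac{\chi_{n-1}}{\chi_n}\bigl[p_{n-1}^*\,p_n'-p_n\,(p_{n-1}^*)'\bigr].
\end{equation*}
Since $\det Y\equiv 1$ (triangular jump plus the normalization \eqref{eq: Y-large-z}) and the first column of $Y$ from \eqref{ydef} is $(p_n/\chi_n,\ -\chi_{n-1}p_{n-1}^*)^{T}$, the right-hand side is precisely $(Y^{-1}Y')_{21}(z)=Y_{11}(z)Y_{21}'(z)-Y_{21}(z)Y_{11}'(z)$. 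Assembling the three displays yields \eqref{differentialindentityfordn}.

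The delicate point is the last identification: the Christoffel--Darboux formula naturally produces data in $(p_n,p_n^*)$, while $Y$ encodes $(p_n,p_{n-1}^*)$, so the reduction to the exact form appearing in \eqref{differentialindentityfordn} rests entirely on the cancellation produced by combining the Szegő recurrence with the Verblunsky--norm identity.
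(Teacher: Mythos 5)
Your proposal is correct, and in its endgame it coincides with the paper's: both reduce the derivative of $\ln D_n(t)$ to the diagonal Christoffel--Darboux kernel $-\frac{\gamma}{2\pi}\sum_i \tilde w(z_i)K_{n-1}(z_i,z_i)$ (the paper's \eqref{differentialidentityd}) and then convert $K_{n-1}(z_i,z_i)=z_i^{1-n}\bigl[p_n^*p_n'-p_n(p_n^*)'\bigr](z_i)$ into $(Y^{-1}Y')_{21}$ via the Szeg\H{o} recurrence, $\det Y\equiv 1$ and the first column of \eqref{ydef}. The only real difference is how the intermediate kernel identity is obtained: the paper uses $D_n(t)=\prod_{k=0}^{n-1}\chi_k^{-2}(t)$ and differentiates the orthonormality relation to track $\chi_k'(t)$, whereas you use Jacobi's formula with the distributional $t$-derivative of $f$ together with the factorization $M^{-1}=A^*A$; these are two faces of the same fact (taking determinants in $AMA^*=I$ gives the product formula), and your version has the small advantage of never computing $d\chi_k/dt$, at the cost of needing to note explicitly that the weight is real and positive (since $\beta\in i\mathbb{R}$), so that $M$ is Hermitian positive definite, \eqref{orthogonalrelation} indeed reads $AMA^*=I$, and $D_n(t)\neq 0$ for Jacobi's formula. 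One small imprecision in your sketch: substituting only $p_n^*=\frac{\chi_n}{\chi_{n-1}}\bigl[p_{n-1}^*-\alpha_{n-1}zp_{n-1}\bigr]$ and invoking $\chi_{n-1}^2/\chi_n^2=1-|\alpha_{n-1}|^2$ does not by itself cancel the cross-term $zp_{n-1}p_n'-p_n(zp_{n-1})'$; you also need the companion recurrence $zp_{n-1}=\frac{\chi_{n-1}}{\chi_n}p_n+\bar\alpha_{n-1}p_{n-1}^*$ (equivalently, the combined relation $p_n^*=\frac{\chi_{n-1}}{\chi_n}p_{n-1}^*-\alpha_{n-1}p_n$, from which your Wronskian identity follows in one line). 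With that supplement, every displayed identity in your argument is correct and the proof of \eqref{differentialindentityfordn} is complete.
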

\begin{proof}
The proof is similar to that of \cite[Lemma 8]{Dei:Kra:Vas2011} and \cite[Lemma 2]{Xu:Zhao2020} where $\gamma = 1$.

Recalling the following relation between a Toeplitz determinant and the leading coefficient of the orthonormal polynomials of $p_n(z)$ in \eqref{pn}:
 \begin{equation}
  D_n(t)=\prod_{k=0}^{n-1}\chi_k^{-2}(t),
 \end{equation}
 then we have
 \begin{equation}\label{differentialidentity1}
  \frac{d}{dt}\ln D_n(t)=-2\sum_{k=0}^{n-1}\chi_k^{-1}(t)\frac{d}{dt}\chi_k(t).
 \end{equation}
Setting $n=m$ in the orthogonal relation \eqref{orthogonalrelation} and differentiating both sides with respect to $t$, we get
\begin{equation}
   -\frac{\gamma}{2\pi}\left(\pi_n(z_1) \overline{\pi_n(z_1)}\tilde{w}(z_1)+\pi_n(z_2) \overline{\pi_n(z_2)}\tilde{w}(z_2)\right)=-2\chi_n^{-3}(t)\frac{d}{dt}\chi_n(t),
 \end{equation}
where $\pi_n(z) = \frac{1}{\chi_n(t)} p_n(z)$ is the monic polynomial. It then follows from the above two formulas that
 \begin{equation}\label{differentialidentityd}
   \frac{d}{dt}\ln D_n(t)
    =-\frac{\gamma}{2 \pi}\sum_{i=1}^2\sum_{k=0}^{n-1}p_k(z_i)\overline{p_k}(z_i^{-1})\tilde{w}(z_i).
 \end{equation}
Finally, we obtain the desired differential identity \eqref{differentialindentityfordn} with the aid of the Christoffel-Darboux formula,  the three-term recurrence relation for orthogonal polynomials on the unit circle (cf. \cite{Sze1975}), and the explicit expression of $Y(z)$ in \eqref{ydef}.
 \end{proof}

\begin{remark}
One may simply set $\gamma =1$ in \eqref{differentialindentityfordn} to retrieve the differential identity for the Toeplitz determinant in the undeformed case, namely, Eq. (52) in \cite{Dei:Kra:Vas2011} or Eq. (5.7) in \cite{Xu:Zhao2020}.
\end{remark}

Before we conduct a steepest descent analysis for the RH problem for $Y$, let us consider a model RH problem, which will be used in the local parametrix construction near $z = 1$.

\section{A model RH problem and its Lax pair} \label{sec: model-RHP}

In this section, we study a model RH problem for a function $\Psi(z;\tau)$ and an associated Lax pair. This RH problem is similar to the one in Xu and Zhao \cite[Sec. 2.1]{Xu:Zhao2020}, with an additional deforming parameter $\gamma$ in the problem. When $\gamma =1$, these two problems are identical.

\subsection{Model RH problem}
The $2 \times 2$ matrix-valued function $\Psi(z) = \Psi(z;\tau)$ satisfies a RH problem as follows.
\begin{rhp}\label{modelrhp}
 \hfill
 \begin{itemize}
  \item [(a)] $\Psi(z;\tau)$ is analytic for $z \in \mathbb{C} \setminus \{\cup_{i=1}^7 \Sigma_i \}$, where the oriented contours are defined as
        \begin{equation*}
         \begin{aligned}
           & \Sigma_1=1+e^{\frac{\pi i}{4}}\mathbb{R}^{+}, \qquad \Sigma_2=-1+e^{\frac{3\pi i}{4}}\mathbb{R}^{+}, \qquad
          \Sigma_3=-1+e^{-\frac{3\pi i}{4}}\mathbb{R}^{+}, \qquad     \\
           & \Sigma_4=e^{-\frac{\pi i}{2}}\mathbb{R}^{+}, \qquad\Sigma_5=1+e^{-\frac{\pi i}{4}}\mathbb{R}^{+}, \qquad \Sigma_6=(0,1), \qquad \Sigma_7=(-1,0);
         \end{aligned}
       \end{equation*}
        see Figure \ref{figure1}.
       \begin{figure}[h]
         \includegraphics[width=5.2in]{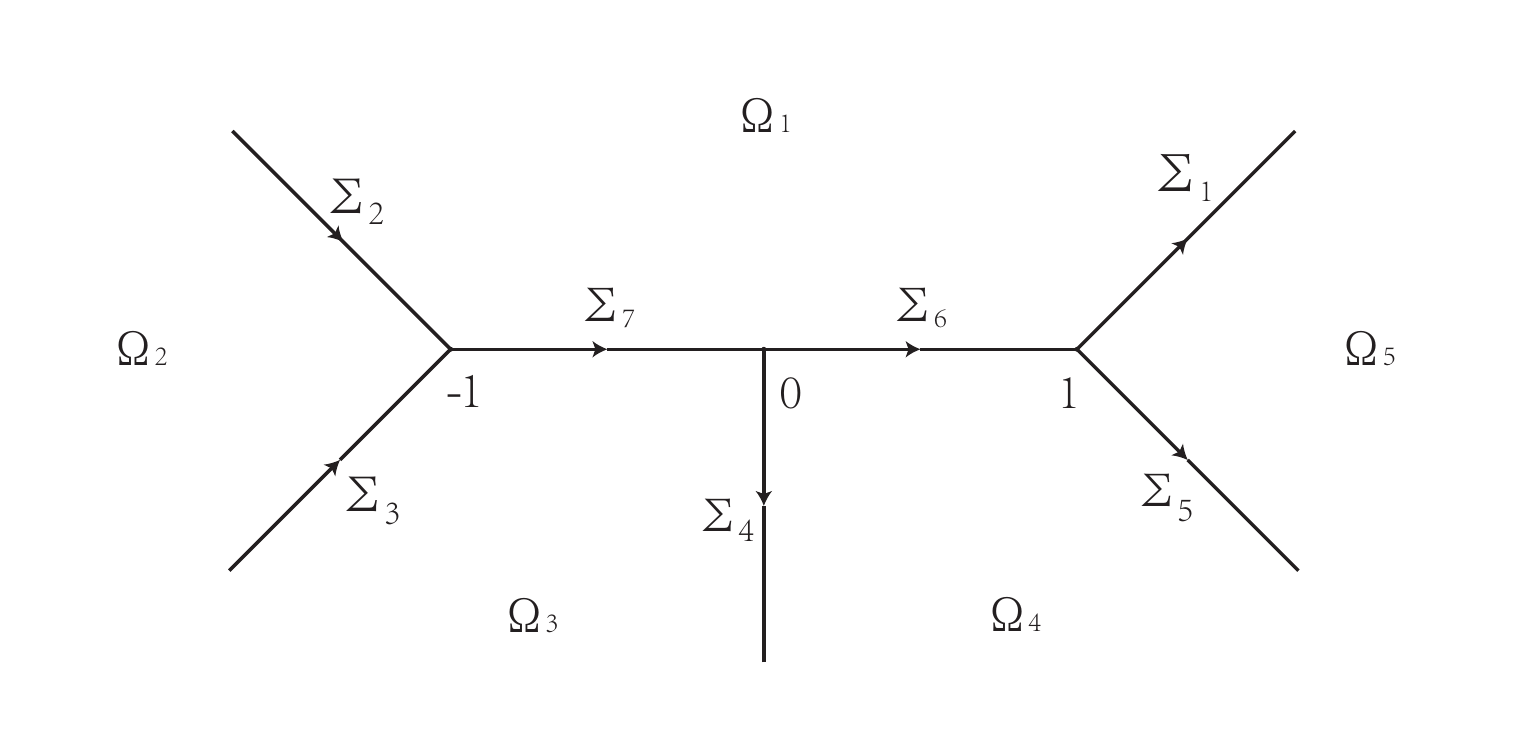}
         \centering
         \caption{Contours for the model RH problem. Regions $\Omega_i$, $i =1, \cdots, 5,$ are also depicted.}
         \label{figure1}
        \end{figure}

  \item [(b)] $\Psi$ has limiting values $\Psi_{\pm}(z;\tau)$ for $z \in \cup_{i=1}^7 \Sigma_i$, where $\Psi_\pm$ denotes the values of $\Psi$ taken from the left/right side of $\Sigma_i$. Moreover, they satisfy the following jump conditions
  \begin{equation}\label{jumpforpsi}
    \Psi_+(z;\tau)=\Psi_-(z;\tau)J_i(z), \qquad z \in \Sigma_i,
  \end{equation}
  where
  \begin{equation}
    \begin{aligned}
      & J_1(z) = \begin{pmatrix} 1 & 0 \\ e^{-\pi i(\alpha-\beta)} & 1 \end{pmatrix}, \quad
      J_2(z) = \begin{pmatrix} 1 & 0 \\ e^{\pi i(\alpha-\beta)} & 1 \end{pmatrix}, \quad
      J_3(z) = \begin{pmatrix} 1 & -e^{-\pi i(\alpha-\beta)} \\ 0 & 1 \end{pmatrix}, \\
      & J_4(z) = e^{2\pi i \beta \sigma_3}, \quad
      J_5(z) = \begin{pmatrix} 1 & -e^{\pi i(\alpha-\beta)} \\ 0 & 1 \end{pmatrix}, \quad
      J_6(z) = \begin{pmatrix} 0 & -e^{\pi i(\alpha-\beta)} \\ e^{-\pi i(\alpha-\beta)} & 1-\gamma \end{pmatrix}, \\
      & J_7(z) = \begin{pmatrix} 0 & -e^{-\pi i(\alpha-\beta)} \\ e^{\pi i(\alpha-\beta)} & 1-\gamma \end{pmatrix}.
    \end{aligned}
  \end{equation}

  \item [(c)] When $z \to \infty$, we have
        \begin{equation}\label{infinitybehaviorforpsi}
         \Psi(z;\tau)=\left(I + \frac{\Psi_1(\tau)}{z} + \frac{\Psi_2(\tau)}{z^2} + O\left(\frac{1}{z^3}\right)\right)z^{-\beta\sigma_3}e^{\frac{\tau z}{4}\sigma_3},
        \end{equation}
        where the branch cut of $z^\beta$ is taken along  the negative imaginary axis such that $\arg z \in (-\frac{\pi}{2}, \frac{3\pi}{2})$. Here, $\sigma_3$ is the third Pauli matrix $\begin{pmatrix} 1 & 0 \\ 0 & -1 \end{pmatrix}$.

  \item [(d)] When $z \to 0$ and  $z \in \Omega_i, \, i=1,3,4$, we have
  \begin{equation}\label{localbehaviorforpsinear0}
         \Psi(z;\tau)=\Psi^{(0)}(z;\tau)z^{\alpha \sigma_3} \begin{cases}
         \begin{pmatrix} 1 & (1-\gamma) \frac{\sin(\alpha + \beta)\pi}{\sin 2\alpha \pi}) \\ 0 & 1 \end{pmatrix}C_i^{(0)}, & \textrm{if } 2\alpha \notin \mathbb{N}, \vspace{5pt} \\
         \begin{pmatrix} 1 & \frac{(-1)^{2\alpha}(1-\gamma)}{\pi} \sin(\alpha + \beta)\pi \ln z \\ 0 & 1 \end{pmatrix}C_i^{(0)}, & \textrm{if } 2\alpha \in \mathbb{N},
         \end{cases}
        \end{equation}
        where both $z^\alpha$ and $\ln z$ take the principal branch,  and the constant matrices $C_i^{(0)}$ are given by
        \begin{equation}
         C_1^{(0)}=I, \qquad  C_3^{(0)}=J_6^{-1}J_4^{-1}, \qquad  C_4^{(0)}=J_6^{-1}
        \end{equation}
  with jump matrices $J_i$ given in \eqref{jumpforpsi}. Here, $\Psi^{(0)}(z;\tau)=\Psi^{(0)}_0(\tau)\left(I+\Psi^{(0)}_1(\tau)z+O(z^2)\right)$ is analytic at $z=0$.

  \item [(e)] When $z \to 1$ and $z \in \Omega_i, \, i=1,4,5$, we have
        \begin{equation}\label{localbehaviorforpsinear1}
         \Psi(z;\tau)=\Psi^{(1)}(z;\tau)
         \begin{pmatrix} 1 & -\frac{\gamma e^{\pi i (\alpha-\beta)}}{2\pi i}\ln(z-1) \\ 0 & 1 \end{pmatrix} C_i^{(1)},
        \end{equation}
        where $\ln (z-1)$ takes the principal branch, and the constant matrices $C_i^{(1)}$ are given by
        \begin{equation}
         C_1^{(1)}=I, \qquad  C_4^{(1)}=J_1^{-1}J_5^{-1}, \qquad  C_5^{(1)}=J_1^{-1}.
        \end{equation}
        Here, $\Psi^{(1)}(z;\tau)=\Psi^{(1)}_0(\tau)\left(I+\Psi^{(1)}_1(\tau)(z-1)+O(z-1)^2)\right)$ is analytic at $z=1$.

  \item [(f)] When $z \to -1$ and $z \in \Omega_i, \, i=1,2,3$, we have
        \begin{equation}\label{localbehaviorforpsinear-1}
         \Psi(z;\tau)=\Psi^{(-1)}(z;\tau)
         \begin{pmatrix} 1 & \frac{\gamma e^{-\pi i (\alpha-\beta)}}{2\pi i}\ln(z+1) \\ 0 & 1 \end{pmatrix} C_i^{(-1)},
        \end{equation}
        where the branch cut of $\ln (z+1)$ is taken along $(-1, \infty)$ such that $\arg (z+1) \in(0,2\pi)$, and the constant matrices $C_i^{(-1)}$ are given by
        \begin{equation}
         C_1^{(-1)}=I, \qquad  C_2^{(-1)}=J_2^{-1}, \qquad  C_3^{(-1)}=J_2^{-1}J_3^{-1}.
        \end{equation}
        Here, $\Psi^{(-1)}(z;\tau)=\Psi^{(-1)}_0(\tau)\left(I+\Psi^{(-1)}_1(\tau)(z+1)+O(z+1)^2)\right)$ is analytic at $z=-1$.
 \end{itemize}
\end{rhp}

\subsection{Lax pair}

As the jump matrices in \eqref{jumpforpsi} are independent of variables $z$ and $\tau$, one may derive a Lax pair from the above model RH problem. Comparing with the model RH problem in Xu and Zhao \cite{Xu:Zhao2020}, the additional deforming parameter $\gamma$ appears in the $(2,2)$-entry of the jump matrices $J_6$ and $J_7$ in \eqref{jumpforpsi}, as well as the endpoint conditions near $0$ and $\pm 1$ in \eqref{localbehaviorforpsinear0}, \eqref{localbehaviorforpsinear1} and \eqref{localbehaviorforpsinear-1}. A straightforward computation shows that this additional parameter does not change the associated Lax pair. Therefore, we reproduce the following results.

\begin{proposition}\label{laxpairprop}
 For the function $\Psi(z;\tau)$ satisfying RH problem \ref{modelrhp}, we have
 \begin{equation}\label{laxpair}
  \frac{\partial}{\partial z}\Psi(z;\tau)=L(z;\tau)\Psi(z;\tau), \qquad
  \frac{\partial}{\partial \tau}\Psi(z;\tau)=U(z;\tau)\Psi(z;\tau),
 \end{equation}
 where
 \begin{equation}\label{l}
  L(z;\tau)=\frac{\tau}{4}\sigma_3+\frac{A_0(\tau)}{z}+\frac{A_1(\tau)}{z-1}+\frac{A_2(\tau)}{z+1},
 \end{equation}
 and
 \begin{equation}\label{u}
  U(z;\tau)=\frac{z}{4}\sigma_3+\mathcal{B}(\tau).
 \end{equation}
 The above coefficient matrices are given by
\begin{equation}\label{a0phi0a0}
  A_0(\tau)=\begin{cases}
  \alpha \Psi_0^{(0)}(\tau) \sigma_3 \left(\Psi_0^{(0)}(\tau)\right)^{-1}, & \alpha \neq 0,  \vspace{2mm}\\
  \Psi_0^{(0)}(\tau) \begin{pmatrix} 0 & \frac{1-\gamma}{\pi} \sin \beta \pi \\ 0 & 0 \end{pmatrix} \left(\Psi_0^{(0)}(\tau)\right)^{-1}, & \alpha = 0,
  \end{cases}
 \end{equation}
 \begin{eqnarray}
  A_1(\tau) & = & -\frac{\gamma e^{\pi i (\alpha-\beta)}}{2\pi i} \Psi_0^{(1)}(\tau)
  \begin{pmatrix} 0 & 1 \\ 0 & 0 \end{pmatrix} \left(\Psi_0^{(1)}(\tau)\right)^{-1}, \label{a1intermsofpsi} \\
  A_2(\tau) & = & \frac{\gamma e^{-\pi i (\alpha-\beta)}}{2\pi i} \Psi_0^{(-1)}(\tau) \begin{pmatrix} 0 & 1 \\ 0 & 0 \end{pmatrix} \left(\Psi_0^{(-1)}(\tau)\right)^{-1}\label{a2intermsofpsi}
\end{eqnarray}
and 
\begin{equation} \label{b-Psi}
  \mathcal{B}(\tau)=  \frac{1}{2} \begin{pmatrix}
  0 & (\Psi_1(\tau))_{12} \\
  (\Psi_1(\tau))_{21} & 0 
  \end{pmatrix},
\end{equation}
where  $\Psi_1(\tau)$ is given in \eqref{infinitybehaviorforpsi},  $\Psi_0^{(k)}(\tau)$, $k = 0, \pm 1$ are the functions defined in \eqref{localbehaviorforpsinear0}, \eqref{localbehaviorforpsinear1} and  \eqref{localbehaviorforpsinear-1}, respectively. 

Moreover, the Hamiltonian $H(\tau; \alpha, \beta)$ in \eqref{sh}
 is related to $\Psi(z;\tau)$ as follows
 \begin{equation}\label{h}
  H(\tau; \alpha, \beta)=-\frac{1}{2}(\Psi_1(\tau))_{11}- \frac{\alpha^2-\beta^2}{\tau}.
 \end{equation}
\end{proposition}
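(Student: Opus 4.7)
The plan is to construct $L$ and $U$ as the logarithmic derivatives of $\Psi$ with respect to $z$ and $\tau$, respectively, and then use the jump-independence of $\Psi$'s $z,\tau$-derivatives combined with the prescribed local behaviors at $0,\pm 1,\infty$ to identify their explicit form via a Liouville-type argument. Concretely, I would set
\begin{equation*}
L(z;\tau):=\frac{\partial}{\partial z}\Psi(z;\tau)\,\Psi(z;\tau)^{-1},\qquad U(z;\tau):=\frac{\partial}{\partial \tau}\Psi(z;\tau)\,\Psi(z;\tau)^{-1}.
\end{equation*}
Since every jump matrix $J_i$ in \eqref{jumpforpsi} is a constant matrix (independent of both $z$ and $\tau$), differentiating $\Psi_+=\Psi_-J_i$ shows that $L$ and $U$ have no jumps across any $\Sigma_i$, hence extend to meromorphic functions on $\mathbb{C}$ whose possible singularities are confined to $\{0,1,-1,\infty\}$.

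For $L$, I would analyze each candidate singularity using the local expansions in (d)--(f) and (c). Near $z=1$, factor $\Psi=\Psi^{(1)}(z;\tau)N(z)C_i^{(1)}$ with $N(z)=\bigl(\begin{smallmatrix}1&-\tfrac{\gamma e^{\pi i(\alpha-\beta)}}{2\pi i}\ln(z-1)\\0&1\end{smallmatrix}\bigr)$; then a direct calculation of $N'N^{-1}$ yields a simple pole of $L$ at $z=1$ with residue $A_1(\tau)$ as in \eqref{a1intermsofpsi}, because $\Psi^{(1)}$ is analytic and invertible there. The analogous computation at $z=-1$ gives $A_2(\tau)$. At $z=0$, the generic case $2\alpha\notin\mathbb{N}$ produces $z^{\alpha\sigma_3}$, whose logarithmic derivative gives the diagonal factor $\alpha\sigma_3/z$, so conjugation by $\Psi^{(0)}_0(\tau)$ recovers \eqref{a0phi0a0}; the integer case is handled separately using the $\ln z$ branch in \eqref{localbehaviorforpsinear0}, which when $\alpha=0$ yields the off-diagonal residue stated in \eqref{a0phi0a0}, while for $2\alpha\in\mathbb{N}_{\geq 1}$ the logarithm contributes to the regular part and the same diagonal formula persists. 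At $z=\infty$, the factor $e^{\frac{\tau z}{4}\sigma_3}$ in \eqref{infinitybehaviorforpsi} contributes $\frac{\tau}{4}\sigma_3$ while the $z^{-\beta\sigma_3}$ and $I+O(1/z)$ factors give an $O(1/z)$ correction. Liouville's theorem then forces $L$ to take the exact form \eqref{l}.

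For $U$, the same jump-independence argument applies, and I would verify that the local expansions at $z=0,\pm 1$ do not produce singularities of $U$: differentiating the representation near $z=1$ with respect to $\tau$ gives $\partial_\tau\Psi=(\partial_\tau\Psi^{(1)})\Psi^{(1),-1}\Psi$, since the $\tau$-independent factor $N(z)C^{(1)}_i$ cancels, and similarly at $0,-1$; thus $U$ is entire in $z$. At infinity, differentiating \eqref{infinitybehaviorforpsi} in $\tau$ shows $U(z;\tau)=\tfrac{z}{4}\sigma_3+\tfrac{1}{2}[\sigma_3,\Psi_1(\tau)]/1+O(1/z)$, so $U$ is a degree-one polynomial in $z$, and the constant matrix $\mathcal{B}(\tau)$ is determined by matching the $O(1)$ term, yielding \eqref{b-Psi} after reading off the off-diagonal entries of $\tfrac{1}{2}[\sigma_3,\Psi_1]$.

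Finally, for the Hamiltonian identity \eqref{h}, I would substitute the series $\Psi=(I+\Psi_1/z+\Psi_2/z^2+\cdots)z^{-\beta\sigma_3}e^{\tau z\sigma_3/4}$ into the compatibility-derived expression for $H(\tau;\alpha,\beta)$ in terms of $(u_k,v_k)$ in \eqref{sh}, using the standard identification of $u_k,v_k$ with entries of $\Psi_1$ and $\Psi_2$ obtained from the Lax pair \eqref{laxpair}; the trace identity $\mathrm{tr}\,L^2=(\tau/2)^2+2(\alpha^2-\beta^2)/\tau\cdot(\cdots)+\cdots$ together with the residue conditions $\mathrm{tr}\,A_j^2=\alpha_j^2$ relates $(\Psi_1)_{11}$ directly to $H$. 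I expect the main obstacle to be the careful bookkeeping at $z=0$ when $2\alpha\in\mathbb{N}$, where logarithms in \eqref{localbehaviorforpsinear0} threaten to produce unwanted singular terms in $L$; showing cancellation there, and matching the Hamiltonian formula \eqref{h} to the precise normalization of $H$ inherited from the $\tau$-deformation, are the two delicate points — though both reduce, in the end, to direct series manipulation once the singular structure of $L$ and $U$ is established.
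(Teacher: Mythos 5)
Your skeleton is exactly the standard argument that the paper invokes (via Proposition 1 of Xu--Zhao): set $L=\Psi_z\Psi^{-1}$, $U=\Psi_\tau\Psi^{-1}$, use the constancy of the jump matrices to see these are meromorphic, read off the residues at $0,\pm1$ from the local expansions, and finish with Liouville; your treatment of $A_0,A_1,A_2$, including the observation that for $2\alpha\in\mathbb{N}_{\geq 1}$ the logarithm only contributes the analytic term $z^{2\alpha-1}$, is correct. The concrete problem is your determination of $\mathcal{B}(\tau)$. Matching the $O(1)$ term of $U$ at infinity from \eqref{infinitybehaviorforpsi} gives
\begin{equation*}
U(z;\tau)=\frac{z}{4}\sigma_3+\frac{1}{4}\bigl(\Psi_1\sigma_3-\sigma_3\Psi_1\bigr)+O\!\left(\frac1z\right),
\end{equation*}
so the off-diagonal entries of $\mathcal{B}$ are $-\frac12(\Psi_1)_{12}$ and $+\frac12(\Psi_1)_{21}$; your claimed constant $\frac12[\sigma_3,\Psi_1]$ is off by a factor of two and has the opposite sign pattern, and it does not ``yield \eqref{b-Psi} by reading off the off-diagonal entries'' as you assert. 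This is precisely the spot where careful bookkeeping matters: note that \eqref{b-Psi} as printed is itself inconsistent in the sign of its $(1,2)$ entry with the explicit formula \eqref{b}, as one sees from the residue identity $A_0+A_1+A_2=\frac{\tau}{4}(\Psi_1\sigma_3-\sigma_3\Psi_1)-\beta\sigma_3$ combined with \eqref{a0}--\eqref{a2}, so a correct proof has to pin these signs down rather than gesture at them.

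The second gap is the Hamiltonian identity \eqref{h}. Your justification --- ``the trace identity $\mathrm{tr}\,L^2=\dots$ together with the residue conditions $\mathrm{tr}\,A_j^2=\alpha_j^2$ relates $(\Psi_1)_{11}$ directly to $H$'' --- is not a proof: $H$ is \emph{defined} in \eqref{sh}--\eqref{hv} through $u_k,v_k$, the displayed expansion of $\mathrm{tr}\,L^2$ is not meaningful as written, and no such residue conditions have been established here. What is actually needed is a computation: expand $L=\Psi_z\Psi^{-1}$ at $z=\infty$ through order $z^{-2}$, identify the $z^{-1}$ and $z^{-2}$ coefficients with $A_0+A_1+A_2$ and $A_1-A_2$ (this yields, e.g., $(\Psi_1)_{11}=u_1v_1-u_2v_2+\frac{\tau}{2}(\Psi_1)_{12}(\Psi_1)_{21}$ together with expressions for $(\Psi_1)_{12},(\Psi_1)_{21}$ in terms of $u_k,v_k,y$), insert the parametrization \eqref{a0}--\eqref{a2}, and verify algebraically against \eqref{sh}; only then does $(\Psi_1)_{11}$ acquire the stated relation to $H$ with the shift $-(\alpha^2-\beta^2)/\tau$. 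As written, the last assertion of the proposition is claimed rather than derived.
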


\begin{proof}
 The proof is similar to that of Proposition 1 in \cite{Xu:Zhao2020}. 
\end{proof}

From \eqref{a0phi0a0}-\eqref{b-Psi}, we also have the following explicit expressions for the coefficient matrices
 \begin{small}
 \begin{equation}\label{a0}
  A_0(\tau)=
   \begin{pmatrix}
    u_1(\tau)v_1(\tau)+u_2(\tau)v_2(\tau)-\beta               & -(u_1(\tau)v_1(\tau)+u_2(\tau)v_2(\tau)-\alpha-\beta)y(\tau) \\
    (u_1(\tau)v_1(\tau)+u_2(\tau)v_2(\tau)+\alpha-\beta)/y(\tau) & -u_1(\tau)v_1(\tau)-u_2(\tau)v_2(\tau)+\beta
   \end{pmatrix},
 \end{equation}
 \end{small}
 \begin{equation}\label{a1}
  A_1(\tau)=
  \begin{pmatrix}
   -u_1(\tau)v_1(\tau)        & u_1(\tau)y(\tau)   \\
   -u_1(\tau)v_1^2(\tau)/y(\tau) & u_1(\tau)v_1(\tau)
  \end{pmatrix},
 \end{equation}
 \begin{equation}\label{a2}
  A_2(\tau)=
  \begin{pmatrix}
   -u_2(\tau)v_2(\tau)        & u_2(\tau)y(\tau)   \\
   -u_2(\tau)v_2^2(\tau)/y(\tau) & u_2(\tau)v_2(\tau)
  \end{pmatrix},
 \end{equation}
 and
 \begin{equation}\label{b}
  \mathcal{B}(\tau)=\frac{1}{\tau}
  \left(
  \begin{array}{ccc}
    0 \qquad  \qquad  \qquad (-u_1(\tau)(v_1(\tau)-1)-u_2(\tau)(v_2(\tau)-1)+\alpha+\beta)y(\tau) \\
    (-u_1(\tau)v_1(\tau)(v_1(\tau)-1)-u_2(\tau)v_2(\tau)(v_2(\tau)-1)+\alpha - \beta))/y(\tau) \qquad 0
   \end{array}
  \right).
 \end{equation}
 The above formulas, together with the compatibility condition of \eqref{laxpair}, give us the coupled Painlev\'e V system in \eqref{p5system}.

\begin{remark} \label{remark-lambda}
The above result shows that the associated coupled Painlev\'e V system is the same for $0 \leq \gamma <  1$ and $\gamma = 1$. Although the parameter $\gamma$ does not appear explicitly in the differential system \eqref{p5system}, properties of the solutions $u_k(\tau)$ and $v_k(\tau)$, $k = 1, 2$, depend on the parameter $\gamma$, such as the pole distribution and asymptotic behaviors. This is a general phenomenon for the Painlev\'e equations. For example, one may refer to \cite[Sec. 1.1]{Dai:Hu2017} for a discussion of the well-known homogeneous Painlev\'e II equation.
\end{remark}

Besides the Lax pair above, the following relations play an important role in our future analysis.
\begin{proposition}\label{ydprop}
 Let $\Psi_0^{(k)}(\tau)$, $k = 0, \pm 1$, be the functions defined in \eqref{localbehaviorforpsinear0}, \eqref{localbehaviorforpsinear1} and  \eqref{localbehaviorforpsinear-1}, respectively. Then, we have the following relations among $\Psi_0^{(k)}(\tau)$, the Hamiltonian $H(\tau)$ defined in \eqref{h}, and the function $y(\tau)$ appearing in the coefficient matrices \eqref{a0}-\eqref{a2}:
\begin{equation} \label{ypsi0}
  y(\tau)=\frac{\left(\Psi_0^{(0)}(\tau)\right)_{11}}{\left(\Psi_0^{(0)}(\tau)\right)_{21}},
\end{equation}
\begin{equation}\label{sH+ic}
  \frac{d}{d \tau}\left(\tau H(\tau)\right)=-\frac{\gamma }{2\pi i} \left( e^{\pi i (\alpha-\beta)}
  \frac{d}{d \tau}\left(\Psi_1^{(1)}(\tau)\right)_{21}+e^{-\pi i (\alpha-\beta)}
  \frac{d}{d \tau}\left(\Psi_1^{(-1)}(\tau)\right)_{21} \right).
 \end{equation}
In addition, let us define
\begin{equation} \label{dpsi0}
d(\tau) := 2\alpha\left(\Psi_0^{(0)}(\tau)\right)_{11}\left(\Psi_0^{(0)}(\tau)\right)_{21}.
\end{equation}
Then, the following differential relations hold
\begin{eqnarray}
\tau\frac{d}{d \tau}\ln y(\tau)&=&u_1(v_1-1)^2+u_2(v_2-1)^2+2\beta, \label{yuv} \\
\tau\frac{d}{d \tau}\ln d(\tau)&=&-u_1(v_1^2-1)-u_2(v_2^2-1)+2\alpha.  \label{duv}
\end{eqnarray}
\end{proposition}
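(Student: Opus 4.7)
The plan is to derive all four identities directly from the Lax pair \eqref{laxpair} together with the local expansions \eqref{localbehaviorforpsinear0}--\eqref{localbehaviorforpsinear-1}. The key observation is that in each of those local expansions the factor multiplying $\Psi^{(k)}(z;\tau)$ on the right is $\tau$-independent, so the $\tau$-equation $\partial_\tau \Psi = U\Psi$ descends to the locally analytic prefactors $\Psi^{(k)}$. This single mechanism produces all four identities, with differing amounts of book-keeping.

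For the algebraic identity \eqref{ypsi0}, I would compare the two expressions \eqref{a0phi0a0} and \eqref{a0} for the residue matrix $A_0$. From \eqref{a0} a one-line computation shows that the eigenvector of $A_0$ associated to the eigenvalue $+\alpha$ is proportional to $(y(\tau),1)^{T}$, while \eqref{a0phi0a0} identifies this same eigenvector (up to scalar) with the first column of $\Psi_0^{(0)}(\tau)$. Equating the two descriptions gives $y(\tau)=(\Psi_0^{(0)})_{11}/(\Psi_0^{(0)})_{21}$.

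For \eqref{yuv} and \eqref{duv}, specialising $\partial_\tau \Psi^{(0)}=U(z;\tau)\Psi^{(0)}$ to $z=0$ yields $\frac{d}{d\tau}\Psi_0^{(0)}(\tau) = \mathcal{B}(\tau)\Psi_0^{(0)}(\tau)$. Since $\mathcal{B}$ is off-diagonal, this immediately produces $\frac{d}{d\tau}\ln(\Psi_0^{(0)})_{11} = \mathcal{B}_{12}/y$ and $\frac{d}{d\tau}\ln(\Psi_0^{(0)})_{21} = \mathcal{B}_{21}\,y$, where I have already invoked \eqref{ypsi0}. Subtracting these two relations and substituting the explicit entries of $\mathcal{B}$ from \eqref{b} yields \eqref{yuv} after a short algebraic simplification; adding them yields \eqref{duv} in the same way.

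The identity \eqref{sH+ic} is the most delicate. By \eqref{h} we have $\tau H(\tau) = -\frac{\tau}{2}(\Psi_1)_{11} - (\alpha^2-\beta^2)$, so I need to relate derivatives of the infinity data $(\Psi_1)_{11}$ to the subleading local data $(\Psi_1^{(\pm 1)})_{21}$. My strategy is to expand the $z$-Lax equation near $z=\infty$ at orders $z^{-1}$ and $z^{-2}$, which yields the sum-rule $\frac{\tau}{4}[\sigma_3,\Psi_1]+A_0+A_1+A_2+\beta\sigma_3=0$ and an expression for $(\Psi_1)_{11}$ in terms of $A_{0},A_{1},A_{2}$ and $\Psi_2$; in parallel, applying $\partial_\tau \Psi^{(\pm 1)} = U\Psi^{(\pm 1)}$ near $z=\pm 1$ and matching the first two coefficients gives ODEs for $\Psi_0^{(\pm 1)}$ and $\Psi_1^{(\pm 1)}$ in terms of $\mathcal{B}$. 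Combining these pieces with the $\gamma$-dependent parametrisations \eqref{a1intermsofpsi}--\eqref{a2intermsofpsi} of $A_{1},A_{2}$ reduces both sides of \eqref{sH+ic} to the same expression in the coordinates $(u_k,v_k,y)$. The main obstacle is precisely this identity: while the first three identities are essentially one-line consequences of the $\tau$-Lax equation at a single point, proving \eqref{sH+ic} requires coordinating information from the three expansion points $z=\pm 1$ and $z=\infty$ and tracking the $\gamma$-dependent residues through the computation. Since the calculations parallel those in \cite{Xu:Zhao2020} for $\gamma=1$, I expect the deformation parameter to enter only through the explicit $\gamma$-dependent prefactors in $A_{1}$ and $A_{2}$, with no structural change to the argument.
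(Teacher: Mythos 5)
Your proposal is correct. For \eqref{ypsi0}, \eqref{yuv} and \eqref{duv} it coincides in substance with the paper: the paper reads \eqref{ypsi0} off from $y=((A_0)_{11}+\alpha)/(A_0)_{21}$ together with \eqref{a0phi0a0}, which is exactly your eigenvector observation, and it obtains \eqref{yuv}--\eqref{duv} from the same mechanism you use, namely that the $\tau$-equation of the Lax pair descends to the analytic prefactor and gives $\frac{d}{d\tau}\Psi_0^{(0)}=\mathcal{B}\Psi_0^{(0)}$ (the paper then cites the $\gamma=1$ computation in Xu--Zhao, whereas you carry out the short add/subtract step with the explicit entries of \eqref{b}; I checked that this reproduces \eqref{yuv} and \eqref{duv}). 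The one place where your route genuinely differs is the first half of \eqref{sH+ic}. The paper gets $\frac{d}{d\tau}(\tau H)=-\frac{1}{2}(u_1v_1-u_2v_2)=\frac{1}{2}\bigl((A_1)_{11}-(A_2)_{11}\bigr)$ directly from the Hamiltonian structure: the only explicit $\tau$-dependence in $\tau H$ from \eqref{sh}--\eqref{hv} is $-\frac{\tau}{2}(u_1v_1-u_2v_2)$, and along the flow \eqref{dh} the total derivative equals that partial derivative; then \eqref{a1}--\eqref{a2} identify it with the residues. You instead propose to reach the same intermediate identity from $H=-\frac{1}{2}(\Psi_1)_{11}-\frac{\alpha^2-\beta^2}{\tau}$ in \eqref{h}, using the order-$z^{-1}$ and $z^{-2}$ coefficients of the $z$-equation at infinity (your sum rule and the relation expressing $(\Psi_1)_{11}$ through $A_1-A_2$ and $(\Psi_1)_{12}(\Psi_1)_{21}$) combined with the order-$z^{-1}$ coefficient of the $\tau$-equation at infinity, which gives $\frac{d}{d\tau}(\Psi_1)_{11}$ as a multiple of $(\Psi_1)_{12}(\Psi_1)_{21}$; the quadratic terms then cancel and one lands on $\frac{1}{2}\bigl((A_1)_{11}-(A_2)_{11}\bigr)$ without ever invoking the explicit coupled system \eqref{p5system}. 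This works, and it is arguably more self-contained at the level of the Lax pair, at the price of more coefficient bookkeeping; be warned that the cancellation of the $(\Psi_1)_{12}(\Psi_1)_{21}$ terms is sensitive to the relative signs of the two infinity expansions (derive the relation between $\mathcal{B}$ and $\Psi_1$ yourself from \eqref{infinitybehaviorforpsi} and \eqref{u} rather than quoting \eqref{b-Psi} verbatim, since a sign slip there produces a spurious extra term). The second half of your argument for \eqref{sH+ic} -- the $\tau$-equation expanded at $z=\pm 1$ giving $\frac{d}{d\tau}\Psi_1^{(\pm 1)}=\frac{1}{4}\bigl(\Psi_0^{(\pm 1)}\bigr)^{-1}\sigma_3\Psi_0^{(\pm 1)}$, combined with the $\gamma$-dependent parametrisations \eqref{a1intermsofpsi}--\eqref{a2intermsofpsi} -- is exactly the paper's step \eqref{psi1intermsofpsi0}, so there the two proofs agree.
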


\begin{proof}

It is obvious from the explicit expression of $A_0(\tau)$ in \eqref{a0} that
  \begin{equation}
  y(\tau)=\frac{(A_0(\tau))_{11}+\alpha}{(A_0(\tau))_{21}}.
 \end{equation}
Since $A_0(\tau)$ is related to $\Psi_0^{(0)}(\tau)$ via the equation \eqref{a0phi0a0}, we get \eqref{ypsi0} from the above formula.

To derive \eqref{sH+ic}, we first make use of the second equation in the Lax pair \eqref{laxpair} to obtain
\begin{eqnarray}
\frac{d}{d \tau}\Psi_0^{(0)}(\tau) & = & \mathcal{B}(\tau)\Psi_0^{(0)}(\tau), \label{psi0b} \\
  \frac{d}{d \tau}\Psi_1^{(k)}(\tau) & = & \frac{1}{4}\left(\Psi_0^{(k)}(\tau)\right)^{-1}\sigma_3\Psi_0^{(k)}(\tau), \qquad k=1,-1.
  \label{psi1intermsofpsi0}
\end{eqnarray}
Next, with the differential equations in \eqref{p5system} and the Hamiltonian system \eqref{dh}, a straightforward computation gives us
\begin{eqnarray}
  \frac{d}{d \tau}(\tau H(\tau)) &= & -\frac{1}{2}(u_1(\tau)v_1(\tau)-u_2(\tau)v_2(\tau)) \\
  & = & \frac{1}{2}(A_1(\tau))_{11} - \frac{1}{2} (A_2(\tau))_{11},
\end{eqnarray}
where the explicit expressions of $A_1$ and $A_2$ in \eqref{a1} and \eqref{a2} are also used in the last equation.
We note the following identities from \eqref{a1intermsofpsi}, \eqref{a2intermsofpsi} and \eqref{psi1intermsofpsi0}:
 \begin{equation}
  \begin{aligned}
   (A_1(\tau))_{11}=-\frac{\gamma e^{\pi i (\alpha-\beta)}}{\pi i}
   \frac{d}{d \tau}\left(\Psi_1^{(1)}(\tau)\right)_{21}, \quad
   (A_2(\tau))_{11}=\frac{\gamma e^{-\pi i (\alpha-\beta)}}{\pi i}
   \frac{d}{d \tau}\left(\Psi_1^{(-1)}(\tau)\right)_{21}.
  \end{aligned}
 \end{equation}
 Combining the above two formulas, we have \eqref{sH+ic}.

 The derivation of the differential system \eqref{yuv} and \eqref{duv} is the same as that of \cite[Eq. (2.28)\&(2.30)]{Xu:Zhao2020}. This completes the proof of the proposition.
\end{proof}

The following differential identities of the Hamiltonian play a crucial role in the future derivation of the large gap asymptotics, especially the constant term.
\begin{proposition} \label{prop: diff-iden}
 The  Hamiltonian $H=H(u_1,u_2,v_1,v_2,\tau;\alpha,\beta, \gamma)$ is  equivalent  to  the  classical action up to a total differential
 \begin{equation}\label{inth}
  H=\left( u_1\frac{dv_1}{d \tau}+u_2\frac{dv_2}{d \tau}-H\right)
  +\frac{d}{d \tau}(\tau H+\alpha \ln d(\tau)-\beta \ln y(\tau)-2(\alpha^2-\beta^2)\ln \tau).
 \end{equation}
The  Hamiltonian  system implies the following  differential formula for the classical action
 with respect to $\gamma$:
 \begin{equation}\label{differentialidentitylambda}
  \frac{\partial}{\partial \gamma}\left(u_1\frac{dv_1}{d \tau}+u_2\frac{dv_2}{d \tau}-H\right)=
  \frac{\partial}{\partial \tau}\left(u_1\frac{\partial v_1}{\partial \gamma}+u_2\frac{\partial v_2}{\partial \gamma}\right)
 \end{equation}
\end{proposition}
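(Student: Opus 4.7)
The proposition consists of two Noether-type identities for the coupled Painlev\'e V Hamiltonian \eqref{sh}, and the plan is to prove them separately.

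For the first identity \eqref{inth}, I would rearrange it into the equivalent form
\[
2H - u_1\frac{dv_1}{d\tau} - u_2\frac{dv_2}{d\tau} = \frac{d}{d\tau}\left(\tau H + \alpha\ln d(\tau) - \beta\ln y(\tau) - 2(\alpha^2-\beta^2)\ln \tau\right).
\]
On the left, Hamilton's equations $dv_k/d\tau = \partial H/\partial u_k$ turn the derivatives into $2H - u_1\partial_{u_1}H - u_2\partial_{u_2}H$. On the right, along any Hamiltonian flow $dH/d\tau = \partial_\tau H$, so $\frac{d}{d\tau}(\tau H) = H + \tau\,\partial_\tau H$, while the derivatives of $\ln y$ and $\ln d$ are supplied by \eqref{yuv} and \eqref{duv}. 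After these substitutions the identity collapses to the purely algebraic Euler-type relation
\[
H - u_1\partial_{u_1}H - u_2\partial_{u_2}H - \tau\,\partial_\tau H = -\frac{\alpha}{\tau}\bigl(u_1(v_1^2-1) + u_2(v_2^2-1)\bigr) - \frac{\beta}{\tau}\bigl(u_1(v_1-1)^2 + u_2(v_2-1)^2\bigr),
\]
which I would verify directly from \eqref{sh}--\eqref{hv}. The structural reason this works is that each Painlev\'e V piece $\tfrac12 H_V(u_k,v_k,\pm\tau/2,\alpha,\beta)$ contributes exactly its share of the $\alpha$- and $\beta$-dependent right-hand side (the $\pm$ from $\pm\tau/2$ reconciling with the overall sign in \eqref{sh}), while the coupling $C := \tau^{-1}u_1u_2(v_1+v_2)(v_1-1)(v_2-1)$ satisfies $(u_1\partial_{u_1} + u_2\partial_{u_2} + \tau\partial_\tau)C = C$, hence drops out of the left-hand side of the Euler relation entirely.

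For \eqref{differentialidentitylambda}, the crucial observation is that the explicit form of $H$ in \eqref{sh}--\eqref{hv} has no explicit dependence on $\gamma$; the $\gamma$-dependence of $H$ along the flow enters solely through $u_k(\tau;\gamma)$ and $v_k(\tau;\gamma)$, which are singled out by the $\gamma$-dependent jump matrices $J_6, J_7$ in RH problem \ref{modelrhp}. Writing the Lagrangian $L = u_1 v_1' + u_2 v_2' - H$ and differentiating in $\gamma$, I obtain
\[
\frac{\partial L}{\partial \gamma} = \sum_{k=1}^2\left(\frac{\partial u_k}{\partial \gamma}\,v_k' + u_k\frac{\partial v_k'}{\partial \gamma} - \partial_{u_k}H\cdot\frac{\partial u_k}{\partial \gamma} - \partial_{v_k}H\cdot\frac{\partial v_k}{\partial \gamma}\right),
\]
with no explicit $\partial_\gamma H$ term. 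Hamilton's equation $v_k' = \partial_{u_k}H$ cancels the first and third summands for each $k$, and $u_k' = -\partial_{v_k}H$ rewrites the fourth as $u_k'\partial_\gamma v_k$, so that the remaining $u_k\partial_\gamma v_k' + u_k'\partial_\gamma v_k = \partial_\tau(u_k\partial_\gamma v_k)$ sum to the right-hand side of \eqref{differentialidentitylambda}.

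The main obstacle is the bookkeeping in the scaling identity of Part~1: the Hamiltonian \eqref{sh} is a sum of terms with differing homogeneities, so I must track the $\alpha$- and $\beta$-pieces of $H_V$ through the computation and verify that the $u^2v(v-1)^2$ and $\tau uv$ contributions cancel cleanly, while the $\alpha u(v^2-1)$ and $\beta u(v-1)^2$ pieces reassemble precisely as the right-hand side demands. By contrast, the variational argument for \eqref{differentialidentitylambda} is essentially formal once one notes the absence of explicit $\gamma$-dependence in $H$.
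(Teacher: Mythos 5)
Your proposal is correct. For the second identity \eqref{differentialidentitylambda} you do exactly what the paper does: the paper's proof says it is ``a straightforward computation with the aid of \eqref{dh}'', and your computation — differentiating $L=u_1v_1'+u_2v_2'-H$ in $\gamma$, using that $H$ in \eqref{sh}--\eqref{hv} has no explicit $\gamma$-dependence (cf.\ Remark \ref{remark-lambda}) so that Hamilton's equations cancel the $\partial_\gamma u_k$ terms and convert $-\partial_{v_k}H\,\partial_\gamma v_k$ into $u_k'\,\partial_\gamma v_k$ — is precisely that computation, with only the tacit interchange $\partial_\gamma\partial_\tau v_k=\partial_\tau\partial_\gamma v_k$ left implicit. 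For the first identity \eqref{inth}, however, the paper does not prove it at all: it simply cites \cite[Proposition 3]{Xu:Zhao2020} (legitimate because the Hamiltonian and the system \eqref{p5system} are $\gamma$-independent), whereas you give a self-contained verification. Your reduction is sound: using $v_k'=\partial_{u_k}H$, $\frac{d}{d\tau}(\tau H)=H+\tau\partial_\tau H$ along the flow, and the relations \eqref{yuv}, \eqref{duv} for $\ln y$ and $\ln d$, the identity collapses to
\begin{equation*}
H-u_1\partial_{u_1}H-u_2\partial_{u_2}H-\tau\partial_\tau H
=-\tfrac{\alpha}{\tau}\bigl(u_1(v_1^2-1)+u_2(v_2^2-1)\bigr)
-\tfrac{\beta}{\tau}\bigl(u_1(v_1-1)^2+u_2(v_2-1)^2\bigr),
\end{equation*}
and writing $H=\tau^{-1}P-\tfrac12u_1v_1+\tfrac12u_2v_2$ with $P$ the $\tau$-free polynomial part one checks the left side equals $\tau^{-1}(2P-u_1\partial_{u_1}P-u_2\partial_{u_2}P)$, which picks out exactly the degree-one (in $u$) $\alpha$- and $\beta$-terms; the $2\alpha^2-2\beta^2$ constants from \eqref{yuv}--\eqref{duv} cancel against the $-2(\alpha^2-\beta^2)\ln\tau$ term, and the coupling term drops out as you observe. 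So your argument buys a proof that does not outsource \eqref{inth} to the $\gamma=1$ reference, at the cost of the bookkeeping you describe — a reasonable trade, and the bookkeeping does close.
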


\begin{proof}
The equation \eqref{inth} is the same as that in \cite[Proposition 3]{Xu:Zhao2020}. The differential identity \eqref{differentialidentitylambda} is a straightforward computation with the aid of \eqref{dh}. This completes the proof.
\end{proof}

\subsection{Vanishing lemma}

As mentioned in Remark \ref{remark-lambda}, although the deforming parameter does not change the differential system and the Lax pair, the existence and asymptotics of solutions to the model RH problem is not trivial at all. We will first establish the existence of the solution via proving the vanishing lemma below, then investigate their detailed asymptotics in Sections \ref{Sec: large-s} and \ref{Sec: small-s}.

\begin{lemma}
 For $\alpha > -\frac{1}{2}$, $\beta \in i \mathbb{R}$ and $\tau \in -i(0, +\infty)$, let $\hat{\Psi}(z;\tau)$ be a function  satisfying the same problem as $\Psi(z;\tau)$ in RH problem \ref{modelrhp}, except that the behavior at infinity is altered to be
 \begin{equation}
  \hat{\Psi}(z;\tau)=O\left(\frac{1}{z}\right)z^{-\beta \sigma_3}e^{\frac{\tau z}{4}\sigma_3}, \qquad \textrm{as } z \to \infty.
 \end{equation}
 Then, we have $\hat{\Psi}(z;\tau)\equiv 0$.
\end{lemma}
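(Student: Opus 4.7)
The plan is to adapt the standard vanishing lemma argument for Riemann--Hilbert problems, as carried out for $\gamma = 1$ in \cite{Xu:Zhao2020}, to the thinned regime $\gamma \in [0,1)$. Write $\tau = -is$ with $s > 0$. The key structural ingredients are the parameter symmetries $\bar\alpha = \alpha$, $\bar\beta = -\beta$ (since $\beta \in i\R$) and $\bar\gamma = \gamma$, the symmetry of the contour $\cup_{i=1}^7 \Sigma_i$ under $z \mapsto \bar z$, and the strict positivity of $1-\gamma$.

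The strategy is to construct an auxiliary matrix-valued function $A(z)$, built from $\hat\Psi(z)$ and $\hat\Psi(\bar z)^*$ (possibly with a piecewise-constant twist to accommodate the diagonal jump on $\Sigma_4$ and the branches of $z^{-\beta\sigma_3}$), that (i) is analytic in $\C$ minus a finite set of points, (ii) has at worst removable singularities at those points, and (iii) vanishes sufficiently fast at infinity. An entire function vanishing at infinity is identically zero, from which $\hat\Psi \equiv 0$ will be extracted via a Hermitian positive semidefiniteness argument on a piece of the real axis where $\hat\Psi$ is single-valued.

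The main obstacle is verifying analyticity of $A$ across the seven jump contours. The triangular jumps on the conjugate pairs $\Sigma_1 \leftrightarrow \Sigma_5$ and $\Sigma_2 \leftrightarrow \Sigma_3$ match through direct computation, using the relations $\bar\alpha = \alpha$ and $\bar\beta = -\beta$ to pair off the off-diagonal entries. The diagonal jump $J_4 = e^{2\pi i\beta\sigma_3}$ on the negative imaginary axis is neutralised by absorbing the branch discontinuity of $z^{-\beta\sigma_3}$ into the definition of $A$, which precisely cancels $J_4$. The decisive step is the matching across $\Sigma_6, \Sigma_7$, where $J_6, J_7$ admit a factorisation of the form ``anti-diagonal reflection'' times ``unipotent with coefficient $\pm (1-\gamma)e^{\pm \pi i(\alpha-\beta)}$''; the anti-diagonal factor realises the Schwarz-reflection matching needed for $A$, while the non-negative coefficient $1-\gamma \geq 0$ produces the positive semidefinite Hermitian correction that closes the matching. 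It is precisely here that the hypothesis $\gamma \in [0,1)$ enters in a non-trivial way.

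Once $A(z)$ is shown to be entire, its behaviour at infinity is controlled by using $\tau + \bar\tau = 0$ and $\beta + \bar\beta = 0$: the oscillating exponentials $e^{\tau z / 4\sigma_3}$ and the branch factors $z^{-\beta\sigma_3}$ appearing in $\hat\Psi(z)$ and in $\hat\Psi(\bar z)^*$ pair off into bounded matrices, leaving $A(z) = O(1/|z|^2)$ as $z \to \infty$. The endpoint behaviours in items (d)--(f) of RH problem \ref{modelrhp} imply that $A$ is at worst integrably singular at $0, \pm 1$, hence has removable singularities there. Liouville's theorem then yields $A(z) \equiv 0$. Specialising to $x \in (1,+\infty)$, where $\hat\Psi$ is analytic and single-valued, gives $\hat\Psi(x)\hat\Psi(x)^* = 0$; Hermitian positive semidefiniteness forces $\hat\Psi(x) = 0$ there. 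Analytic continuation through the invertible ($\det J_i = 1$ for all $i$) jump matrices then propagates $\hat\Psi \equiv 0$ to every region $\Omega_i$, completing the proof.
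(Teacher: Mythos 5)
There is a genuine gap at the heart of your argument: the claim that $\hat\Psi(z)\hat\Psi(\bar z)^*$ (even after a piecewise-constant twist) extends to an entire function. Across a contour $\Sigma$ with jump $J$, writing $A(z)=\hat\Psi(z)\hat\Psi(\bar z)^*$, one finds $A_+-A_-=\hat\Psi_-\left(J-\widetilde J^{\,*}\right)\hat\Psi_-^{\,*}$, where $\widetilde J$ is the jump on the conjugate contour; analyticity would require pointwise Hermitian compatibility of the jumps. That fails here: with $\beta\in i\mathbb{R}$ one has $J_5^*=\begin{pmatrix}1&0\\-e^{-\pi i(\alpha+\beta)}&1\end{pmatrix}\neq J_1$ (equality would force $e^{2\pi i\beta}=-1$), and on $\Sigma_6,\Sigma_7$ the matrices $J_6,J_7$ are not Hermitian -- the $(2,2)$-entry $1-\gamma$ is precisely a nonzero Hermitian \emph{discrepancy} $J_6-J_6^*\neq 0$, i.e.\ an obstruction to $A_+=A_-$, not something that ``closes the matching.'' Positivity of $1-\gamma$ is useful only inside an integral identity, not for analytic continuation, so the Liouville step collapses and with it the conclusion $\hat\Psi(x)\hat\Psi(x)^*=0$ on $(1,\infty)$. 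Moreover the constant twists you allow cannot repair this: the paper first performs a genuinely $z$-dependent transformation (triangular factors containing $e^{\mp\tau z/2}$ and $\varphi(z)^{\pm2\beta}$ with $\varphi(z)=z+\sqrt{z^2-1}$, together with $\varphi(z)^{\beta\sigma_3}e^{-\tau z\sigma_3/4}$) that removes all jumps off $\mathbb{R}$ and normalizes infinity, and only then forms $Q(z)=\hat\Psi^{(1)}(z)\bigl(\hat\Psi^{(1)}(\bar z)\bigr)^*$, which is analytic off $\mathbb{R}$; Cauchy's theorem (not Liouville) gives $\int_{\mathbb{R}}Q_+(x)\,dx=0$, and adding the Hermitian conjugate brings in $(\hat J^{-1})^*+\hat J^{-1}$, which is positive semidefinite (diagonal with entries $2$ or $2(1-\gamma)e^{-2\theta(x)i\beta}$, and $0$).

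A second, independent gap: even in the correctly executed integral-positivity argument, the Hermitian part of the jump has rank one, so one only concludes that the \emph{first column} of $\hat\Psi^{(1)}$ vanishes (in the upper half-plane, and the second column in the lower half-plane via the jump). Your proposal has no counterpart of the additional step needed to kill the remaining entries; the paper does this by introducing scalar functions $g_k(z)$ built from the surviving entries and applying Carlson's theorem. Note also that invertibility-based shortcuts are unavailable: since $\hat\Psi=O(1/z)z^{-\beta\sigma_3}e^{\tau z\sigma_3/4}$ and all jumps are unimodular, Liouville applied to $\det\hat\Psi$ gives $\det\hat\Psi\equiv0$, so one can never divide by $\hat\Psi(\bar z)^*$. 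Your observations about the decay of the product at infinity (cancellation of $e^{\pm\tau z\sigma_3/4}$ and $z^{\mp\beta\sigma_3}$ using $\bar\tau=-\tau$, $\bar\beta=-\beta$) and about the endgame via propagation through invertible jumps are fine, but the core of the proof -- reduction of the contour to $\mathbb{R}$, the Cauchy-plus-positivity identity, and the Carlson's theorem step -- is missing or replaced by a step that does not hold.
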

\begin{proof}
We first normalize the behavior of $\hat{\Psi}(z;\tau)$ at infinity and reduce the jump contour to the real line. For this purpose, we introduce a transformation as follows:
\begin{equation}
  \hat{\Psi}^{(1)}(z;\tau)=\hat{\Psi}(z;\tau)e^{-\frac{1}{2}\pi i \beta \sigma_3}\varphi(z)^{\beta \sigma_3}e^{-\frac{\tau z}{4}\sigma_3}
  \left\{
  \begin{aligned}
    & \begin{pmatrix} 1 & 0 \\ e^{-\pi i\alpha}\varphi(z)^{2\beta}e^{-\frac{1}{2}\tau z} & 1 \end{pmatrix}, & \Im z>0, \,  z \in \Omega_5, \\
    & \begin{pmatrix} 1 & e^{\pi i\alpha}\varphi(z)^{-2\beta}e^{\frac{1}{2}\tau z} \\ 0 & 1 \end{pmatrix}, & \Im z<0, \,  z \in \Omega_5, \\
    & \begin{pmatrix} 1 & 0 \\ e^{\pi i(\alpha-2\beta)}\varphi(z)^{2\beta}e^{-\frac{1}{2}\tau z} & 1 \end{pmatrix}, & \Im z>0, \,  z \in \Omega_2, \\
    & \begin{pmatrix} 1 & e^{-\pi i(\alpha-2\beta)}\varphi(z)^{-2\beta}e^{\frac{1}{2}\tau z} \\ 0 & 1 \end{pmatrix}, & \Im z<0, \,  z \in \Omega_2, \\
    & I,                          & \mathrm{otherwise},
  \end{aligned}
  \right.
 \end{equation}
where $\varphi(z)$ is defined as
 \begin{equation}
  \varphi(z):=z+\sqrt{z^2-1}, \qquad z \in \mathbb{C} \setminus [-1,1],
 \end{equation}
and $\sqrt{z^2-1}$ takes the principal branch. For $x\in(-1,1)$,  we have $\varphi_+(x)\varphi_-(x)=1$. Note that the branch of $z^\beta$ is taken as in \eqref{infinitybehaviorforpsi} such that $\arg z \in (-\frac{\pi}{2}, \frac{3\pi}{2})$.
Then, we have a RH problem for $\hat{\Psi}^{(1)}(z;\tau)$ as follows.

 \begin{rhp}
  \hfill
  \begin{itemize}
   \item [(a)] $\hat{\Psi}^{(1)}(z) = \hat{\Psi}^{(1)}(z;\tau)$ is analytic for $z \in \mathbb{C} \setminus \mathbb{R}$.
   \item [(b)] For $x \in \mathbb{R}$, we have
         \begin{equation} \label{jumpforpsi1}
          \hat{\Psi}_+^{(1)}(x)=\hat{\Psi}_-^{(1)}(x)\hat{J}(x)
         \end{equation}
         with
         \begin{equation}
          \hat{J}(x)=
          \begin{cases}
            \begin{pmatrix} 0 & -e^{\pi i\alpha}\varphi(x)^{-2\beta}e^{\frac{1}{2}\tau x} \\ e^{-\pi i\alpha}\varphi(x)^{2\beta}e^{-\frac{1}{2}\tau x} & 1 \end{pmatrix}, & x>1,    \\
            \begin{pmatrix} 0 & -e^{\pi i\alpha}e^{\frac{1}{2}\tau x} \\ e^{-\pi i\alpha}e^{-\frac{1}{2}\tau x} & (1-\gamma)\left(\frac{\varphi_+(x)}{\varphi_-(x)}\right)^{-\beta} \end{pmatrix}, & 0<x<1,  \\
            \begin{pmatrix} 0 & -e^{-\pi i\alpha}e^{\frac{1}{2}\tau x} \\ e^{\pi i\alpha}e^{-\frac{1}{2}\tau x} & (1-\gamma)\left(\frac{\varphi_+(x)}{\varphi_-(x)}\right)^{-\beta} \end{pmatrix}, & -1<x<0, \\
            \begin{pmatrix} 0 & -e^{-\pi i\alpha}|\varphi(x)|^{-2\beta}e^{\frac{1}{2}\tau x} \\ e^{\pi i\alpha}|\varphi(x)|^{2\beta}e^{-\frac{1}{2}\tau x} & 1 \end{pmatrix}, & x<-1,   \\
          \end{cases}
         \end{equation}
         where  $\alpha > -\frac{1}{2}$, $\beta \in i \mathbb{R}$ and $\tau \in -i(0, +\infty)$.

   \item [(c)] When $z \to \infty$, we have
         \begin{equation}\label{infinitybehaviorforpsi1}
          \hat{\Psi}^{(1)}(z)=O\left(\frac{1}{z}\right).
         \end{equation}

   \item [(d)] When $z \to 0$ and $\Im z > 0$,   we have
         \begin{equation}\label{localbehaviorforpsi1near0}
          \hat{\Psi}^{(1)}(z)=O\left(1\right) z^{\alpha \sigma_3}
          \begin{cases}
          \begin{pmatrix} 1 & (1-\gamma) \frac{\sin(\alpha + \beta)\pi}{\sin 2\alpha \pi}) \\ 0 & 1 \end{pmatrix}, & 2\alpha \notin \mathbb{N}, \\
          \begin{pmatrix} 1 & \frac{(-1)^{2\alpha}(1-\gamma)}{\pi} \sin(\alpha + \beta)\pi \ln z \\ 0 & 1 \end{pmatrix}, & 2\alpha \in \mathbb{N}.
          \end{cases}
         \end{equation}
The behavior for $\Im z < 0$ can be obtained via the jump condition \eqref{jumpforpsi1}.
   \item [(e)] When $z \to \pm1$, we have
         \begin{equation}\label{localbehaviorforpsi1near+-1}
          \hat{\Psi}^{(1)}(z)=O\left(\ln (z\pm 1)\right).
         \end{equation}
  \end{itemize}
 \end{rhp}

Next, let us introduce
 \begin{equation}
  Q(z):=\hat{\Psi}^{(1)}(z)\left(\hat{\Psi}^{(1)}(\bar{z})\right)^{*}, \qquad z \in \mathbb{C} \setminus \mathbb{R},
 \end{equation}
 where $(\cdot)^{*}$ represents the Hermitian conjugate. Consider the following integral
 \begin{equation}\label{q+q*}
  \int_{\mathbb{R}} \left[ Q_+(x)+(Q_+(x))^* \right] dx=
  \int_{\mathbb{R}}\left[ \hat{\Psi}_{+}^{(1)}(x)\left(\left(\hat{J}^{-1}(x)\right)^{*}+\hat{J}^{-1}(x)\right)\left(\hat{\Psi}_{+}^{(1)}(x)\right)^{*} \right] dx.
 \end{equation}
Since  $\varphi_+(x)=\overline{\varphi_-(x)} = e^{i \theta(x)}$ for $x \in (-1,1)$ with $\theta (x)=\arg \varphi_+(x)$, then the second term in the above integrand reduces to
 \begin{equation}\label{j+j*}
  \left(\hat{J}^{-1}(x)\right)^{*}+\hat{J}^{-1}(x)=
  \left\{
  \begin{aligned}
    & \begin{pmatrix} 2 & 0 \\ 0 & 0 \end{pmatrix}, & x<-1, \ x>1, \\
    & \begin{pmatrix} 2(1-\gamma)e^{-2\theta(x)i \beta} & 0 \\ 0 & 0 \end{pmatrix}, & -1<x<1.      \\
  \end{aligned}
  \right.
 \end{equation}
We will show that the integral in \eqref{q+q*} actually equals 0. Let us first consider the possible singularity at $x =\pm 1 $ and $x = 0$. It is easy to see from \eqref{localbehaviorforpsi1near+-1} that the integral is convergent at $x = \pm 1$. Near $x = 0$, as $\theta(0)$ is bounded, it is readily seen from  \eqref{localbehaviorforpsi1near0} and the above formula that, as $x \to 0$,
 \begin{equation*}
 \hat{\Psi}_{+}^{(1)}(x)\left(\left(\hat{J}^{-1}(x)\right)^{*}+\hat{J}^{-1}(x)\right)\left(\hat{\Psi}_{+}^{(1)}(x)\right)^{*}  =
 \begin{cases}
 O(x^{2\alpha}), & 2\alpha \notin \mathbb{N}, \\
 O(x^{2\alpha} \ln^2 x), & 2\alpha \in \mathbb{N},
 \end{cases}
 \end{equation*}
 Since $\alpha > -\frac{1}{2}$, the above estimation implies that the integral \eqref{q+q*} converges at $x = 0$ as well. Last, due to the large-$z$ behavior in \eqref{infinitybehaviorforpsi1}, the integrand is $O(1/x^2)$ as $x \to \pm \infty$. Therefore, we have
 \begin{equation}
  \int_{\mathbb{R}} \left[ \hat{\Psi}_{+}^{(1)}(x)\left(\left(\hat{J}^{-1}(x)\right)^{*}+\hat{J}^{-1}(x)\right)\left(\hat{\Psi}_{+}^{(1)}(x)\right)^{*} \right] dx=0.
 \end{equation}

  Next, we substitute \eqref{j+j*} into the above formula to get
{\small
\begin{align*}
    &  2\int_{1}^{\infty}
     \begin{pmatrix} \left|\hat{\Psi}_{11,+}^{(1)}(x)\right|^2 &
     \hat{\Psi}_{11,+}^{(1)}(x) \overline{\hat{\Psi}_{21,+}^{(1)}(x)} \\ \overline{\hat{\Psi}_{11,+}^{(1)}(x)} \hat{\Psi}_{21,+}^{(1)}(x) & \left|\hat{\Psi}_{21,+}^{(1)}(x)\right|^2\end{pmatrix} dx
    +
     2\int_{-\infty}^{-1}
     \begin{pmatrix} \left|\hat{\Psi}_{11,+}^{(1)}(x)\right|^2
     & \hat{\Psi}_{11,+}^{(1)}(x) \overline{\hat{\Psi}_{21,+}^{(1)}(x)} \\
     \overline{\hat{\Psi}_{11,+}^{(1)}(x)} \hat{\Psi}_{21,+}^{(1)}(x) &
     \left|\hat{\Psi}_{21,+}^{(1)}(x)\right|^2\end{pmatrix} dx    \\
    & \qquad \qquad
     +2(1-\gamma)\int_{-1}^{1} e^{-2\theta(x)i \beta}
     \begin{pmatrix} \left|\hat{\Psi}_{11,+}^{(1)}(x)\right|^2 &
     \hat{\Psi}_{11,+}^{(1)}(x) \overline{\hat{\Psi}_{21,+}^{(1)}(x)}  \\ \overline{\hat{\Psi}_{11,+}^{(1)}(x)} \hat{\Psi}_{21,+}^{(1)}(x) & \left|\hat{\Psi}_{21,+}^{(1)}(x)\right|^2\end{pmatrix} dx = 0.
\end{align*}}
Recall that $\beta \in i \mathbb{R}$, and the factor $e^{-2\theta(x)i \beta}$ is always positive. Then, we have from the above formula that $\hat{\Psi}_{11,+}^{(1)}(x)  = \hat{\Psi}_{21,+}^{(1)}(x) = 0$ for $x \in \mathbb{R}$. As a consequence, we conclude that the first column $\hat{\Psi}_{11}^{(1)}(z)=\hat{\Psi}_{21}^{(1)}(z)=0$ for $\Im z >0$ by analytic continuation.
With the jump conditions \eqref{jumpforpsi1}, it is easy to see that entries of the second column $\hat{\Psi}_{12}^{(1)}(z)$ and $\hat{\Psi}_{22}^{(1)}(z)$ also vanish for $\Im z <0$ via a similar argument.

To show that the other entries also vanish, we introduce the following auxiliary scalar functions
 \begin{equation}
  g_k(z)= \left\{
  \begin{aligned}
    & -\left(\hat{\Psi}^{(1)}(z)\right)_{k2}e^{-\frac{\tau z}{4}}\varphi(z)^{2\beta}e^{-\pi i \alpha}, \qquad & \Im z>0, \\
    & \left(\hat{\Psi}^{(1)}(z)\right)_{k1}e^{\frac{\tau z}{4}}, \qquad  & \Im z<0,
  \end{aligned}
  \right.
 \end{equation}
for $k=1,2$. One can see from  \eqref{jumpforpsi1} that $g_k(z)$ is analytic for $z \notin (-\infty, 1]$. Moreover, based on the large-$z$ behavior \eqref{infinitybehaviorforpsi1}, we have $g_k(z)=O(e^{-\frac{|\tau z|}{4}})$  for pure imaginary $z$. According to Carlson's theorem, we conclude that $g_k(z)=0$ for $k =1,2$.
This guarantees that the first and second column of $\hat{\Psi}^{(1)}(z)$ vanishes for $\Im z < 0$ and $\Im z > 0 $, respectively.

This finishes the proof of the vanishing lemma.
\end{proof}

By a standard analysis \cite{dkmvz1999,Fokas:Mu:Zhou1992,Fokas:Zhou1992}, the following proposition is an immediate consequence of the vanishing lemma, which gives us the existence of the unique solution for the model RH problem for $\Psi(z;\tau)$ and the associated Hamiltonian $H(\tau)$ in \eqref{h}.

\begin{proposition} \label{Prop:Psi-exist}
For $\alpha > -\frac{1}{2}$, $\beta \in i \mathbb{R}$ and $\tau \in -i(0, +\infty)$, there exists a unique solution to the RH problem \ref{modelrhp} for $\Psi(z;\tau)$. Moreover, the Hamiltonian $H(\tau)$ in \eqref{h} is pole-free for $\tau \in -i(0, +\infty).$
\end{proposition}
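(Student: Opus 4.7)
The plan is to deduce existence and uniqueness from the vanishing lemma by the now-standard equivalence, going back to Fokas--Zhou and developed in the form we need in \cite{dkmvz1999,Fokas:Mu:Zhou1992,Fokas:Zhou1992}, between a matrix RH problem with rapidly decaying data and a singular integral equation on its jump contour. Concretely, I first augment the contour $\cup_{i=1}^7 \Sigma_i$ into an oriented Carleson system $\Gamma$, and after multiplying $\Psi$ by $e^{-\tau z\sigma_3/4}z^{\beta\sigma_3}$ on the right to normalise to $I+O(1/z)$ at infinity, I rewrite the normalised function $\widetilde\Psi$ as a Cauchy integral
\begin{equation*}
\widetilde\Psi(z)=I+\frac{1}{2\pi i}\int_\Gamma \frac{\mu(\zeta)(J(\zeta)-I)}{\zeta-z}\,d\zeta,
\end{equation*}
where the density $\mu$ on $\Gamma$ solves a Fredholm-type singular integral equation of the form $(I-C_J)\mu=I$, with $C_J$ the Cauchy-type operator built from the (bounded, decaying) jump $J$. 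The local conditions \eqref{localbehaviorforpsinear0}--\eqref{localbehaviorforpsinear-1} guarantee that $J-I$ lies in the appropriate weighted $L^2$ space near the endpoints $0,\pm 1$ (algebraic/logarithmic singularities being integrable since $\alpha>-\tfrac12$ and $\beta\in i\mathbb R$), so $C_J$ is well-defined and Fredholm of index zero.

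The key input is that the vanishing lemma just proved identifies the kernel of $I-C_J$ with the trivial solution: any $\widehat\Psi$ satisfying the same jump conditions, local exponents and only the weaker decay $O(1/z)z^{-\beta\sigma_3}e^{\tau z\sigma_3/4}$ at infinity must vanish identically. Therefore $\ker(I-C_J)=\{0\}$, and Fredholm alternative yields a unique $\mu$, hence a unique solution $\Psi(z;\tau)$ to RH problem \ref{modelrhp}. This is exactly the mechanism set out in \cite{dkmvz1999} and in the proofs of analogous proposition statements cited in the text.

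Having established solvability for every $\tau\in-i(0,+\infty)$, the pole-free statement for the Hamiltonian is essentially a corollary. Indeed, the representation \eqref{h} writes $H(\tau)=-\tfrac12(\Psi_1(\tau))_{11}-(\alpha^2-\beta^2)/\tau$, where $\Psi_1(\tau)$ is the leading subdominant coefficient at infinity in \eqref{infinitybehaviorforpsi}. Since $\Psi(z;\tau)$ is uniquely determined for every $\tau$ on the open ray and depends analytically on $\tau$ (by the standard deformation argument: the jumps are $\tau$-independent once conjugated, and the Cauchy operator varies analytically), the coefficient $(\Psi_1(\tau))_{11}$ is a holomorphic function of $\tau\in-i(0,+\infty)$. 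The only potential source of poles of $H$ would be a breakdown of solvability of the RH problem at isolated $\tau$, which we have just excluded; the $(\alpha^2-\beta^2)/\tau$ term is analytic on the open ray. Hence $H(\tau)$ is pole-free on $-i(0,+\infty)$.

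The main obstacle I anticipate is making the functional-analytic framework clean in the presence of the endpoint singularities, in particular near the hard points $0$ and $\pm 1$ where the local expansions carry logarithms with coefficients depending on $\gamma$. One must choose the weighted Sobolev/Lebesgue spaces so that $C_J$ acts boundedly and the vanishing argument transfers, and verify that the singularity structure in \eqref{localbehaviorforpsinear0} (both regimes $2\alpha\in\mathbb N$ and $2\alpha\notin\mathbb N$) is compatible with the uniqueness statement of the vanishing lemma. All of this is routine once the contour is augmented and the local models are handled as in \cite{Fokas:Mu:Zhou1992,Fokas:Zhou1992}; rather than reproducing the argument I would simply cite the standard procedure and record the consequence as stated.
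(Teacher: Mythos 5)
Your proposal is correct and follows essentially the same route as the paper: the paper also deduces Proposition \ref{Prop:Psi-exist} directly from the vanishing lemma via the standard Fredholm/singular-integral-equation machinery of \cite{dkmvz1999,Fokas:Mu:Zhou1992,Fokas:Zhou1992}, together with analytic dependence on $\tau$ for the pole-free statement, and gives no further details beyond citing that machinery. Your outline simply makes the standard mechanism explicit.
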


\section{Asymptotic analysis of the RH problem for $Y(z)$}\label{analysisofy}

Now, we are ready to perform a Deift-Zhou nonlinear steepest descent analysis to derive asymptotics of $Y(z)$ as $n \to \infty$ and  $nt$ is bounded.  The first three steps in the analysis are the same as those in \cite[Sec. 4]{Dei:Its:Kra2011}, while the fourth step involves a local parametrix construction in terms of the function $\Psi(z;\tau)$ given in the previous section.

\subsection{Normalization}
To normalize the large-$z$ asymptotics of $Y$ in \eqref{eq: Y-large-z}, we simply introduce the first transformation as follows:
\begin{equation}\label{t}
 T(z)=
 \begin{cases}
   Y(z)z^{-n\sigma_3}, \qquad & |z|>1, \\
   Y(z), \qquad               & |z|<1.
 \end{cases}
\end{equation}
It is straightforward to check that $T(z)$ satisfies the following RH problem.
\begin{rhp}
 \quad 
 \begin{itemize}
  \item [(a)] $T(z)$ is analytic in $\mathbb{C} \setminus C$.
  \item [(b)] $T(z)$ satisfies the jump condition
        \begin{equation}\label{jumpfort}
         T_+(z)=T_-(z)
         \begin{pmatrix}
          z^n & w(z;t) \\ 0 & z^{-n}
         \end{pmatrix}, \qquad z \in  C.
        \end{equation}
  \item [(c)] As $z \to \infty$, we have
        \begin{equation}
         T(z)=I+O\left(\frac{1}{z}\right).
        \end{equation}
  \item [(d)] As $z \to z_i, \ i=1,2$, we have
        \begin{equation}
         T(z)=
         \begin{pmatrix}
          O(1) & O(\ln (z-z_i)) \\
          O(1) & O(\ln (z-z_i))
         \end{pmatrix}.
        \end{equation}
 \end{itemize}
\end{rhp}

\subsection{Opening of the lens}
For $z = e^{i\theta} \in C$, the diagonal entries in \eqref{jumpfort} are highly oscillatory as $n \to \infty$, which is not suitable for us to derive the large-$n$ asymptotics. A standard approach is to conduct contour deformation in the complex plane, such that the oscillatory jump matrix is transformed to new ones, which are either $n$-independent or tending to the identity matrix as $n \to \infty$. In our case, based on the factorization
\begin{equation}
 \begin{pmatrix}
  z^n & w(z) \\ 0 & z^{-n}
 \end{pmatrix}=
 \begin{pmatrix}
  1 & 0 \\ z^{-n}w(z)^{-1} & 1
 \end{pmatrix}
 \begin{pmatrix}
  0 & w(z) \\ -w(z)^{-1} & 0
 \end{pmatrix}
 \begin{pmatrix}
  1 & 0 \\ z^{n}w(z)^{-1} & 1
 \end{pmatrix},
\end{equation}
we define the second transformation as
\begin{equation}\label{s}
 S(z)=
 \begin{cases}
   T(z)\begin{pmatrix}  1 & 0 \\ z^{-n}w(z)^{-1} & 1  \end{pmatrix}, \qquad & z \in \Omega_E,     \\
   T(z)\begin{pmatrix}  1 & 0 \\ -z^{n}w(z)^{-1} & 1  \end{pmatrix}, \qquad & z \in \Omega_I,     \\
   T(z),  & \mathrm{otherwise},
 \end{cases}
\end{equation}
where $\Omega_E$ and $\Omega_I$ denote  regions outside and inside unit circle, respectively; see Figure \ref{figure10} for an illustration. Then, $S(z)$ satisfies the following RH problem.

\begin{figure}
 \centering
 \includegraphics[width=3.8in]{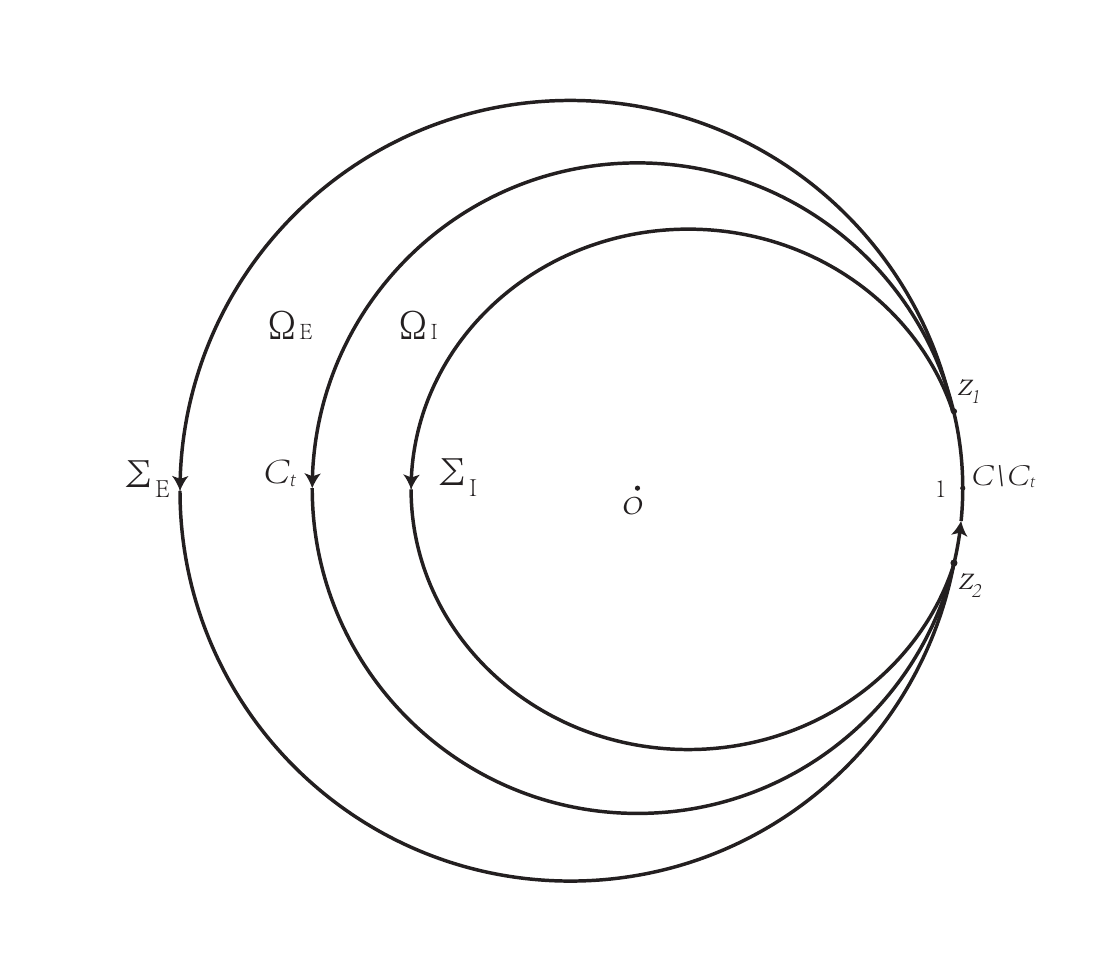}
 \caption{Contours for the RH problem for $S$ and the regions $\Omega_E$ and $\Omega_I$}
 \label{figure10}
\end{figure}

\begin{rhp}
 \quad 
 \begin{itemize}
  \item [(a)] $S(z)$ is analytic in $\mathbb{C} \setminus \{ C \cup \Sigma_E \cup \Sigma_I \} $; see Figure \ref{figure10}.
  \item [(b)] $S(z)$ satisfies the jump condition
        \begin{equation}
         S_+(z)=S_-(z)J_S(z), \qquad z \in C \cup \Sigma_E \cup \Sigma_I,
        \end{equation}
        where
        \begin{equation}\label{jumpfors}
         J_S(z)=\begin{cases}
         \begin{pmatrix}  1 & 0 \\ z^{-n}\tilde{w}(z)^{-1} & 1 \end{pmatrix}, \qquad & z \in \Sigma_E,    \\
         \begin{pmatrix}  1 & 0 \\ z^{n}\tilde{w}(z)^{-1} & 1 \end{pmatrix}, \qquad & z \in \Sigma_I,   \\
         \begin{pmatrix}  0 & \tilde{w}(z) \\ -\tilde{w}(z)^{-1} & 0 \end{pmatrix}, \qquad & z \in C_t,   \\
         \begin{pmatrix}  z^n & (1-\gamma)\tilde{w}(z) \\ 0 & z^{-n} \end{pmatrix}, \qquad & z \in C \setminus C_t,
         \end{cases}
        \end{equation}
        and $\tilde{w}(z)$ is defined in \eqref{def: w-tilde}. For convenience, we denote the jump matrices on $\Sigma_E, \Sigma_I$ as $J_{S,E}, J_{S,I}$, respectively.
  \item [(c)] As $z \to \infty$, we have
        \begin{equation}
         S(z)=I+O\left(\frac{1}{z}\right).
        \end{equation}
  \item [(d)] As $z \to z_i, \ i=1,2$, we have
        \begin{equation}
         S(z)=
         \begin{pmatrix}
          O(1) & O(\ln (z-z_i)) \\
          O(1) & O(\ln (z-z_i))
         \end{pmatrix}.
        \end{equation}
 \end{itemize}
\end{rhp}

\subsection{Global parametrix}

As $|z| > 1$ on $\Sigma_E$ and $|z| < 1$ on $\Sigma_I$, it is obvious that both $J_{S,E}(z)$ and $J_{S,I}(z) $ tend to the identity matrix when $n \to \infty$. While the endpoints $z_1 = e^{it}$ and $z_2=e^{i(2\pi -t)}$ shrink to 1 when $nt$ is bounded, we are led to a global parametrix for $P^{(\infty)}(z)$ which only possesses jump on the unit circle.

\begin{rhp}
 \quad 
 \begin{itemize}
  \item [(a)] $P^{(\infty)}(z)$ is analytic in $\mathbb{C} \setminus C$.
  \item [(b)] $P^{(\infty)}(z)$ satisfies the jump condition
        \begin{equation}
         P^{(\infty)}_+(z)=P^{(\infty)}_-(z)
         \begin{pmatrix}  0 & \tilde{w}(z) \\ -\tilde{w}(z)^{-1} & 0 \end{pmatrix}, \qquad z \in C,
        \end{equation}
        with $\tilde{w}(z)$ given in \eqref{def: w-tilde}.
  \item [(c)] As $z \to \infty$, we have
        \begin{equation}
         P^{(\infty)}(z)=I+O\left(\frac{1}{z}\right).
        \end{equation}
 \end{itemize}
\end{rhp}

The above RH problem can be solved explicitly with the solution given by (cf. \cite[Eq. (6.12)]{Xu:Zhao2020})
\begin{equation}\label{Pinfinity}
 P^{(\infty)}(z)
 =\left\{
 \begin{aligned}
    & \left(\frac{z-1}{z}\right)^{(\beta - \alpha)\sigma_3}, \qquad   & |z|>1, \\
    & (z-1)^{(\beta+\alpha)\sigma_3}e^{-\pi i (\alpha+\beta)\sigma_3}
  \begin{pmatrix}0 & 1 \\ -1 & 0\end{pmatrix}, \qquad & |z|<1,
 \end{aligned}
 \right.
\end{equation}
where the branch of $z^{\alpha-\beta}$ is taken along $[0,+\infty)$ such that $\arg z \in (0,2\pi)$, and the branches of $(z-1)^{\alpha+\beta}$ and $(z-1)^{\beta-\alpha}$ are taken along $[1,+\infty)$ such that $\arg (z-1) \in (0,2\pi)$.

\subsection{Local parametrix}

Since the jump of $S(z)$ on $ C \setminus C_t$ has been ignored in the global parametrix construction, $P^{(\infty)}(z)$ clearly cannot approximate $S(z)$ when $z$ is close to the endpoints $z_1$ and $z_2$. Note that both $z_1$ and $z_2$ belong to a neighborhood $U(1,\delta) =\{ z: |z-1|<\delta \}$ of $z = 1$, with $\delta$ small but fixed. Therefore, we look for a function $P^{(1)}(z)$ satisfying the following RH problem in $U(1,\delta)$.

\begin{rhp}\label{rhpforP1}
 \quad 
 \begin{itemize}
  \item [(a)] $P^{(1)}(z)$ is analytic in $U(1,\delta) \setminus \{ C \cup \Sigma_E \cup \Sigma_I \}$; see Figure \ref{figure10} for the contours.

  \item [(b)] $P^{(1)}(z)$ satisfies the jump condition
   \begin{equation}
     P_+^{(1)}(z)=P_-^{(1)}(z) J_S(z), \qquad z \in U(1,\delta) \cap \{ C \cup \Sigma_E \cup \Sigma_I \},
    \end{equation}
    where $J_S(z)$ is given in \eqref{jumpfors}.
   \item [(c)] As $n \to \infty$, we have the matching condition
    \begin{equation}\label{matchconditionforP1}
      P^{(1)}(z)=\left(I+O\left(\frac{1}{n}\right)\right)P^{(\infty)}(z), \qquad  z \in \partial U(1,\delta).
    \end{equation}
 \end{itemize}
\end{rhp}

We solve the above RH problem explicitly in terms of the model RH problem \ref{modelrhp} for $\Psi$. First, let us introduce the following conformal mapping near  $z=1$:
\begin{equation}\label{conformalmapping}
 \zeta(z):=\frac{e^{-\frac{\pi i}{2}}}{t}\ln z, \qquad z \in U(1,\delta),
\end{equation}
where the principal branch of $\ln z$ is taken. Clearly, we have $\zeta(z_1)=1$ and $\zeta(z_2)=-1$.

\begin{lemma}
Let $\Psi(z;\tau)$ be the solution to RH problem \ref{modelrhp} with properties specified in Proposition \ref{Prop:Psi-exist}. Then, the solution to  RH problem \ref{rhpforP1} is given by
 \begin{equation}\label{P1}
  P^{(1)}(z)=E(z)\Psi(\zeta(z);-2int)z^{\frac{n}{2}\sigma_3}
  \begin{cases}
   e^{\frac{1}{2}(\alpha-\beta) \pi i\sigma_3}\tilde{w}(z)^{\frac{\sigma_3}{2}}
   \begin{pmatrix} 0 & -1 \\ 1 & 0 \end{pmatrix},
   &  |z|>1, \Im z>0,    \\
   e^{\frac{1}{2}(\alpha-\beta) \pi i\sigma_3}\tilde{w}(z)^{-\frac{\sigma_3}{2}},
   &   |z|<1, \Im z>0,    \\
   e^{-\frac{1}{2}(\alpha-\beta) \pi i\sigma_3}\tilde{w}(z)^{-\frac{\sigma_3}{2}},
   &   |z|<1, \Im z<0,    \\
   e^{-\frac{1}{2}(\alpha-\beta) \pi i\sigma_3}\tilde{w}(z)^{\frac{\sigma_3}{2}}
   \begin{pmatrix} 0 & -1 \\ 1 & 0 \end{pmatrix},
   &  |z|>1, \Im z<0,
  \end{cases}
 \end{equation}
where the principal branch of $z^{\frac{n}{2}}$ is taken. The prefactor $E(z)$ is an analytic function in $U(1,\delta)$ and defined as
 \begin{equation}\label{E}
  E(z)=z^{\frac{\alpha-\beta}{2}\sigma_3}\left(\frac{\zeta(z)}{z-1}\right)^{-\beta \sigma_3} \begin{cases}
    e^{-\beta \pi i  \sigma_3}\begin{pmatrix} 0 & 1 \\ -1 & 0 \end{pmatrix},  & \qquad \Im z>0, \\
    e^{-\alpha \pi i  \sigma_3}\begin{pmatrix} 0 & 1 \\ -1 & 0 \end{pmatrix}, & \qquad \Im z<0,
  \end{cases}
 \end{equation}
where the branch of $z^{\frac{\alpha-\beta}{2}}$ is taken along $[0,\infty)$ such that $\arg z \in (0, 2\pi)$, and the principal branch of $(\cdot)^{-\beta}$ is taken.
\end{lemma}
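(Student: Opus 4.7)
The plan is to verify directly that the explicit formula \eqref{P1} solves RH problem \ref{rhpforP1} by checking each of its three defining conditions. The construction follows the standard local parametrix recipe: one dresses the model solution $\Psi$ by the conformal change of variable $\zeta(z)$, a piecewise jump-matching factor $F(z)$, and an analytic left prefactor $E(z)$ that enforces the outer matching. As a preliminary I would record the mapping properties of $\zeta(z)=-it^{-1}\ln z$: it is biholomorphic on $U(1,\delta)$ with $\zeta(1)=0$, $\zeta(z_1)=1$, $\zeta(z_2)=-1$, and it sends the arc $C\setminus C_t$ to the real segment $(-1,1)$ (matching $\Sigma_7\cup\Sigma_6$), the arcs of $C_t$ emanating from $z_1,z_2$ to $(1,\infty)$ and $(-\infty,-1)$ (where the model has no jump), and the opened lens contours $\Sigma_E,\Sigma_I$, after a harmless local homotopy, to the rays $\Sigma_1,\Sigma_2,\Sigma_3,\Sigma_5$ of Figure \ref{figure1}.

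Next I would verify jump matching. Writing $P^{(1)}(z)=E(z)\,\Psi(\zeta(z);-2int)\,F(z)$ with $F(z)$ the piecewise factor in \eqref{P1}, and granting analyticity of $E$ from the next step, all jumps of $P^{(1)}$ come from $\Psi\circ\zeta$ and from $F$. For each contour one checks the identity $F_-(z)^{-1}J_i\,F_+(z)=J_S(z)$ between the model jumps \eqref{jumpforpsi} and those of $S$ in \eqref{jumpfors}. The key algebraic inputs are the branch relations of $\tilde w(z)^{\pm 1/2}$ across $[0,1)$ and $[1,\infty)$ inside $U(1,\delta)$, together with the identity $z^n=e^{n\ln z}$; the latter combines with the exponential $e^{-int\zeta(z)/2}$ appearing inside $\Psi$ via \eqref{infinitybehaviorforpsi} to reproduce the oscillatory $(1,1)$- and $(2,2)$-entries of the $C\setminus C_t$ jump of $S$.

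The third step is to show that $E(z)$ in \eqref{E} is actually analytic on $U(1,\delta)$. The formula involves the multi-valued factors $z^{(\alpha-\beta)/2}$ (with cut along $[0,\infty)$) and $(\zeta(z)/(z-1))^{-\beta}$ (with cut along $[1,\infty)$), both of which cross the real segment through $z=1$. One must check that the piecewise-constant difference $e^{-\beta\pi i\sigma_3}$ in the upper half-plane versus $e^{-\alpha\pi i\sigma_3}$ in the lower half-plane absorbs exactly the combined branch jumps of these factors across $(1-\delta,1+\delta)$; a direct computation using the conventions $\arg z\in(0,2\pi)$ and $\arg(z-1)\in(0,2\pi)$ does this. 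I expect this step to be the most bookkeeping-heavy, since several branch choices interact and together force the precise form of the two halves of \eqref{E}.

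Finally, for the matching condition \eqref{matchconditionforP1}, I would insert the large-$\zeta$ expansion \eqref{infinitybehaviorforpsi} with $\tau=-2int$. The exponential $e^{\tau\zeta/4\,\sigma_3}=z^{-n/2\,\sigma_3}$ telescopes with the $z^{n/2\,\sigma_3}$ inside $F(z)$; the factor $\zeta^{-\beta\sigma_3}$ combines with $(\zeta/(z-1))^{-\beta\sigma_3}$ in $E(z)$ to produce $(z-1)^{-\beta\sigma_3}$; and together with the residual prefactors of $E$ and $F$ the leading term reconstructs the global parametrix \eqref{Pinfinity}. The correction $I+\Psi_1(\tau)\zeta^{-1}+\cdots$ from \eqref{infinitybehaviorforpsi} then yields an error of size $O(1/|\zeta(z)|)$; since $|\zeta(z)|$ is of order $\delta/t$ on $\partial U(1,\delta)$ and $1/t$ is of order $n/(nt)$ with $nt$ bounded, this gives the required $I+O(1/n)$. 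The main obstacle is Step 3; once $E$ is shown analytic, Steps 2 and 4 reduce to systematic algebra using the asymptotic and jump data already recorded in RH problem \ref{modelrhp}.
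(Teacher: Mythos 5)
Your proposal is correct and follows essentially the same route as the paper's proof: direct verification that the piecewise factor conjugates the model jumps \eqref{jumpforpsi} into $J_S$ in \eqref{jumpfors}, plus the matching condition obtained from the large-$\zeta$ expansion \eqref{infinitybehaviorforpsi}, using $e^{\frac{\tau\zeta(z)}{4}\sigma_3}=z^{-\frac{n}{2}\sigma_3}$ for $\tau=-2int$ and $\zeta(z)=O(n)$ on $\partial U(1,\delta)$ when $nt$ is bounded. The only cosmetic differences are that the paper cites the analyticity of $E(z)$ from \cite[Proposition 6]{Xu:Zhao2020} instead of redoing the branch-cut bookkeeping you outline, and it records explicitly that $E(z)=O(1)$ (since $\beta\in i\mathbb{R}$), which is the small point you leave implicit when converting the $O(1/\zeta)$ correction into the uniform $O(1/n)$ of \eqref{matchconditionforP1}.
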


\begin{proof}
The analyticity of $E(z)$ has been proved in \cite[Proposition 6]{Xu:Zhao2020}. Next, let us verify the jump condition of $P^{(1)}(z)$. For $z \in C \setminus C_t$ and $\Im z > 0$, it follows from in \eqref{conformalmapping} that $0<\zeta(z) < 1$. Note that $\tilde{w}(z)$ is analytic on $C$; see its definition in \eqref{def: w-tilde}. Then, we have from \eqref{jumpforpsi} and \eqref{P1} that
 \begin{eqnarray*}
    && \left(P_-^{(1)}(z)\right)^{-1}P_+^{(1)}(z)  \\
    & & =  \begin{pmatrix} 0 & 1 \\ -1 & 0 \end{pmatrix}\tilde{w}(z)^{-\frac{\sigma_3}{2}}e^{-\frac{1}{2}(\alpha-\beta) \pi i\sigma_3}z^{-\frac{n}{2}\sigma_3}
    \begin{pmatrix} 0 & -e^{\pi i(\alpha-\beta)} \\ e^{-\pi i(\alpha-\beta)} & 1-\gamma \end{pmatrix}
    z^{\frac{n}{2}\sigma_3}
    e^{\frac{1}{2}(\alpha-\beta) \pi i\sigma_3}\tilde{w}(z)^{\frac{\sigma_3}{2}}
     \\
    &  & =  \begin{pmatrix}  z^n & (1-\gamma) \tilde{w}(z) \\ 0 & z^{-n} \end{pmatrix},
 \end{eqnarray*}
 which matches the jump of $S(z)$ for $z \in C \setminus C_t$ in \eqref{jumpfors}. When $z$ belongs to the remaining contours of $U(1,\delta) \cap \{\Sigma_E \cup \Sigma_I \cup C\}$, the jump condition can be verified in a similar way.

Finally, let us verify the matching condition \eqref{matchconditionforP1}. As $nt$ is bounded when $n \to \infty$, one can see $\zeta(z) = O(n)$ for $z \in \partial U(1,\delta)$. When $|z| > 1$ and $\Im z > 0$, it follows from \eqref{infinitybehaviorforpsi}, \eqref{P1} and \eqref{E} that
 \begin{equation}
    P^{(1)}(z)\left(P^{(\infty)}(z)\right)^{-1}
  = E(z)\left(I+O\left(\frac{1}{n}\right)\right)\begin{pmatrix} 0 & -1 \\ 1 & 0 \end{pmatrix}\zeta^{\beta \sigma_3}e^{-\frac{\pi i}{2}(\alpha - \beta)\sigma_3}\tilde{w}(z)^{-\frac{\sigma_3}{2}}
  \left(P^{(\infty)}(z)\right)^{-1}.
\end{equation}
With the definition of $P^{(\infty)}(z)$ in \eqref{Pinfinity}, we have
 \begin{equation}
   \begin{pmatrix} 0 & -1 \\ 1 & 0 \end{pmatrix}\zeta^{\beta \sigma_3}e^{-\frac{\pi i}{2}(\alpha - \beta)\sigma_3}\tilde{w}(z)^{-\frac{\sigma_3}{2}}
   \left(P^{(\infty)}(z)\right)^{-1}
   = E^{-1}(z).
 \end{equation}
Although $E(z)$ depends on $t$, one can see $E(z)=O(1)$ as $n \to \infty$ and $nt$ bounded since $\beta$ is a pure imaginary parameter. This justifies the matching condition \eqref{matchconditionforP1} when $|z| > 1$ and $\Im z > 0$. For $z$ in other sectors, the verification is similar. This completes our proof.
\end{proof}

\subsection{Final transformation}

Let us define the final transformation as
\begin{equation}\label{finalr}
 R(z)=
  \begin{cases}
   S(z)\left(P^{(\infty)}(z)\right)^{-1}, & \qquad  |z-1|>\delta,  \\
   S(z)\left(P^{(1)}(z)\right)^{-1},      & \qquad  |z-1|<\delta.
  \end{cases}
\end{equation}
It is easy to see that $R(z)$ satisfies the following RH problem.
\begin{rhp}\label{rhpforR}
 \quad 
 \begin{itemize}
  \item [(a)] $R(z)$ is analytic in $\mathbb{C} \setminus \Sigma_R$, where $\Sigma_R$ is illustrated in Figure \ref{figure12}.

  \begin{figure}
   \centering
   \includegraphics[width=3in]{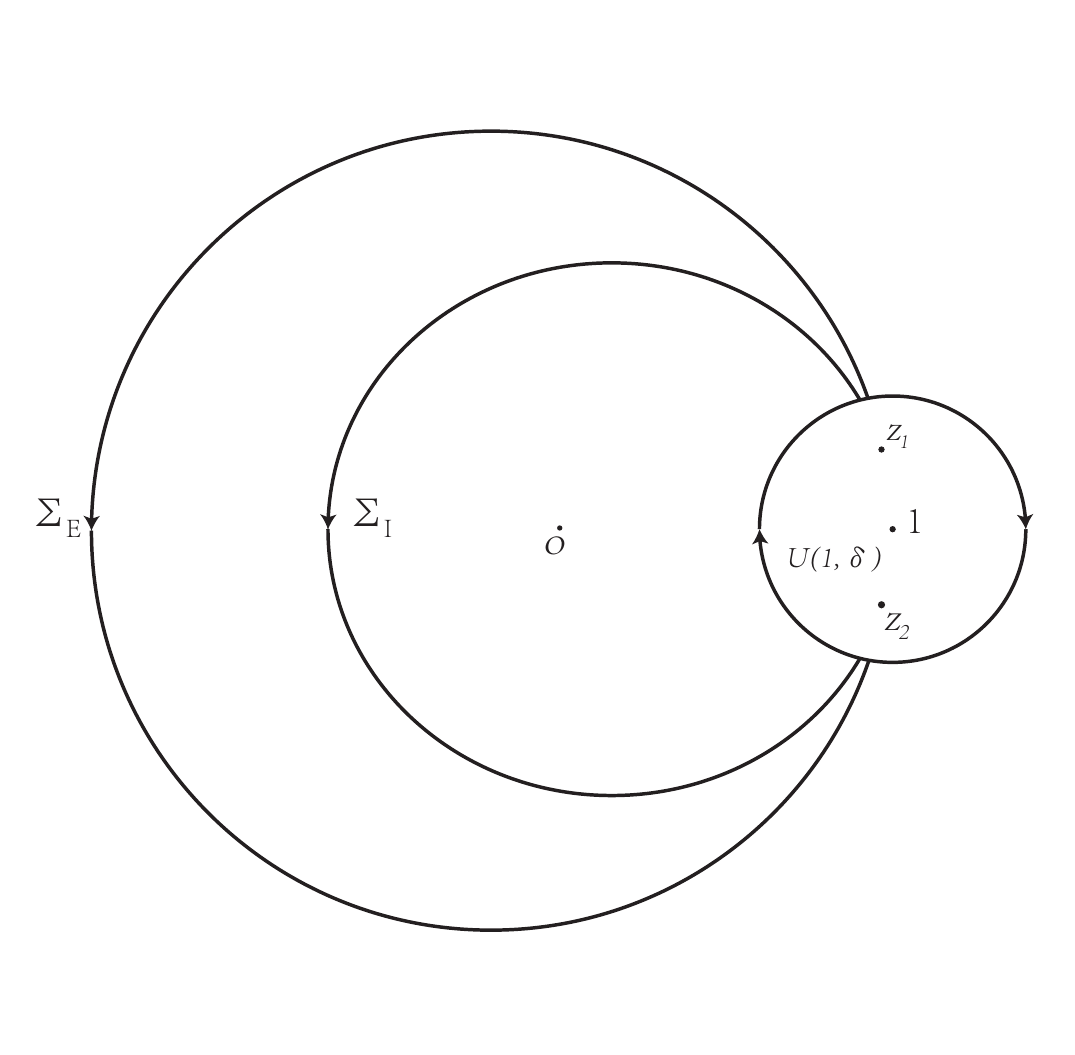}
   \caption{Contours for the RH problem for $R(z)$}
   \label{figure12}
  \end{figure}

  \item [(b)] $R(z)$ satisfies the jump condition
        \begin{equation}\label{jumpforR}
         R_+(z)=R_-(z)J_R(z), \qquad z \in \Sigma_R,
        \end{equation}
        where
        \begin{equation} \label{eq: JR-form}
          J_R(z)=
          \begin{cases}
            P^{(1)}(z)\left(P^{(\infty)}(z)\right)^{-1}, \qquad  & |z-1|=\delta, \\
            P^{(\infty)}(z)J_{S,E}(z)\left(P^{(\infty)}(z)\right)^{-1}, \qquad & z \in \Sigma_E, |z-1|>\delta,   \\
            P^{(\infty)}(z)J_{S,I}(z)\left(P^{(\infty)}(z)\right)^{-1}, \qquad & z \in \Sigma_I, |z-1|>\delta.
         \end{cases}
        \end{equation}
  \item [(c)] As $z \to \infty$, we have
        \begin{equation}
         R(z)=I+O\left(\frac{1}{z}\right).
        \end{equation}
 \end{itemize}
\end{rhp}

For $z \in \Sigma_{E,I}$ and $|z-1| > \delta$, one can readily see from \eqref{jumpfors}, \eqref{Pinfinity} and \eqref{eq: JR-form} that, there exists a positive constant $c > 0$ such that
\begin{equation}
 J_R(z)= I + O \left( e^{-cn}\right), \qquad n \to \infty.
\end{equation}
For $|z-1| = \delta$, we have from the matching condition \eqref{matchconditionforP1} that
\begin{equation}\label{JR}
 J_R(z)=I + O \left(\frac{1}{n}\right).
\end{equation}
Therefore, RH problem \ref{rhpforR} is a small norm problem when $n$ is large. By a standard argument for RH problems \cite{Deift1999book},
we conclude that $R$ exists for sufficiently large $n$ and has the following asymptotics
\begin{equation}\label{Rinfinity}
 R(z)=I+O\left(\frac{1}{n}\right), \qquad \frac{d}{dz}R(z)=O\left(\frac{1}{n}\right),
\end{equation}
as $n \to \infty$, uniformly for $z \in \mathbb{C} \setminus \Sigma_R $.

This finishes the steepest descent analysis for RH problem \ref{Y-RHP} for $Y$.

\subsection{The Toeplitz determinant and the Hamiltonian}

Let us conclude this section by establishing a relation between the logarithmic derivative of the Toeplitz determinant $D_n(t)$ and the Hamiltonian $H(\tau)=H(\tau; \alpha, \beta)$ in \eqref{sh} when $n$ is large enough.

Tracing back the transformations $Y(z) \mapsto T(z) \mapsto S(z) \mapsto R(z)$ in \eqref{t}, \eqref{s} and \eqref{finalr}, we have
\begin{equation}
 Y(z)=R(z)E(z)\Psi(\zeta(z); -2int)  e^{\pm\frac{1}{2}(\alpha-\beta)\pi i \sigma_3} z^{\frac{n}{2}\sigma_3}  \tilde{w}(z)^{-\frac{\sigma_3}{2}}, \quad \pm \Im z >0
\end{equation}
for $|z| < 1$ and $z \in U(1,\delta)$. With the explicit formula for $E(z)$ in \eqref{E} and the approximation in \eqref{Rinfinity}, as $n \to \infty$, we get
\begin{equation}
   \left(Y^{-1}(z)\frac{d}{dz}Y(z)\right)_{21}=\frac{z^{n-1}}{it\tilde{w}(z)}e^{\pm(\alpha-\beta)\pi i}\left(\Psi^{-1}(\zeta)\frac{d}{d\zeta}\Psi(\zeta)\right)_{21}+  O(1) , \quad \pm \Im z>0.
\end{equation}
Substituting above equation into the differential identity \eqref{differentialindentityfordn}, we have, as $n \to \infty$,
\begin{eqnarray}
  \frac{d}{dt}\ln D_n(t)
   & = & \notag - \frac{\gamma e^{(\alpha-\beta)\pi i}}{2 \pi i t}
   \lim_{\zeta \to 1}\left((\Psi_+(\zeta))^{-1}\frac{d}{d\zeta}\Psi_+(\zeta)\right)_{21}       \\
   &  & \notag -\frac{\gamma e^{-(\alpha-\beta)\pi i}}{2 \pi i t}
   \lim_{\zeta \to -1}\left((\Psi_+(\zeta))^{-1}\frac{d}{d\zeta}\Psi_+(\zeta)\right)_{21}+O(1)   \\
   & = & -\frac{\gamma}{2\pi it}\left(e^{(\alpha-\beta)\pi i}\Psi_1^{(1)}(-2int)+e^{-(\alpha-\beta)\pi i}\Psi_1^{(-1)}(-2int)\right)_{21}+O(1),\label{integral2}
\end{eqnarray}
where $\Psi_1^{(1)}(\cdot)$ and $\Psi_1^{(-1)}(\cdot)$ are functions given by the local behaviors of $\Psi(z;\tau)$ as $z \to \pm 1$; see \eqref{localbehaviorforpsinear1} and \eqref{localbehaviorforpsinear-1}. As $\tau = -2int$, recalling the relation among $H(\tau)$, $\Psi_1^{(1)}(\tau)$ and $\Psi_1^{(-1)}(\tau)$ in \eqref{sH+ic}, we have from the above formula
\begin{equation}\label{differentialfordn}
 \frac{d}{dt}\ln D_n(t) = -2in H(\tau) + n \mathcal{L} +  O(1) , \qquad n \to \infty,
\end{equation}
uniformly for $nt$ bounded, with the constant $\mathcal{L}$ given by
\begin{equation}\label{constantl}
\mathcal{L}:= 2i  \lim_{i \tau \to 0^+} \frac{1}{\tau}\left[  \tau H(\tau) + \frac{\gamma}{2 \pi i }  \left( e^{\pi i (\alpha-\beta)}
 \left(\Psi_1^{(1)}(\tau)\right)_{21}+e^{-\pi i (\alpha-\beta)}\left(\Psi_1^{(-1)}(\tau)\right)_{21} \right) \right].
\end{equation}
The existence of the above limit will be established in Section \ref{sec: main-proof}.

\section{Asymptotic analysis of the model RH problem as $i \tau \to +\infty$} \label{Sec: large-s}

To obtain our final asymptotics for the deformed Fredholm determinant, we need asymptotics of $H(\tau)$ as $i \tau \to +\infty$ and $i \tau \to 0^+$; see the relation among $\det (I - \gamma \mathcal{K}^{(\alpha,\beta)}_s)$, $D_n(t)$ and $H(\tau)$ in \eqref{fredholmandtoeplitz} and \eqref{differentialfordn}. Based on \eqref{h}, this can be done by performing the steepest descent analysis for the model RH problem \ref{modelrhp} for $\Psi(z;\tau)$ as $i \tau \to +\infty$ and $i \tau \to 0^+$. Besides, to determine the constant term, asymptotics of the functions $u_k(\tau)$, $v_k(\tau)$, $k=1,2$ are also needed. In the present section, we will consider the case when $i \tau \to +\infty$.

\subsection{Normalization}
We first normalize the behavior of $\Psi(z;\tau)$ at infinity with the transformation below
\begin{equation}\label{a}
 A(z)=\Psi(z;\tau)e^{-\frac{\tau z}{4}\sigma_3}.
\end{equation}
Then, $A(z)$ satisfies the following RH problem.
\begin{rhp}
\quad 
 \begin{itemize}
  \item [(a)] $A(z)$ is analytic on $z \in \mathbb{C} \setminus \{\cup_{i=1}^7 \Sigma_i \}$, where the oriented contours are the same as those shown in Figure \ref{figure1}.

  \item [(b)] $A(z)$ satisfies the jump condition
              \begin{equation}\label{jumpfora}
               A_+(z)=A_-(z)J_A(z), \qquad z \in \cup_{i=1}^7 \Sigma_i,
              \end{equation}
              where
              \begin{equation}
               J_A(z) =
               \left\{
               \begin{aligned}
                 & \begin{pmatrix} 1 & 0 \\ e^{-\pi i(\alpha-\beta)}e^{-\frac{\tau z}{2}} & 1 \end{pmatrix},  & z \in \Sigma_1, \\
                 & \begin{pmatrix} 1 & 0 \\ e^{\pi i(\alpha-\beta)}e^{-\frac{\tau z}{2}} & 1 \end{pmatrix},  & z \in \Sigma_2, \\
                 & \begin{pmatrix} 1 & -e^{-\pi i(\alpha-\beta)}e^{\frac{\tau z}{2}} \\ 0 & 1 \end{pmatrix},  & z \in \Sigma_3, \\
                 & e^{2\pi i \beta \sigma_3},   & z \in \Sigma_4, \\
                 & \begin{pmatrix} 1 & -e^{\pi i(\alpha-\beta)}e^{\frac{\tau z}{2}} \\ 0 & 1 \end{pmatrix}, & z \in \Sigma_5, \\
                 & \begin{pmatrix} 0 & -e^{\pi i(\alpha-\beta)}e^{\frac{\tau z}{2}} \\ e^{-\pi i(\alpha-\beta)}e^{-\frac{\tau z}{2}} & 1-\gamma \end{pmatrix}, & z \in \Sigma_6, \\
                 & \begin{pmatrix} 0 & -e^{-\pi i(\alpha-\beta)}e^{\frac{\tau z}{2}} \\ e^{\pi i(\alpha-\beta)}e^{-\frac{\tau z}{2}} & 1-\gamma \end{pmatrix}, & z \in \Sigma_7. \\
               \end{aligned}
               \right.
              \end{equation}
              For convenience, we denote the jump matrices as $J_{A,i}$ on $\Sigma_i$.

  \item [(c)] As $z \to \infty$, we have
        \begin{equation}\label{infinitybehaviorfora}
         A(z)=\left(I + O\left(\frac{1}{z}\right)\right)z^{-\beta\sigma_3},
        \end{equation}
        where the branch of $z^{\beta}$ is taken as \eqref{infinitybehaviorforpsi} such that $\arg z \in (-\frac{\pi}{2}, \frac{3\pi}{2})$.

  \item [(d)] The local behaviors of $A(z)$ near $\pm1, \ 0$ are the same as $\Psi(z;\tau)$; cf. \eqref{localbehaviorforpsinear0}, \eqref{localbehaviorforpsinear1} and \eqref{localbehaviorforpsinear-1}.
 \end{itemize}
\end{rhp}

\subsection{Opening of the lens}
As $i \tau \to +\infty$, the factors $e^{-\frac{\tau z}{2}}$ and $e^{\frac{\tau z}{2}}$ in $ J_{A,6}(z)$ and $ J_{A,7}(z)$ are highly oscillatory for $z \in (-1,1)$, namely $z \in \Sigma_6 \cup \Sigma_7$. To eliminate the highly oscillatory terms, we deform contours near $(-1,1)$ by opening lens. This is based on the observation that both $J_{A,6}(z)$ and $J_{A,7}(z)$ can be factorized as
\begin{eqnarray}
 J_{A,6}(z)
  &=& \begin{pmatrix} 1 & -\frac{1}{1-\gamma}e^{\pi i(\alpha-\beta)}e^{\frac{\tau z}{2}} \\ 0 & 1 \end{pmatrix}
  \begin{pmatrix} \frac{1}{1-\gamma} & 0 \\ 0 & 1-\gamma \end{pmatrix}
  \begin{pmatrix} 1 & 0 \\ \frac{1}{1-\gamma}e^{-\pi i(\alpha-\beta)}e^{-\frac{\tau z}{2}}  & 1 \end{pmatrix}, \\
   J_{A,7}(z)
  &= & \begin{pmatrix} 1 & -\frac{1}{1-\gamma}e^{-\pi i(\alpha-\beta)}e^{\frac{\tau z}{2}} \\ 0 & 1 \end{pmatrix}
  \begin{pmatrix} \frac{1}{1-\gamma} & 0 \\ 0 & 1-\gamma \end{pmatrix}
  \begin{pmatrix} 1 & 0 \\ \frac{1}{1-\gamma}e^{\pi i(\alpha-\beta)}e^{-\frac{\tau z}{2}}  & 1 \end{pmatrix}.
\end{eqnarray}
From the above decomposition, the second transformation is defined as
\begin{equation}\label{openlens}
 B(z)=A(z) \left\{
 \begin{aligned}
   & \begin{pmatrix} 1 & 0 \\ -\frac{1}{1-\gamma}e^{-\pi i(\alpha-\beta)}e^{-\frac{\tau z}{2}}  & 1 \end{pmatrix}, & z \in \Omega_6,     \\
   & \begin{pmatrix} 1 & -\frac{1}{1-\gamma}e^{\pi i(\alpha-\beta)}e^{\frac{\tau z}{2}} \\ 0 & 1 \end{pmatrix}, & z \in \Omega_7,     \\
   & \begin{pmatrix} 1 & 0 \\ -\frac{1}{1-\gamma}e^{\pi i(\alpha-\beta)}e^{-\frac{\tau z}{2}}  & 1 \end{pmatrix}, & z \in \Omega_8,     \\
   & \begin{pmatrix} 1 & -\frac{1}{1-\gamma}e^{-\pi i(\alpha-\beta)}e^{\frac{\tau z}{2}} \\ 0 & 1 \end{pmatrix}, & z \in \Omega_9,     \\
   & I ,                          & \mathrm{otherwise},
 \end{aligned}
 \right.
\end{equation}
where the regions $\Omega_i,\ i = 6,7,8,9$, are depicted in Figure \ref{figure2}.

\begin{figure}[h]
 \centering
 \includegraphics[width=5.2in]{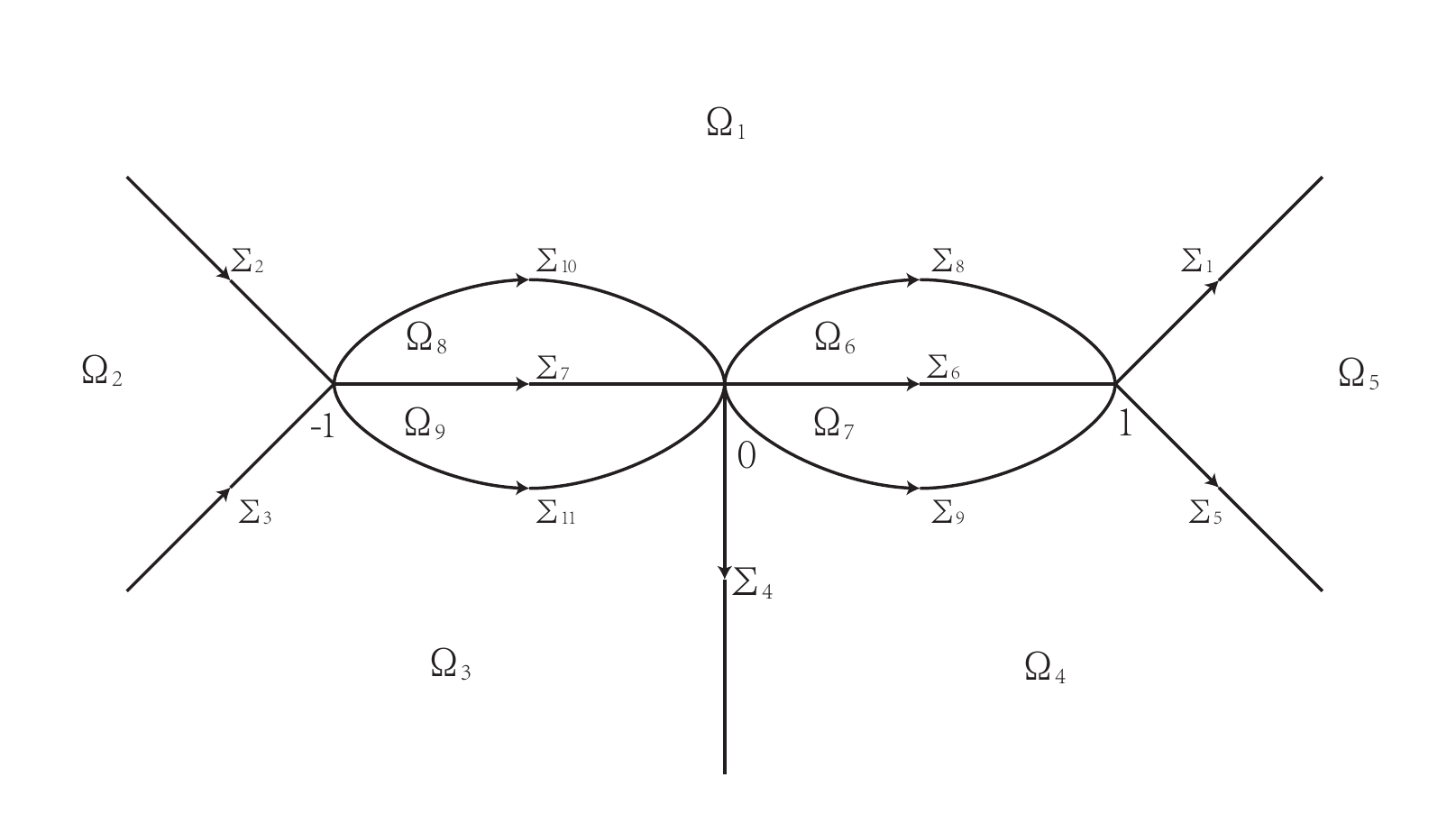}
 \caption{Contours for the RH problem for $B(z)$ and the regions $\Omega_k$, $k=1,\cdots, 9$.}
 \label{figure2}
\end{figure}

It is straightforward to verify that $B(z)$ satisfies the following RH problem.
\begin{rhp}
 \hfill
 \begin{itemize}
  \item [(a)] $B(z)$ is analytic for $z \in \mathbb{C} \setminus \{\cup_{i=1}^{11} \Sigma_i \}$, where the oriented contours are shown in Figure \ref{figure2}.

  \item [(b)] $B(z)$ satisfies the following jump condition
        \begin{equation}\label{jumpforb}
         B_+(z)=B_-(z)J_B(z), \qquad z \in \cup_{i=1}^{11} \Sigma_i,
        \end{equation}
        where
        \begin{equation} \label{jumpforb-formula}
         J_B(z) = \begin{cases}
           \begin{pmatrix} 1 & 0 \\ \frac{1}{1-\gamma}e^{-\pi i(\alpha-\beta)}e^{-\frac{\tau z}{2}}  & 1 \end{pmatrix}, & z \in \Sigma_8,               \\
           \begin{pmatrix} 1 & -\frac{1}{1-\gamma}e^{\pi i(\alpha-\beta)}e^{\frac{\tau z}{2}} \\ 0 & 1 \end{pmatrix}, & z \in \Sigma_9,               \\
           \begin{pmatrix} 1 & 0 \\ \frac{1}{1-\gamma}e^{\pi i(\alpha-\beta)}e^{-\frac{\tau z}{2}}  & 1 \end{pmatrix}, & z \in \Sigma_{10},            \\
           \begin{pmatrix} 1 & -\frac{1}{1-\gamma}e^{-\pi i(\alpha-\beta)}e^{\frac{\tau z}{2}} \\ 0 & 1 \end{pmatrix}, & z \in \Sigma_{11},            \\
           \begin{pmatrix} \frac{1}{1-\gamma} & 0 \\ 0 & 1-\gamma \end{pmatrix}, & z \in \Sigma_6 \cup \Sigma_7 \\
           J_A(z), & z \in \cup_{i=1}^5 \Sigma_i. \\
         \end{cases}
        \end{equation}

  \item [(c)] As $z \to \infty$, we have
        \begin{equation}\label{infinitybehaviorforb}
         B(z)=\left(I + O\left(\frac{1}{z}\right)\right)z^{-\beta\sigma_3}.
        \end{equation}

  \item [(d)] The local behaviors of $B(z)$ near $\pm1, \ 0$ are the same as $\Psi(z;\tau)$; cf. \eqref{localbehaviorforpsinear0},\eqref{localbehaviorforpsinear1} and \eqref{localbehaviorforpsinear-1}.
 \end{itemize}
\end{rhp}

\subsection{Global parametrix}

As $i \tau \to +\infty$, the factors $e^{-\frac{\tau z}{2}}$ and $e^{\frac{\tau z}{2}}$ are exponentially small when $z$ in upper and lower half plane, respectively. Therefore, all the jump matrices of $B(z)$ decay to the identity matrix exponentially as $i \tau \to +\infty$, except the ones on $\Sigma_4, \Sigma_6, \Sigma_7$. So we focus on the remaining jumps on $\Sigma_4 \cup \Sigma_6 \cup \Sigma_7$ and consider the following RH problem for $B^{(\infty)}(z)$.

\begin{rhp}
 \quad 
 \begin{itemize}
  \item [(a)] $B^{(\infty)}(z)$ is analytic for $z \in \mathbb{C} \setminus \{(-1,1) \cup \Sigma_4 \}$.
  \item [(b)] $B^{(\infty)}(z)$ satisfies the jump condition
        \begin{equation}\label{jumpforpinfinity}
         B^{(\infty)}_+(z)=B^{(\infty)}_-(z)
         \begin{cases}
           \begin{pmatrix} \frac{1}{1-\gamma} & 0 \\ 0 & 1-\gamma \end{pmatrix},
           & z \in (-1,1),      \\
           e^{2\pi i \beta \sigma_3},   & z \in \Sigma_4.
         \end{cases}
        \end{equation}
  \item [(c)] When $z \to \infty$, we have
        \begin{equation}\label{infinitybehaviorforpinfinity}
         B^{(\infty)}(z)=\left(I + O\left(\frac{1}{z}\right)\right)z^{-\beta\sigma_3},
        \end{equation}
        where the branch of $z^{\beta}$ is taken along the negative imaginary axis such that $\arg z \in (-\frac{\pi}{2}, \frac{3\pi}{2})$.
 \end{itemize}
\end{rhp}

\begin{lemma}
With the constant $c$ defined in \eqref{ic}, the solution to above RH problem is explicitly given as
  \begin{equation}\label{pinfinity}
   B^{(\infty)}(z)=\left(\frac{z+1}{z-1}\right)^{ic \sigma_3}z^{-\beta \sigma_3},
  \end{equation}
  where the branch for $z^{\beta}$ is taken such that $\arg z \in (-\frac{\pi}{2}, \frac{3\pi}{2})$ and the principal branch of $\left(\frac{z+1}{z-1}\right)^{ic}$ is taken.
\end{lemma}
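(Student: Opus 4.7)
The plan is to verify the stated formula by direct calculation. First, I would observe that the jump matrices on $(-1,1)$ and $\Sigma_4$ are both diagonal, and the prescribed large-$z$ asymptotics $z^{-\beta\sigma_3}$ in \eqref{infinitybehaviorforpinfinity} is diagonal as well. Together with uniqueness, this forces the solution itself to be diagonal, and the $(1,1)$ and $(2,2)$ entries must be reciprocals of one another. Hence the problem reduces to constructing a scalar analytic function $f(z)$ on $\C\setminus\{(-1,1)\cup\Sigma_4\}$ with $f_+=f_-/(1-\gamma)$ on $(-1,1)$, $f_+=e^{2\pi i\beta}f_-$ on $\Sigma_4$, and $f(z)=z^{-\beta}(1+O(z^{-1}))$ as $z\to\infty$; the answer will then be $B^{(\infty)}=\mathrm{diag}(f,1/f)$.

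The natural ansatz $f(z)=\bigl(\tfrac{z+1}{z-1}\bigr)^{ic}z^{-\beta}$ splits the two jumps cleanly. In the principal branch, $\bigl(\tfrac{z+1}{z-1}\bigr)^{ic}$ has its cut precisely on $(-1,1)$, since $\tfrac{z+1}{z-1}$ is negative real exactly for $z\in(-1,1)$; hence this factor is analytic across $\Sigma_4$ and accounts only for the jump on $(-1,1)$. Conversely, $z^{-\beta}$ in the branch with $\arg z\in(-\tfrac{\pi}{2},\tfrac{3\pi}{2})$ has its cut exactly on $\Sigma_4$ and accounts only for the jump there. At infinity, $\tfrac{z+1}{z-1}=1+O(z^{-1})$, so $f(z)=z^{-\beta}(1+O(z^{-1}))$, matching \eqref{infinitybehaviorforpinfinity}.

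The central verification is the jump on $(-1,1)$. For $x\in(-1,1)$, $\tfrac{z+1}{z-1}$ lies on the negative real axis, and the principal argument satisfies $\arg_+\to-\pi$ and $\arg_-\to+\pi$; therefore $\log_+\bigl(\tfrac{z+1}{z-1}\bigr)-\log_-\bigl(\tfrac{z+1}{z-1}\bigr)=-2\pi i$, and
\begin{equation*}
\frac{f_+(x)}{f_-(x)}=\exp\bigl(ic\cdot(-2\pi i)\bigr)=e^{2\pi c}=\frac{1}{1-\gamma},
\end{equation*}
where the last equality uses the definition $c=-\tfrac{1}{2\pi}\ln(1-\gamma)$ from \eqref{ic}. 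An analogous boundary-value calculation on $\Sigma_4$, using $\arg_+ z\to-\tfrac{\pi}{2}$ and $\arg_- z\to\tfrac{3\pi}{2}$, yields $(z^{-\beta})_+/(z^{-\beta})_-=e^{2\pi i\beta}$, while the other factor contributes nothing there since it is analytic across $\Sigma_4$.

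Uniqueness is standard: if $\widetilde B^{(\infty)}$ were another solution, then $\widetilde B^{(\infty)}(B^{(\infty)})^{-1}$ extends across $(-1,1)\cup\Sigma_4$ to an entire function that tends to $I$ at infinity, hence equals $I$ identically by Liouville's theorem. The only genuinely delicate point is the bookkeeping with the two distinct branch conventions (principal for $(\tfrac{z+1}{z-1})^{ic}$, and $\arg\in(-\tfrac{\pi}{2},\tfrac{3\pi}{2})$ for $z^{-\beta}$); no step of this argument presents a substantive obstacle.
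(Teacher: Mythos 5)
Your verification is correct and essentially the same as the paper's: the paper also checks the jump on $(-1,1)$ via the boundary values of the principal branch of $\left(\frac{z+1}{z-1}\right)^{ic}$, giving $e^{2\pi c\sigma_3}=\mathrm{diag}\left(\frac{1}{1-\gamma},1-\gamma\right)$, and the jump on $\Sigma_4$ via $z_-^{\beta\sigma_3}z_+^{-\beta\sigma_3}=e^{2\pi i\beta\sigma_3}$, with the normalization at infinity immediate. The only difference is your added uniqueness/Liouville remark, which the paper omits and which, strictly speaking, would require mild growth control at the points $\pm1,0$; this is immaterial here since only the existence of the explicit parametrix is used later.
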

\begin{proof}
  Based on the branches chosen above, it is easy to see that, for $z \in (-1,1)$,
  \begin{equation}
    \left(B^{(\infty)}_-(z)\right)^{-1}B^{(\infty)}_+(z) = \left(\frac{z+1}{z-1}\right)_-^{-ic \sigma_3}\left(\frac{z+1}{z-1}\right)_+^{ic \sigma_3} = e^{2\pi c \sigma_3} = \begin{pmatrix} \frac{1}{1-\gamma} & 0 \\ 0 & 1-\gamma \end{pmatrix},
  \end{equation}
  and
  \begin{equation}
    \left(B^{(\infty)}_-(z)\right)^{-1}B^{(\infty)}_+(z) = z_-^{\beta \sigma_3} z_+^{-\beta \sigma_3} = e^{2\pi i \beta \sigma_3}, \qquad z \in \Sigma_4.
  \end{equation}
  Therefore, the jump condition $(b)$ is verified. Conditions $(a),(c)$ directly follow from the definition of $ B^{(\infty)}(z)$. This completes our proof.
\end{proof}

The global parametrix $B^{(\infty)}(z)$ fails to approximate $B(z)$ near the endpoints $\pm 1$ and $0$. In the coming three subsections, we will construct the local parametrices near these endpoints separately. It turns out that all of them are given in terms of the confluent hypergeometric parametrix $\Phi(\zeta; a,b)$ in Appendix \ref{sec:Whittaker}, with different parameters $a$ and $b$.

\subsection{Local parametrix at 1}
Let $U(z_0, \delta)$ be an open disc centered at $z_0$ with fixed radius $\delta$. We look for a function $B^{(1)} (z)$ satisfying a RH problem in $U(1,\delta)$ as follows.

\begin{rhp}\label{rhpforp1}
 \quad 
 \begin{itemize}
  \item [(a)] $B^{(1)}(z)$ is analytic in $U(1, \delta) \setminus \{\cup \Sigma_i, \ i=1,5,6,8,9 \}$; see Figure \ref{figure2} for the contours.

  \item [(b)] $B^{(1)}(z)$ satisfies the following jump condition
        \begin{equation}\label{jumpforp1}
         B^{(1)}_+(z)=B^{(1)}_-(z)J_B(z), \qquad z \in U(1,\delta) \cap \{ \cup \Sigma_i , \ i=1,5,6,8,9 \},
        \end{equation}
        where $J_B(z)$ is given in \eqref{jumpforb-formula}.

  \item [(c)] As $i \tau \to +\infty$,  we have the matching condition
        \begin{equation}\label{matchforp1}
         B^{(1)}(z)=\left(I + O\left(\frac{1}{\tau}\right)\right)B^{(\infty)}(z), \qquad  z \in \partial U(1,\delta).
        \end{equation}
 \end{itemize}
\end{rhp}


Let us  introduce a conformal mapping $\zeta^{(1)}(z)$ near $z = 1$ as
\begin{equation}\label{conformalmappingat1}
 \zeta^{(1)}(z)=e^{\pi i}\frac{\tau(z-1)}{2}, \qquad z \in U(1,\delta).
\end{equation}
Since we take $i \tau \in \mathbb{R}^+$, it is obvious that $\zeta^{(1)}(z)$ simply maps the neighborhood near 1 on $z$-plane to the neighborhood near 0 on $\zeta^{(1)}$-plane with a rotation of $\frac{\pi}{2}$ counterclockwise. Then, we have the following result.

\begin{lemma}\label{lemmap1}
 Let $\Phi(\zeta; a,b)$ be the confluent hypergeometric parametrix given in Appendix \ref{sec:Whittaker}. Then, the solution to  RH problem \ref{rhpforp1}  is given by
 \begin{equation}\label{p1}
   B^{(1)}(z)=E^{(1)}(z)\Phi(\zeta^{(1)}(z);0,ic)(1-\gamma)^{-\frac{\sigma_3}{4}}
   \begin{cases}
    e^{-\frac{1}{2}\pi i (\alpha-\beta)\sigma_3}e^{-\frac{\tau z}{4}\sigma_3},  & \Im z>0, \\
    \begin{pmatrix} 0 & 1 \\ -1 & 0 \end{pmatrix}
    e^{-\frac{1}{2}\pi i (\alpha-\beta)\sigma_3}e^{-\frac{\tau z}{4}\sigma_3},  & \Im z<0.
   \end{cases}
 \end{equation}
The prefactor $E^{(1)}(z)$ is an analytic function in $U(1,\delta)$ and defined as
 \begin{equation}\label{e1}
  E^{(1)}(z)=\left(\frac{z+1}{2} \right)^{ic \sigma_3} z^{-\beta \sigma_3} e^{\frac{\tau}{4}\sigma_3} \tau^{ic\sigma_3} e^{\frac{1}{2}\pi i (\alpha-\beta) \sigma_3} (1-\gamma)^{\frac{1}{4}\sigma_3},
 \end{equation}
 where the branch for $z^{\beta}$ is taken such that $\arg z \in (-\frac{\pi}{2}, \frac{3\pi}{2})$ and the principal branch of $(z+1)^{ic}$ is taken.
\end{lemma}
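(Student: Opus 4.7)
The plan is to verify the three defining properties of RH problem \ref{rhpforp1} in order, treating $E^{(1)}(z)$ essentially as a fixed prefactor that is forced by the matching condition and then checking that it is compatible with the jumps.

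First, I would check that $E^{(1)}(z)$ is analytic in $U(1,\delta)$. All its factors besides $\left(\frac{z+1}{2}\right)^{ic\sigma_3}$ and $z^{-\beta\sigma_3}$ are constant in $z$; the factor $\left(\frac{z+1}{2}\right)^{ic\sigma_3}$ is analytic near $z=1$ because $z+1$ does not vanish there, and $z^{-\beta\sigma_3}$ (with $\arg z\in(-\tfrac{\pi}{2},\tfrac{3\pi}{2})$) is also analytic near $z=1$ since the branch cut along the negative imaginary axis does not meet $U(1,\delta)$ for $\delta$ small. Hence $E^{(1)}$ is a well-defined analytic prefactor.

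Next, I would verify the jump relation \eqref{jumpforp1}. The conformal change $\zeta^{(1)}(z)=-\tau(z-1)/2$ is just a rotation-and-scaling taking $z=1$ to $\zeta=0$, and the five contours $\Sigma_1,\Sigma_9,\Sigma_6,\Sigma_8,\Sigma_5$ in $U(1,\delta)$ (listed in counterclockwise order around $z=1$) are mapped precisely to the five rays on which the confluent hypergeometric parametrix $\Phi(\zeta;0,ic)$ has its jumps. Since $E^{(1)}(z)$ is analytic in $U(1,\delta)$ and the rightmost exponential/constant factor in \eqref{p1} has no jump inside $U(1,\delta)$ (it is piecewise defined only across the real line, which is not a jump contour of $B$ inside the disc except for $\Sigma_6$), the jump of $B^{(1)}$ is dictated entirely by $\Phi(\zeta;0,ic)$. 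A direct computation, using the explicit jumps of $\Phi(\zeta;a,b)$ recorded in the Appendix with $(a,b)=(0,ic)$, followed by conjugation by $(1-\gamma)^{-\sigma_3/4}e^{-\tfrac{1}{2}\pi i(\alpha-\beta)\sigma_3}e^{-\tfrac{\tau z}{4}\sigma_3}$ (and an extra swap $\begin{pmatrix}0&1\\-1&0\end{pmatrix}$ in the lower half plane), reproduces each of the five jump matrices in \eqref{jumpforb-formula}. The key identities are $e^{2\pi c}=1/(1-\gamma)$ and $e^{\pm\frac{\tau z}{2}}=e^{\pm\tau/2}e^{\mp\zeta^{(1)}(z)}$, which convert the $\Phi$-jumps into the $B$-jumps. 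The swap by $\begin{pmatrix}0&1\\-1&0\end{pmatrix}$ in the lower half plane is needed precisely to account for the fact that the jump of $B$ on $\Sigma_6$ is $\mathrm{diag}(\tfrac{1}{1-\gamma},1-\gamma)$ rather than its inverse.

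Finally, I would verify the matching condition \eqref{matchforp1}. For $z\in\partial U(1,\delta)$ and $i\tau\to+\infty$, we have $\zeta^{(1)}(z)\to\infty$ uniformly, so the large-$\zeta$ asymptotics from the Appendix gives
\begin{equation*}
\Phi(\zeta^{(1)}(z);0,ic)=\left(I+O(\tau^{-1})\right)(\zeta^{(1)}(z))^{-ic\sigma_3}e^{-\tfrac{\zeta^{(1)}(z)}{2}\sigma_3}\times (\text{piecewise constant}).
\end{equation*}
Plugging this into \eqref{p1} and using $\zeta^{(1)}(z)=-\tau(z-1)/2$ together with the definition \eqref{e1} of $E^{(1)}$, the $\tau$-dependent scalar factors $\tau^{ic\sigma_3}$, $e^{\tau z\sigma_3/4}$ and $(1-\gamma)^{\pm\sigma_3/4}$ all telescope, leaving the leading term
\begin{equation*}
\left(\frac{z+1}{2}\right)^{ic\sigma_3}\left(\frac{\tau(z-1)}{2}\right)^{-ic\sigma_3}(-1)^{\mp ic\sigma_3}(\tau)^{ic\sigma_3}z^{-\beta\sigma_3}=\left(\frac{z+1}{z-1}\right)^{ic\sigma_3}z^{-\beta\sigma_3}=B^{(\infty)}(z),
\end{equation*}
where the branch of $(-1)^{\pm ic}$ arises from $\arg\zeta^{(1)}(z)$ being $\mp\pi/2$ in the two half planes and is absorbed consistently by the piecewise constant factor in the large-$\zeta$ expansion; checking this bookkeeping in both half planes is the main technical step.

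The principal obstacle I expect is precisely the last bookkeeping step: one must carefully track the branches of $z^{\beta}$, $(\cdot)^{ic}$ and the argument of $\zeta^{(1)}(z)$ across the two half planes, because both $\Phi$ and $B^{(\infty)}$ are defined using different branch conventions, and the swap $\begin{pmatrix}0&1\\-1&0\end{pmatrix}$ in the lower half plane of \eqref{p1} has to appear in exactly the right position to reconcile them. Once this branch analysis is done, the jump verifications and the matching are essentially mechanical.
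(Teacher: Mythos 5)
Your proposal is correct and follows essentially the same route as the paper: analyticity of $E^{(1)}$, jump verification by mapping the five local contours to rays of $\Phi(\cdot\,;0,ic)$ and using $e^{2\pi c}=(1-\gamma)^{-1}$, and matching on $\partial U(1,\delta)$ via the large-$\zeta$ expansion of $\Phi$ with the leftover diagonal factors cancelling against $E^{(1)}$ and $B^{(\infty)}$ (the paper also records explicitly that $E^{(1)}(z)=O(1)$ because $c$ is real and $\tau$ purely imaginary, which you implicitly need so the conjugated $O(\tau^{-1})$ error is preserved). One small correction to a side remark: the swap $\begin{pmatrix}0&1\\-1&0\end{pmatrix}$ in the lower half-plane is not about $\mathrm{diag}\bigl(\tfrac{1}{1-\gamma},1-\gamma\bigr)$ versus its inverse—without it the off-diagonal jumps of $\Phi$ across the imaginary $\zeta$-axis would leave a spurious jump of $B^{(1)}$ on $(1,1+\delta)$ and a non-diagonal jump on $\Sigma_6$, and the swap is precisely what turns these into the identity and the required diagonal matrix, respectively.
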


\begin{proof}
  Obviously, $E^{(1)}(z)$  is analytic in $U(1,\delta)$. Next, let us verify the jump condition of $ B^{(1)}(z)$. For $z \in \Sigma_1$, it is readily seen from  \eqref{conformalmappingat1} that $\zeta^{(1)} \in e^{\frac{3\pi i}{4}}\mathbb{R}^{+}$. Then, we have from \eqref{jumpforphi} that
 \begin{equation}
  \begin{aligned}
    & \left(B_-^{(1)}(z)\right)^{-1}B_+^{(1)}(z)    \\
    & = e^{\frac{\tau z}{4}\sigma_3}e^{\frac{1}{2}\pi i (\alpha-\beta)\sigma_3}(1-\gamma)^{\frac{\sigma_3}{4}}\begin{pmatrix} 1 & 0 \\ e^{-\pi c} & 1 \end{pmatrix}(1-\gamma)^{-\frac{\sigma_3}{4}}e^{-\frac{1}{2}\pi i (\alpha-\beta)\sigma_3}e^{-\frac{\tau z}{4}\sigma_3}    \\
    & = \begin{pmatrix} 1 & 0 \\ e^{-\pi i(\alpha-\beta)}e^{-\frac{\tau z}{2}} & 1 \end{pmatrix},
  \end{aligned}
 \end{equation}
where we use the relation $e^{-2 \pi c}=1-\gamma$ in the last step. When $z$ belongs to the remaining contours, the jump condition can be verified similarly.

For the matching condition \eqref{matchforp1}, notice that $\zeta^{(1)}(z) = O(\tau)$ when $|z-1|=\delta$ and $i \tau \to +\infty$. It then follows from the asymptotic behavior of $\Phi$ at infinity in \eqref{infinitybehaviorforphi} that
 \begin{equation}
   \begin{aligned}
    & B^{(1)}(z)\left( B^{(\infty)}(z) \right)^{-1} \\
    & = E^{(1)}(z) \left(I+O\left(\frac{1}{\tau}\right)\right) \left(\frac{\tau(z-1)}{2}\right)^{-ic\sigma_3}e^{-\frac{\tau}{4}\sigma_3} (1-\gamma)^{-\frac{\sigma_3}{4}}e^{-\frac{1}{2}\pi i (\alpha-\beta)\sigma_3}\left( B^{(\infty)}(z) \right)^{-1}.
    \end{aligned}
 \end{equation}
With the definition of $B^{(\infty)}(z)$ in \eqref{pinfinity}, we have
 \begin{equation}
   \left(\frac{\tau(z-1)}{2}\right)^{-ic\sigma_3} e^{-\frac{\tau}{4}\sigma_3} (1-\gamma)^{-\frac{\sigma_3}{4}}e^{-\frac{1}{2}\pi i (\alpha-\beta)\sigma_3}\left( B^{(\infty)}(z) \right)^{-1} = \left(E^{(1)}(z)\right)^{-1}.
 \end{equation}
As both $\tau$ and  $ic$ are pure imaginary, it is easy to see from \eqref{e1} that  $E^{(1)} (z)= O(1)$ as $i \tau \to +\infty$. Therefore, we have established  \eqref{matchforp1} and completed the proof of the lemma.
\end{proof}

\subsection{Local parametrix at $-1$}
Similar to the scenario near $z=1$, we seek a function $B^{(-1)}(z)$ satisfying the following RH problem inside $U(-1,\delta)$.
\begin{rhp}\label{rhpforp-1}
 \quad
 \begin{itemize}
  \item [(a)] $B^{(-1)}(z)$ is analytic in $U(-1, \delta) \setminus \{\cup \Sigma_i, \ i=2,3,7,10,11 \}$; see Figure \ref{figure2} for the contours.


  \item [(b)] $B^{(-1)}(z)$ satisfies the following jump condition
        \begin{equation}\label{jumpforp-1}
         B^{(-1)}_+(z)=B^{(-1)}_-(z)J_B(z), \qquad z \in U(-1, \delta) \cap \{\cup \Sigma_i, \ i=2,3,7,10,11 \},
        \end{equation}
        where $J_B(z)$ is given in \eqref{jumpforb-formula}.
        
  \item [(c)] As $ i \tau \to +\infty$, we have the matching condition
        \begin{equation}\label{matchforp-1}
         B^{(-1)}(z)=\left(I + O\left(\frac{1}{\tau}\right)\right)B^{(\infty)}(z), \qquad z \in \partial U(-1,\delta).
        \end{equation}
 \end{itemize}
\end{rhp}

The construction of the local parametrix in $U(-1,\delta)$ is similar to that in $U(1,\delta)$.  The conformal mapping $\zeta^{(-1)}(z)$ near $z = -1 $ is defined as
\begin{equation}\label{conformalmappingat-1}
 \zeta^{(-1)}(z)=e^{\pi i}\frac{\tau(z+1)}{2}, \qquad z \in U(-1,\delta),
\end{equation}
which maps the neighborhood near $-1$ on $z$-plane to the neighborhood near 0 on $\zeta^{(-1)}$-plane with a rotation of $\frac{\pi}{2}$ counterclockwise. Then, the solution to the above RH problem is given in the following lemma.

\begin{lemma}
 Let $\Phi(\zeta;a,b)$ be the confluent hypergeometric parametrix given in Appendix \ref{sec:Whittaker}. Then, the solution to  RH problem \ref{rhpforp-1} is given by
 \begin{equation}\label{p-1}
   B^{(-1)}(z)=E^{(-1)}(z)\Phi(\zeta^{(-1)}(z);0,-ic)(1-\gamma)^{-\frac{\sigma_3}{4}}
   \begin{cases}
    e^{\frac{1}{2}\pi i (\alpha-\beta)\sigma_3}e^{-\frac{\tau z}{4}\sigma_3},& \Im z>0, \\
    \begin{pmatrix} 0 & 1 \\ -1 & 0 \end{pmatrix}
    e^{\frac{1}{2}\pi i (\alpha-\beta)\sigma_3}e^{-\frac{\tau z}{4}\sigma_3},  & \Im z<0.
   \end{cases}
 \end{equation}
 The prefactor $E^{(-1)}(z)$ is analytic in $U(-1,\delta)$ and defined as
 \begin{equation}\label{e-1}
  E^{(-1)}(z)=\left(\frac{z-1}{2}\right)^{-ic \sigma_3}z^{-\beta \sigma_3}  e^{-\frac{\tau}{4}\sigma_3} \tau^{-ic \sigma_3} e^{-\frac{1}{2}\pi i (\alpha-\beta) \sigma_3}
  \begin{cases}
    (1-\gamma)^{\frac{1}{4}\sigma_3}, \qquad & \Im z>0,  \\
    (1-\gamma)^{-\frac{3}{4}\sigma_3}, \qquad & \Im z<0,
  \end{cases}
 \end{equation}
 where the branch for $z^{\beta}$ is taken such that $\arg z \in (-\frac{\pi}{2}, \frac{3\pi}{2})$ and the principal branch of $(z-1)^{-ic}$ is taken.
\end{lemma}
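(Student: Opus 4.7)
The plan is to mirror the proof of the previous lemma (for the parametrix at $z = 1$) with appropriate modifications accounting for the fact that the singularity is now at $z = -1$ and that the branch cut of the principal factor $(z-1)^{-ic}$ actually passes through the disc $U(-1,\delta)$. I will proceed through three standard checks.

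First, I would verify that $E^{(-1)}(z)$ is analytic in $U(-1,\delta)$. The factor $z^{-\beta\sigma_3}$ has its branch cut on the negative imaginary axis $\Sigma_4$ and is therefore analytic across $(-1-\delta,-1+\delta) \cap \R$. The factor $\left(\frac{z-1}{2}\right)^{-ic\sigma_3}$ taken in the principal branch does have a branch cut along the real line for $z<1$, hence cuts through $U(-1,\delta)$. A direct computation using $\arg(z-1)_\pm = \pm\pi$ shows that $\left(\frac{z-1}{2}\right)_+^{-ic\sigma_3}\left(\left(\frac{z-1}{2}\right)_-^{-ic\sigma_3}\right)^{-1} = e^{2\pi c \sigma_3} = (1-\gamma)^{-\sigma_3}$ across this cut. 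On the other hand, the piecewise factor $(1-\gamma)^{\sigma_3/4}$ (upper half) versus $(1-\gamma)^{-3\sigma_3/4}$ (lower half) contributes a ratio of $(1-\gamma)^{\sigma_3}$, which cancels the previous contribution exactly. Thus $E^{(-1)}(z)$ continues analytically across the real axis inside $U(-1,\delta)$.

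Second, I would check the jump conditions on each of the contours $\Sigma_i$ for $i=2,3,7,10,11$ in $U(-1,\delta)$. By the conformal map \eqref{conformalmappingat-1}, each piece of $\cup_i \Sigma_i \cap U(-1,\delta)$ is carried to the corresponding ray of the jump contour for $\Phi(\zeta;0,-ic)$ in the $\zeta$-plane (with the $90^\circ$ rotation). Since $E^{(-1)}(z)$ is analytic, the jump of $B^{(-1)}$ reduces to
\begin{equation*}
\bigl(B^{(-1)}_-(z)\bigr)^{-1}B^{(-1)}_+(z) = F(z)^{-1} \bigl(\Phi_-\bigr)^{-1}\Phi_+ \, F(z),
\end{equation*}
where $F(z)$ denotes the factor to the right of $\Phi$ in \eqref{p-1}. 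Using the identity $e^{-2\pi c} = 1-\gamma$ together with the known jumps of $\Phi(\cdot;0,-ic)$ from Appendix \ref{sec:Whittaker} (conjugated by $(1-\gamma)^{-\sigma_3/4}$ and by the signature rotation matrix for $\Im z<0$), each jump will match the corresponding entry of $J_B$ in \eqref{jumpforb-formula}. The only nontrivial case is $\Sigma_7 = (-1,0)$: here the diagonal jump $(1-\gamma)^{-\sigma_3}$ required by $J_B$ will arise precisely from the jump of the unimodular factor that $(z-1)^{-ic\sigma_3}$ would produce, now accounted for through the piecewise prefactor. The sign reversal between the parameter $+ic$ at $z=1$ and $-ic$ at $z=-1$ is exactly what produces the correct orientation of the logarithmic singularity coming into $-1$ along $\Sigma_7$ from the right, as opposed to coming out of $+1$ along $\Sigma_6$ from the left.

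Third, I would verify the matching condition \eqref{matchforp-1}. On $\partial U(-1,\delta)$ we have $|\zeta^{(-1)}(z)| = |\tau|\delta/2 \to \infty$ as $i\tau \to +\infty$, so the asymptotics \eqref{infinitybehaviorforphi} of $\Phi$ give
\begin{equation*}
 B^{(-1)}(z)\bigl(B^{(\infty)}(z)\bigr)^{-1} = E^{(-1)}(z)\!\left(I+O(\tau^{-1})\right)\!\left(\tfrac{\tau(z+1)}{2}\right)^{ic\sigma_3}\!e^{\frac{\tau}{4}\sigma_3}(1-\gamma)^{-\sigma_3/4}e^{\frac{\pi i}{2}(\alpha-\beta)\sigma_3}\bigl(B^{(\infty)}(z)\bigr)^{-1}
\end{equation*}
in the upper half plane, and a similar identity below with the swap $(1-\gamma)^{-\sigma_3/4}\to(1-\gamma)^{3\sigma_3/4}$ and the added flip matrix. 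Plugging in the explicit $B^{(\infty)}$ from \eqref{pinfinity}, one reads off that the factor multiplying the error bracket collapses to $E^{(-1)}(z)\bigl(E^{(-1)}(z)\bigr)^{-1} = I$ in each half-plane, leaving $B^{(-1)}(z)\bigl(B^{(\infty)}(z)\bigr)^{-1} = E^{(-1)}(z)\bigl(I+O(\tau^{-1})\bigr)\bigl(E^{(-1)}(z)\bigr)^{-1}$. Since $c \in \R$ and $\beta \in i\R$, the prefactor $E^{(-1)}(z)$ is bounded uniformly in $\tau$ on $\partial U(-1,\delta)$, yielding \eqref{matchforp-1}.

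The main bookkeeping obstacle is verifying that the piecewise $(1-\gamma)$-powers in \eqref{e-1} are the unique choice that simultaneously (i) restores analyticity of $E^{(-1)}$ across the real axis outside $[-1,0]$, (ii) produces the correct diagonal jump on $\Sigma_7$, and (iii) matches with $B^{(\infty)}$ on both sides without introducing a $(1-\gamma)$ mismatch in the remainder. All three requirements are compatible only because $e^{-2\pi c}=1-\gamma$; once this identity is exploited, the computations parallel those at $z=1$ step by step.
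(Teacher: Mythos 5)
Your proposal is correct and follows essentially the same route as the paper: the key new point is the analyticity of $E^{(-1)}(z)$ across $(-1-\delta,-1+\delta)$, which you establish exactly as the paper does (the jump $e^{2\pi c\sigma_3}=(1-\gamma)^{-\sigma_3}$ of $\left(\frac{z-1}{2}\right)^{-ic\sigma_3}$ cancels against the piecewise $(1-\gamma)$-powers via $e^{-2\pi c}=1-\gamma$), while the jump and matching verifications mirror the $z=1$ lemma, as the paper itself merely indicates. One small inaccuracy in an aside: the diagonal jump $(1-\gamma)^{-\sigma_3}$ on $\Sigma_7$ is not produced by the cut of $(z-1)^{-ic\sigma_3}$ (the analytic prefactor $E^{(-1)}$ contributes no jump there) but by the off-diagonal jump $J_{\Phi,1}$ of $\Phi(\cdot;0,-ic)$ combined with the flip matrix for $\Im z<0$ and the $(1-\gamma)^{-\frac{\sigma_3}{4}}$ conjugation; this does not affect the validity of your argument.
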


\begin{proof}
We just establish the  analyticity of $E^{(-1)}(z)$ in $U(-1,\delta)$. The remaining proof is similar to that of Lemma \ref{lemmap1}. Due to the choice of the branches in \eqref{e-1}, the only possible jump for $E^{(-1)}(z)$ occurs on  $(-1 -\delta, -1+\delta)$, where we have
\begin{equation}
\Big( E^{(-1)}_-(z) \Big)^{-1} E^{(-1)}_+(z) = (1-\gamma)^{\frac{3}{4}\sigma_3} \left( \frac{(z-1)_-}{(z-1)_+} \right)^{ i c \sigma_3} (1-\gamma)^{\frac{1}{4}\sigma_3} = (1-\gamma)^{ \sigma_3} e^{2c\pi \sigma_3} = I.
\end{equation}
In the last identity above, the definition of the constant $c$ in \eqref{ic} is used. Obviously, $E^{(-1)}(z)$ is bounded as $z \to -1$. Therefore, $E^{(-1)}(z)$ is indeed analytic for $z \in U(-1,\delta)$.

 This completes the proof of the lemma.
\end{proof}

\subsection{Local parametrix at 0}
It remains to construct the local parametrix near $z = 0$. We look for a function $B^{(0)}(z)$ satisfying a RH problem in $U(0,\delta)$ as follows.

\begin{rhp}\label{rhpforp0}
 \quad
 \begin{itemize}
  \item [(a)] $B^{(0)}(z)$ is analytic in $U(0,\delta) \setminus \{\cup \Sigma_i, \ i=4,6,7,8,9,10,11 \}$; see Figure \ref{figure2} for the contours.


  \item [(b)] $B^{(0)}(z)$ satisfies the following jump condition
        \begin{equation}\label{jumpforp0}
         B^{(0)}_+(z)=B^{(0)}_-(z)J_B(z), \qquad z \in U(0,\delta) \cap \{\cup \Sigma_i, \ i=4,6,7,8,9,10,11 \},
        \end{equation}
        where $J_B(z)$ is given in \eqref{jumpforb-formula}.
        
  \item [(c)] As $i \tau \to +\infty$, we have the matching condition
        \begin{equation}\label{matchforp0}
         B^{(0)}(z)=\left(I + O\left(\frac{1}{\tau}\right)\right)B^{(\infty)}(z), \qquad z \in \partial U(0,\delta).
        \end{equation}
 \end{itemize}
\end{rhp}


The conformal mapping in $U(0,\delta)$  is defined by
\begin{equation}\label{conformalmappingat0}
 \zeta^{(0)}(z)=e^{\pi i}\frac{\tau z}{2},
\end{equation}
which maps the neighborhood near 0 on $z$-plane to the neighborhood near 0 on $\zeta^{(0)}$-plane with a rotation of $\frac{\pi}{2}$ counterclockwise. Then, we have the following solution for the above RH problem.

\begin{lemma}
 Let $\Phi(\zeta;a,b)$ be the confluent hypergeometric parametrix given in Appendix \ref{sec:Whittaker}. Then, the solution to  RH problem \ref{rhpforp0} for $B^{(0)}(z)$ is given by
 \begin{equation}\label{p0}
  B^{(0)}(z) = E^{(0)}(z)\Phi(\zeta^{(0)}(z);\alpha,\beta)(1-\gamma)^{-\frac{1}{2}\sigma_3}
  \begin{cases}
    e^{\frac{\pi i}{2} \alpha\sigma_3}e^{-\frac{\tau z}{4}\sigma_3},  & \arg z \in (0,\frac{\pi}{2}),      \\
    e^{-\frac{\pi i}{2} \alpha\sigma_3}e^{-\frac{\tau z}{4}\sigma_3},   & \arg z \in (\frac{\pi}{2},\pi),    \\
    e^{\frac{\pi i}{2}(\alpha + 2\beta)\sigma_3}\begin{pmatrix} 0 & 1 \\ -1 & 0 \end{pmatrix}e^{-\frac{\tau z}{4}\sigma_3},  & \arg z \in (-\pi, -\frac{\pi}{2}), \\
    e^{-\frac{\pi i}{2}(\alpha + 2\beta)\sigma_3}\begin{pmatrix} 0 & 1 \\ -1 & 0 \end{pmatrix}e^{-\frac{\tau z}{4}\sigma_3},  & \arg z \in (-\frac{\pi}{2},0).
  \end{cases}
 \end{equation}
The prefactor $E^{(0)}(z)$ is analytic in the neighborhood in $U(0,\delta)$ and defined as
 \begin{equation}\label{e0}
  E^{(0)}(z) = \left(\frac{z+1}{z-1}\right)^{ic \sigma_3} \left(\frac{\tau}{2}\right)^{\beta \sigma_3}(1-\gamma)^{\pm \frac{1}{2}\sigma_3}, \qquad \pm \Im z > 0,
 \end{equation}
 where the principal branch of $\left(\frac{z+1}{z-1}\right)^{ic}$ is taken.
\end{lemma}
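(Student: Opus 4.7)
The plan is to mimic the three-step structure already used for $B^{(1)}$ and $B^{(-1)}$: first establish analyticity of the prefactor $E^{(0)}$, then verify the jump condition on each local contour using the jumps of $\Phi$, and finally verify the matching condition using the large-$\zeta$ asymptotics of $\Phi$.

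For the first step, the factor $(\tau/2)^{\beta\sigma_3}$ in \eqref{e0} is constant, so the only source of multi-valuedness is the diagonal factor $\left(\frac{z+1}{z-1}\right)^{ic\sigma_3}$, whose principal branch has a cut along $(-1,1)$ and therefore along $\Sigma_6\cup\Sigma_7$ inside $U(0,\delta)$. I would compute the jump of $\left(\frac{z+1}{z-1}\right)^{ic\sigma_3}$ across the real segment by tracking $\arg\!\left(\frac{z+1}{z-1}\right)$, which changes by $-2\pi$ going from the lower to the upper side, giving a multiplicative jump of $e^{2\pi c\sigma_3}$; by the defining relation \eqref{ic} this is $(1-\gamma)^{-\sigma_3}$. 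The prescribed change of $(1-\gamma)^{\pm\frac12\sigma_3}$ between the upper and lower half-planes in \eqref{e0} contributes exactly the cancelling factor $(1-\gamma)^{\sigma_3}$. Since $E^{(0)}$ is also bounded at $z=0$ (all factors remain finite there), it is analytic throughout $U(0,\delta)$.

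For the jump condition, after the conformal mapping \eqref{conformalmappingat0} the seven contours of Figure~\ref{figure1} passing through $0$ are rotated by $\pi/2$ and land on the jump contours of $\Phi(\zeta;\alpha,\beta)$ listed in Appendix~\ref{sec:Whittaker}. Since $E^{(0)}$ is analytic and the right-most diagonal exponential factor $e^{-\tau z/4\sigma_3}$ is entire, $(B_-^{(0)})^{-1}B_+^{(0)}$ reduces to conjugation of the jumps of $\Phi$ by the piecewise constants $e^{\pm\frac{\pi i}{2}\alpha\sigma_3}$, $e^{\pm\frac{\pi i}{2}(\alpha+2\beta)\sigma_3}$, and $\begin{pmatrix}0&1\\-1&0\end{pmatrix}$, together with the similarity by $(1-\gamma)^{-\frac12\sigma_3}$. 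The piecewise constants in \eqref{p0} are designed precisely to transform the jumps of $\Phi$ into the jumps of $\Psi$ in \eqref{jumpforpsi}, which on $\Sigma_6,\Sigma_7$ receive the extra $(2,2)$-entry $1-\gamma$ through the identity $e^{-2\pi c}=1-\gamma$; this is the same mechanism already used at $z=\pm 1$.

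For the matching condition, the large-$\zeta$ asymptotics of $\Phi(\zeta;\alpha,\beta)$ from Appendix~\ref{sec:Whittaker} has the form $(I+O(\zeta^{-1}))\zeta^{-\beta\sigma_3}e^{-\zeta\sigma_3/2}$. Since $\zeta^{(0)}(z)=-\tau z/2=O(\tau)$ on $\partial U(0,\delta)$ as $i\tau\to+\infty$, substitution into \eqref{p0} gives
\begin{equation*}
B^{(0)}(z)\bigl(B^{(\infty)}(z)\bigr)^{-1} = E^{(0)}(z)\bigl(I+O(\tau^{-1})\bigr)(\zeta^{(0)}(z))^{-\beta\sigma_3}e^{\tau z/4\,\sigma_3}(1-\gamma)^{-\frac12\sigma_3}(\text{piecewise constant})\bigl(B^{(\infty)}(z)\bigr)^{-1},
\end{equation*}
and one then checks that the explicit form of $E^{(0)}$ in \eqref{e0} and of $B^{(\infty)}$ in \eqref{pinfinity} are arranged so that the diagonal factors collapse to $E^{(0)}(z)(E^{(0)}(z))^{-1}=I$, leaving the error $I+O(\tau^{-1})$. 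Since $\beta\in i\R$ and $c$ is real, $E^{(0)}$ remains of order one, so the error propagates to the final bound claimed in \eqref{matchforp0}.

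The main obstacle is bookkeeping: four sectors around $z=0$ with distinct formulas in \eqref{p0}, four related but different piecewise constants, and the necessity to track branches of $(\cdot)^{ic}$, $(\cdot)^{-\beta}$, and $(1-\gamma)^{\pm1/2}$ simultaneously. However, there is nothing genuinely new beyond the calculations already performed for the parametrices at $z=\pm 1$; the identity $e^{-2\pi c}=1-\gamma$ is again the single arithmetic fact that makes the construction work.
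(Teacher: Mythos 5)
Your overall three-step plan (analyticity of $E^{(0)}$, jump verification, matching) is exactly the structure of the paper's proof, and your treatment of the first and third steps is correct: the cancellation $(1-\gamma)^{\sigma_3}e^{2\pi c\sigma_3}=I$ across $(-\delta,\delta)$ together with boundedness at $0$ gives analyticity of $E^{(0)}$, and the matching reduces to $E^{(0)}(z)\bigl(I+O(\tau^{-1})\bigr)\bigl(E^{(0)}(z)\bigr)^{-1}$ with $E^{(0)}=O(1)$ because $\beta$ and $c$ are purely imaginary/real as appropriate. However, your jump-verification step aims at the wrong target. RH problem \ref{rhpforp0} requires $B^{(0)}$ to reproduce the \emph{post-lens-opening} jumps $J_B$ of \eqref{jumpforb-formula} on $U(0,\delta)\cap\{\Sigma_4,\Sigma_6,\Sigma_7,\Sigma_8,\dots,\Sigma_{11}\}$ (the seven rays through $0$ are those of Figure \ref{figure2}, including the four lens boundaries, not the contours of Figure \ref{figure1}), whereas you state that the conjugated jumps of $\Phi$ should become ``the jumps of $\Psi$ in \eqref{jumpforpsi}'' with the $(2,2)$-entry $1-\gamma$ on $\Sigma_6,\Sigma_7$. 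On $(-\delta,0)\cup(0,\delta)$ the required jump is the constant diagonal matrix $\operatorname{diag}\bigl((1-\gamma)^{-1},\,1-\gamma\bigr)$, and in the actual computation (as in the paper, using $J_{\Phi,1}$ and $J_{\Phi,5}$ after the rotation $\zeta^{(0)}(z)=e^{\pi i}\tau z/2$) this diagonal jump comes out directly from the explicit conjugation by $(1-\gamma)^{-\frac12\sigma_3}$ and the piecewise constants; the identity $e^{-2\pi c}=1-\gamma$ plays no role in this step at $z=0$ (it enters only in the analyticity of $E^{(0)}$ and in the parametrices at $z=\pm1$, where $\Phi(\cdot;0,\pm ic)$ is used). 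If you tried to verify the triangular $\Psi$-jumps $J_6,J_7$ instead, the computation would not close.

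A second, smaller omission: the seven $z$-rays map onto only seven of the eight rays $\Gamma_1,\dots,\Gamma_8$ of $\Phi$, so besides conjugating existing jumps you must also check that $B^{(0)}$ is \emph{analytic} across the positive imaginary $z$-axis (whose image is $\Gamma_3$, where $\Phi$ jumps by $e^{\pi i\alpha\sigma_3}$): this is cancelled precisely by the switch of the constant prefactor in \eqref{p0} from $e^{\frac{\pi i}{2}\alpha\sigma_3}$ to $e^{-\frac{\pi i}{2}\alpha\sigma_3}$ across $\arg z=\frac{\pi}{2}$; similarly the switch across $\Sigma_4$ combined with $J_{\Phi,7}$ must produce exactly $e^{2\pi i\beta\sigma_3}$. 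With the target jumps corrected to $J_B$ and these two ray checks added, your argument coincides with the paper's proof.
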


\begin{proof}
First, let us check the analyticity of $E^{(0)}(z)$ in $U(0,\delta)$. According to the definition \eqref{e0}, the possible jump of $E^{(0)}(z)$ may occur on $(-\delta, \delta)$. For $z \in (-\delta, \delta)$, we have
 \begin{equation}
     \left(E_-^{(0)}(z)\right)^{-1}E_+^{(0)}(z)
   = (1-\gamma)^{\frac{1}{2}\sigma_3} \left( \frac{(z-1)_-}{(z-1)_+} \right)^{ i c \sigma_3} (1-\gamma)^{\frac{1}{2}\sigma_3} = (1-\gamma)^{ \sigma_3} e^{2c\pi \sigma_3} = I.
 \end{equation}
Obviously, $E^{(0)}(z)$ is bounded as $z \to 0$. Therefore, $E^{(0)}(z)$ is analytic for $z \in U(0,\delta)$.

 Now we verify the jump condtion. For $z \in \Sigma_6$, it is readily seen from \eqref{conformalmappingat0} that $\zeta^{(0)} \in e^{\frac{\pi i}{2}}\mathbb{R}^{+}$. Hence, it follows from \eqref{jumpforphi} that
 \begin{equation*}
  \begin{aligned}
      & \left(B_-^{(0)}(z)\right)^{-1}B_+^{(0)}(z)  \\
    & = e^{\frac{\tau z}{4}\sigma_3} \begin{pmatrix} 0 & -1 \\ 1 & 0 \end{pmatrix} (1-\gamma)^{\frac{1}{2}\sigma_3}e^{\frac{1}{2}\pi i (\alpha+2\beta) \sigma_3}\begin{pmatrix} 0 & e^{-\pi i \beta} \\ -e^{\pi i \beta} & 0 \end{pmatrix}(1-\gamma)^{-\frac{1}{2}\sigma_3}e^{\frac{1}{2}\pi i \alpha \sigma_3}e^{-\frac{\tau z}{4}\sigma_3}  \\
    & = e^{\frac{\tau z}{4}\sigma_3} (1-\gamma)^{-\frac{1}{2}\sigma_3} e^{-\frac{1}{2}\pi i (\alpha+2\beta) \sigma_3} \begin{pmatrix} e^{\pi i \beta} & 0 \\ 0 & e^{-\pi i \beta} \end{pmatrix} (1-\gamma)^{-\frac{1}{2}\sigma_3} e^{\frac{1}{2}\pi i \alpha \sigma_3} e^{-\frac{\tau z}{4}\sigma_3} = (1-\gamma)^{-\sigma_3}.
  \end{aligned}
 \end{equation*}
When $z$ belongs to the remaining contours, the jump condition can be verified similarly.

 For the matching condition \eqref{matchforp0}, we first note that $\zeta^{(0)}(z)= O(\tau)$ when $|z|= \delta$ and $i \tau \to +\infty$.
It  then follows from \eqref{pinfinity}, \eqref{p0}, \eqref{e0} and \eqref{infinitybehaviorforphi} that
 \begin{equation}
      B^{(0)}(z)\left(B^{(\infty)}(z) \right)^{-1}
    = E^{(0)}(z)\left(I+O\left(\frac{1}{\tau}\right)\right) \left(E^{(0)}(z)\right)^{-1}.
 \end{equation}
 As $\beta$ is purely imaginary, $E^{(0)}(z) = O(1)$ as $i \tau \to +\infty$ for $|z|= \delta$. Thus, the matching condition is justified, which completes our proof.
\end{proof}

\subsection{Final transformation}
With all of the parametrices constructed, the final transformation is defined as
\begin{equation}\label{r}
 \mathcal{R}(z) = B(z)
 \begin{cases}
   \left(B^{(0)}(z)\right)^{-1}, \qquad      & z \in U(0,\delta),  \\
   \left(B^{(1)}(z)\right)^{-1}, \qquad      & z \in U(1,\delta),  \\
   \left(B^{(-1)}(z)\right)^{-1}, \qquad     & z \in U(-1,\delta), \\
   \left(B^{(\infty)}(z)\right)^{-1}, \qquad & \mathrm{otherwise}. \\
 \end{cases}
\end{equation}
It is straightforward to verify that $\mathcal{R}(z)$ satisfies the following RH problem.
\begin{rhp} \label{RHP: R-large-s}
 \quad 
 \begin{itemize}
  \item [(a)] $\mathcal{R}(z)$ is analytic in $\mathbb{C} \setminus \Sigma_{\mathcal{R}}$,
         where $\Sigma_{\mathcal{R}}$ is illustrated in Figure \ref{figureforr}.

        \begin{figure}[h]
         \centering
         \includegraphics[width=5.4in]{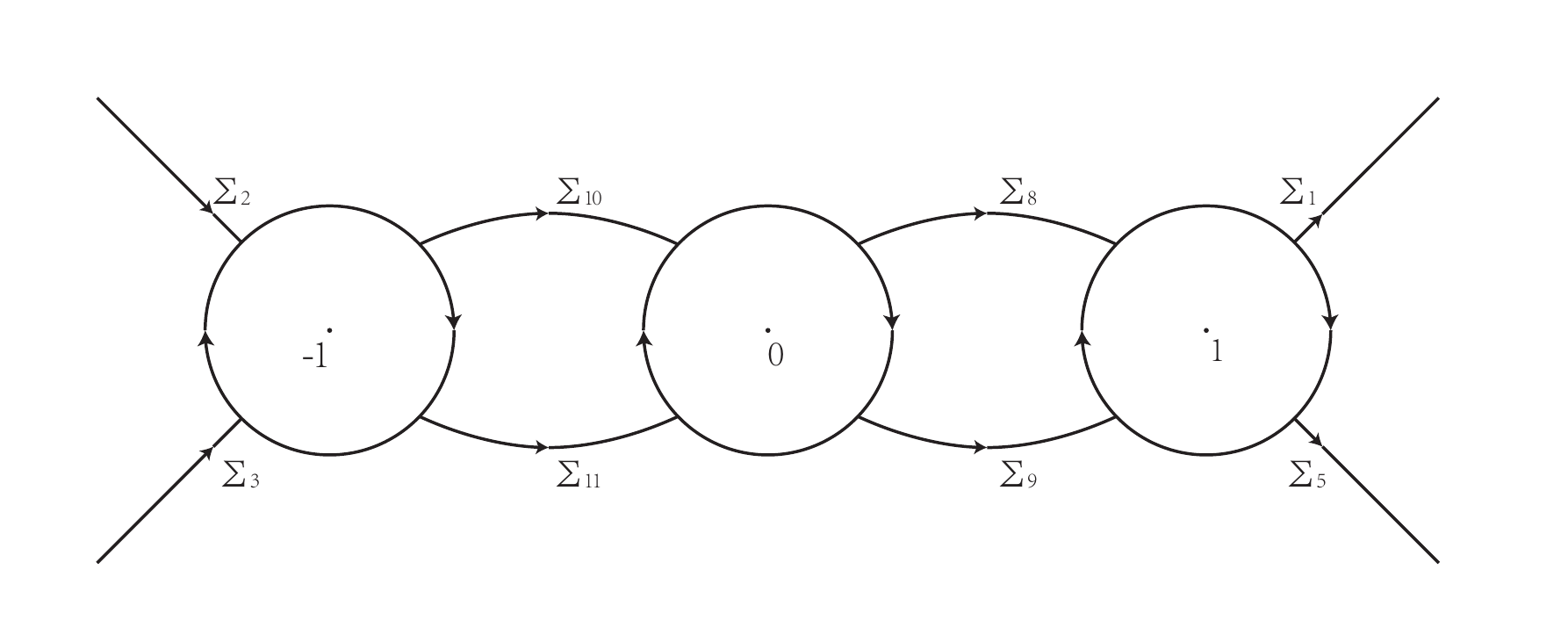}
         \caption{Contours for the RH problem for  $\mathcal{R}(z)$}
         \label{figureforr}
        \end{figure}

  \item [(b)] $\mathcal{R}(z)$ satisfies the jump condition
        \begin{equation}\label{jumpforr}
         \mathcal{R}_+(z)=\mathcal{R}_-(z)J_{\mathcal{R}}(z), \qquad z \in \Sigma_{\mathcal{R}},
        \end{equation}
        where
        \begin{equation}\label{jr}
         J_{\mathcal{R}}(z)=
         \begin{cases}
           B^{(1)}(z)\left(B^{(\infty)}(z)\right)^{-1}, \qquad  & z \in \partial U(1,\delta),     \\
           B^{(-1)}(z)\left(B^{(\infty)}(z)\right)^{-1}, \qquad  & z \in \partial U(-1,\delta),   \\
           B^{(0)}(z)\left(B^{(\infty)}(z)\right)^{-1}, \qquad & z \in \partial U(0,\delta),    \\
           B^{(\infty)}(z)J_B(z)\left(B^{(\infty)}(z)\right)^{-1}, \qquad & \mathrm{otherwise}.
         \end{cases}
        \end{equation}
  \item [(c)] As $z \to \infty$, we have
        \begin{equation}\label{rztoinfinity}
         \mathcal{R}(z) =I+O\left(\frac{1}{z}\right).
        \end{equation}
 \end{itemize}
\end{rhp}

It is easy to see from \eqref{jumpforb-formula} and \eqref{pinfinity} that $B^{(\infty)}(z)J_B(z)\left(B^{(\infty)}(z)\right)^{-1}$ tends to the identity matrix exponentially as $i \tau \to +\infty$, uniformly for $z \in \Sigma_{\mathcal{R}} \setminus \{ \partial U(-1,\delta) \cup \partial U(0,\delta) \cup \partial U(0,\delta)  \}$.  For $z \in \partial U(-1,\delta) \cup \partial U(0,\delta) \cup \partial U(0,\delta)$, it follows from \eqref{matchforp1}, \eqref{matchforp-1} and \eqref{matchforp0} that
\begin{equation}\label{jrasstoinfinity}
 J_{\mathcal{R}}(z) =I+\frac{J_{\mathcal{R},1}(z)}{\tau}+O\left(\frac{1}{\tau^2}\right), \qquad i \tau \to +\infty.
\end{equation}
This shows that RH problem \ref{RHP: R-large-s} is a small norm problem when $i \tau \to +\infty$, which gives us the following approximation of $\mathcal{R}$.

\begin{proposition}
We have
\begin{equation}\label{rstoinfinity}
  \mathcal{R}(z) = I + \frac{\mathcal{R}_1(z)}{\tau} + O\left(\frac{1}{\tau^2}\right), \qquad i \tau \to +\infty,
\end{equation}
uniformly for $z \in \mathbb{C} \setminus \Sigma_{\mathcal{R}}$, where
\begin{equation}\label{r111}
 \left(\mathcal{R}_1(z)\right)_{11} = -\left(\mathcal{R}_1(z)\right)_{22} = \begin{cases}
   -\frac{2c^2}{z+1}+\frac{2(\beta^2-\alpha^2)}{z},   & z \in U(1, \delta),  \\
   -\frac{2c^2}{z+1}-\frac{2c^2}{z-1},   & z \in U(0, \delta),  \\
   -\frac{2c^2}{z-1}+\frac{2(\beta^2-\alpha^2)}{z},   & z \in U(-1, \delta), \\
   -\frac{2c^2}{z+1}-\frac{2c^2}{z-1}+\frac{2(\beta^2-\alpha^2)}{z},  & \mathrm{otherwise}.
 \end{cases}
\end{equation}
\end{proposition}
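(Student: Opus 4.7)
\medskip

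\noindent\textbf{Proof proposal.}
The plan is to treat RH problem \ref{RHP: R-large-s} as a small-norm problem in the standard way and then read off $\mathcal{R}_1(z)$ from an explicit residue calculation on each of the three boundary circles $\partial U(\pm 1,\delta)$ and $\partial U(0,\delta)$. First, I would verify that on the non-circular parts of $\Sigma_{\mathcal{R}}$ the jump $J_{\mathcal{R}}-I$ is exponentially small in $\tau$: this is immediate from the explicit form of $J_B$ in \eqref{jumpforb-formula} together with the boundedness of $B^{(\infty)}(z)$ and $B^{(\infty)}(z)^{-1}$ on those contours (note that $|e^{\pm \tau z/2}|$ is exponentially small in $i\tau$ in the relevant half-planes). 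On the three circles, Lemmas covering \eqref{matchforp1}, \eqref{matchforp-1} and \eqref{matchforp0} already give $J_{\mathcal{R}}(z)=I+O(1/\tau)$. Hence the standard small-norm theory (cf.\ \cite{Deift1999book}) yields existence of $\mathcal{R}$ for large $i\tau$ together with the representation
\begin{equation*}
\mathcal{R}(z)=I+\frac{1}{2\pi i}\int_{\Sigma_{\mathcal{R}}}\frac{\mathcal{R}_-(\zeta)\bigl(J_{\mathcal{R}}(\zeta)-I\bigr)}{\zeta-z}\,d\zeta,
\end{equation*}
from which one obtains $\mathcal{R}(z)=I+O(1/\tau)$ uniformly.

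Next, I would iterate once. Insert the asymptotic $\mathcal{R}_-(\zeta)=I+O(1/\tau)$ into the integral and use the expansion \eqref{jrasstoinfinity} to get
\begin{equation*}
\mathcal{R}(z)=I+\frac{1}{\tau}\cdot\frac{1}{2\pi i}\int_{\partial U(-1,\delta)\cup\partial U(0,\delta)\cup\partial U(1,\delta)}\frac{J_{\mathcal{R},1}(\zeta)}{\zeta-z}\,d\zeta+O\bigl(\tau^{-2}\bigr).
\end{equation*}
To compute $J_{\mathcal{R},1}(\zeta)$ on each circle I would plug the infinity expansion of the confluent hypergeometric parametrix $\Phi(\cdot;a,b)$ from Appendix~\ref{sec:Whittaker} into the definitions \eqref{p1}, \eqref{p-1}, \eqref{p0} and conjugate by the corresponding $E^{(\pm 1)}$ or $E^{(0)}$ prefactor. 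Because $\zeta^{(\pm 1)}(z)=-\tau(z\mp 1)/2$ and $\zeta^{(0)}(z)=-\tau z/2$, a single $1/\zeta$ term in the $\Phi$-expansion contributes a simple pole of residue type at the respective center, with the residue determined by the subleading coefficient of $\Phi(\cdot;0,\pm ic)$ (contributing $-2c^2$ factors at $z=\pm 1$) and of $\Phi(\cdot;\alpha,\beta)$ (contributing $2(\beta^2-\alpha^2)$ at $z=0$). In other words, on each circle $J_{\mathcal{R},1}(\zeta)$ is a meromorphic function in the enclosed disk with a single pole at the center whose residue, in the $(1,1)$ and $(2,2)$ slots, is exactly the coefficient appearing in \eqref{r111}; the $E$-conjugations act trivially on the diagonal because the subleading $\Phi$-coefficient is diagonal.

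Finally, the Cauchy integral is evaluated by the residue theorem: for $z$ outside all three disks one picks up a residue from each pole, while for $z$ inside a given disk the corresponding contribution is moved to the other side (equivalently, subtracted), which reproduces the four cases listed in \eqref{r111}. The antidiagonal entries of $J_{\mathcal{R},1}$ contribute only to $O(\tau^{-2})$ after conjugation by the unbounded factors $\tau^{\pm ic\sigma_3}$, so they do not enter $(\mathcal{R}_1)_{11}$; this verifies that the diagonal formula \eqref{r111} is complete.

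The main technical obstacle is bookkeeping: one must keep track of the branch choices in $E^{(0)}, E^{(\pm 1)}$, the $(1-\gamma)^{\pm 1/2}$ and $\tau^{\pm ic}$ prefactors, and make sure the subleading coefficient of $\Phi(\cdot;a,b)$ as used in Appendix~\ref{sec:Whittaker} supplies precisely $-2c^2$ (from $a=0,b=\pm ic$ so that $a^2-b^2=c^2$ up to sign) and $2(\beta^2-\alpha^2)$ (from $a=\alpha,b=\beta$). Once that identification is made, the diagonal part of the residues falls into place and \eqref{rstoinfinity}--\eqref{r111} follow.
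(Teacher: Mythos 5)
Your proposal is correct and is essentially the paper's own argument: small-norm theory plus one iteration of the Cauchy-integral representation, explicit computation of $J_{\mathcal{R},1}$ on the three circles from the $O(1/\zeta)$ term of $\Phi$ in \eqref{infinitybehaviorforphi} conjugated by the analytic prefactors, and a residue evaluation producing the four cases of \eqref{r111} according to whether $z$ lies inside a given disk (the paper phrases this last step as an additive RH problem for $\mathcal{R}_1$ solved by Plemelj's formula, which is the same computation). Two small corrections to your side remarks: the diagonal entries survive the conjugation because $E^{(0)}$, $E^{(\pm 1)}$ are themselves diagonal (they are built from powers of $\sigma_3$)---the subleading $\Phi$-coefficient is \emph{not} diagonal---and the off-diagonal entries of $J_{\mathcal{R},1}$ are merely bounded rather than suppressed (for $\tau\in -i(0,\infty)$ one has $|\tau^{\pm ic}|=e^{\pm\pi c/2}$), but they never enter $(\mathcal{R}_1)_{11}$ anyway since the Cauchy transform is computed entrywise, so only $(J_{\mathcal{R},1})_{11}=-(J_{\mathcal{R},1})_{22}$ is needed.
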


\begin{proof}
The expansion \eqref{rstoinfinity} follows  from \eqref{jrasstoinfinity} and a standard argument for a small norm RH problem.

To get explicit expressions for the diagonal entries of $\mathcal{R}_1(z)$, we need more information about $J_{\mathcal{R},1}(z)$ in \eqref{jrasstoinfinity}. First we consider $|z-1|=\delta$. When $\Im z > 0$, it follows from \eqref{pinfinity}, \eqref{conformalmappingat1}-\eqref{e1} and \eqref{infinitybehaviorforphi} that
\begin{equation}\label{jr1near1}
  J_{\mathcal{R},1}(z)
   = -\frac{2c^2}{z-1} E^{(1)}(z) \begin{pmatrix}
   1 & \frac{\Gamma(-ic)}{\Gamma(1+ ic )} \\
   -\frac{\Gamma(ic)}{\Gamma(1-ic)} & -1
   \end{pmatrix}
   E^{(1)}(z)^{-1}.
\end{equation}
Note that $E^{(1)}(z)$ is an analytic function near $z = 1$, which implies that the above formula also holds for $\Im z < 0$ as well. By a similar argument, we have from \eqref{pinfinity}, \eqref{conformalmappingat-1}-\eqref{e-1}, \eqref{conformalmappingat0}-\eqref{e0} and \eqref{infinitybehaviorforphi},
\begin{equation}\label{jr1near-1}
  J_{\mathcal{R},1}(z)
   = -\frac{2c^2}{z+1}  E^{(1)}(z) \begin{pmatrix}
   1 & \frac{\Gamma(ic)}{\Gamma(1- ic )} \\
   -\frac{\Gamma(-ic)}{\Gamma(1+ic)} & -1
   \end{pmatrix}
   E^{(-1)}(z)^{-1}, \qquad |z+1|=\delta .
\end{equation}
and
\begin{equation}\label{jr1near0}
  J_{\mathcal{R},1}(z)
   =  - \frac{2(\alpha^2-\beta^2)}{z} \begin{pmatrix}
   1 & \frac{\Gamma(\alpha-\beta)}{\Gamma(\alpha+\beta+1)} \\
   -\frac{\Gamma(\alpha+\beta)}{\Gamma(\alpha-\beta+1)} & -1
   \end{pmatrix}
   E^{(-1)}(z)^{-1}, \qquad |z|=\delta .
\end{equation}

Now, we are ready to derive the explicit expression of $\mathcal{R}_1(z)$. A combination of \eqref{jumpforr}, \eqref{jrasstoinfinity} and \eqref{rstoinfinity} shows that $\mathcal{R}_1(z)$ satisfies the following RH problem.
\begin{rhp}
  \quad
  \begin{itemize}
    \item [(a)] $\mathcal{R}_1(z)$ is analytic in $\mathbb{C} \setminus \{\partial U(1,\delta) \cup \partial U(-1,\delta) \cup \partial U(0,\delta)\}$.
    \item [(b)] $\mathcal{R}_1(z)$ satisfies the jump condition
    \begin{equation}
     \mathcal{R}_{1,+}(z)=\mathcal{R}_{1,-}(z) +J_{\mathcal{R},1}(z), \qquad z \in \partial U(1,\delta) \cup \partial U(-1,\delta) \cup \partial U(0,\delta).
    \end{equation}
    \item [(c)] As $z \to \infty$, we have $\mathcal{R}_1(z) = O\left(\frac{1}{z}\right)$.
  \end{itemize}
\end{rhp}
By Plemelj's formula, we have
\begin{equation}\label{r1}
 \mathcal{R}_1(z)=
 \frac{1}{2\pi i} \int_{\partial U(0,\delta)} \frac{J_{\mathcal{R},1}(\zeta)}{\zeta-z}d\zeta
 +\frac{1}{2\pi i} \int_{\partial U(1,\delta)} \frac{J_{\mathcal{R},1}(\zeta)}{\zeta-z}d\zeta
 +\frac{1}{2\pi i} \int_{\partial U(-1,\delta)} \frac{J_{\mathcal{R},1}(\zeta)}{\zeta-z}d\zeta,
\end{equation}
As only the diagonal entries of $\Psi(z;\tau)$ are involved in the subsequent asymptotic derivation, we just focus on the diagonal entries of $\mathcal{R}_1(z)$.
Substituting the expressions of $J_{\mathcal{R},1}(z)$ in \eqref{jr1near1}-\eqref{jr1near0} into \eqref{r1}, we obtain \eqref{r111} by a direct residue computation and the observation $(J_{\mathcal{R},1}(z))_{11}=-(J_{\mathcal{R},1}(z))_{22}$.

This finishes the proof of the proposition.
\end{proof}

As an immediate consequence of \eqref{r111},  let us list the following estimation for $\mathcal{R}_1(z)$:
\begin{equation}\label{r1infinity}
 \mathcal{R}_1(z)=\frac{1}{z}\hat{\mathcal{R}}_1+O\left(\frac{1}{z^2}\right), \qquad z \to \infty,
\end{equation}
where the $(1,1)$-entry of $\hat{\mathcal{R}_1}$ is given by
\begin{equation}\label{r111infinity}
 \left(\hat{\mathcal{R}}_1\right)_{11}=-4c^2+2(\beta^2-\alpha^2).
\end{equation}
Moreover, one also gets asymptotics for the derivatives of $\mathcal{R}$ with respect to the parameter $\gamma$ from the expansion \eqref{rstoinfinity}:
\begin{equation} \label{rstoinfinity-diff}
\frac{\partial^k}{\partial \gamma^k}  \mathcal{R}(z) =  \frac{1}{\tau} \frac{\partial^k}{\partial \gamma^k} \mathcal{R}_1(z) + O\left(\frac{(\ln |\tau|)^k}{\tau^2}\right), \qquad i \tau \to +\infty, \quad k = 1,2,\cdots,
\end{equation}
uniformly for $z \in \mathbb{C} \setminus \Sigma_{\mathcal{R}}$; see similar analysis in \cite[Sec. 3.5]{Cha:Lene2021}.

\subsection{Proof of Theorem \ref{huvasymptotics}}

 With Propositions \ref{laxpairprop} and \ref{ydprop} and the steepest descent analysis above, we are ready to derive the large-$\tau$ asymptotics in Theorem \ref{huvasymptotics}.

First, let us consider the Hamiltonian $H(\tau)$, which is given in terms of the coefficient of $z^{-1}$-term in the large-$z$ expansion of $\Psi(z;\tau)$ in \eqref{infinitybehaviorforpsi}; see \eqref{h}. Tracing back the transformations $\Psi \mapsto A \mapsto B \mapsto \mathcal{R}$ in \eqref{a}, \eqref{openlens} and \eqref{r}, we have
\begin{equation}
\Psi(z;\tau)  = \mathcal{R}(z) B^{(\infty)}(z) e^{\frac{\tau z}{4}\sigma_3}
\end{equation}
for $z \in \mathbb{C} \setminus \{  U(-1,\delta) \cup   U(0,\delta) \cup   U(0,\delta) \}$. It then follows \eqref{pinfinity} and \eqref{rstoinfinity} that
\begin{equation}
  \Psi(z;\tau)
     = \left(I+\frac{\mathcal{R}_1(z)}{\tau}+O\left(\frac{1}{\tau^2}\right)\right)\left(\frac{z+1}{z-1}\right)^{ic \sigma_3}z^{-\beta \sigma_3}e^{\frac{\tau z}{4}\sigma_3}, \qquad i \tau \to +\infty.
\end{equation}
It is easy to see from \eqref{infinitybehaviorforpsi} that
\begin{equation}
\Psi_1(\tau) = \lim_{z \to \infty} z \left(\Psi(z;\tau) z^{\beta \sigma_3}e^{-\frac{\tau z}{4}\sigma_3} -I \right).
\end{equation}
Recalling \eqref{r1infinity}, the above two formulas give us
\begin{equation}
\Psi_1(\tau) = 2ic \sigma_3 + \frac{\hat{\mathcal{R}}_1}{\tau} +  O\left(\frac{1}{\tau^2}\right).
\end{equation}
Thus, we obtain the asymptotics of $H(\tau)$ in \eqref{hinfinity} from \eqref{h} and the above formula.

To tackle the asymptotics of the solutions to the coupled Painlev\'e V system,  we need asymptotics of the functions $y(\tau)$ and $d(\tau)$ given in \eqref{ypsi0} and \eqref{dpsi0}, which are related to the behavior of $\Psi(z;\tau)$ near $z = 0$;  see \eqref{localbehaviorforpsinear0}. From  \eqref{a}, \eqref{openlens} and \eqref{r}, we have
\begin{equation}
\Psi(z;\tau)  = \mathcal{R}(z) B^{(0)}(z) e^{\frac{\tau z}{4}\sigma_3}, \qquad z \in U(0, \delta).
\end{equation}
Recalling the definition of  $B^{(0)}(z)$ in \eqref{p0}, we get
\begin{equation}
  \Psi(z;\tau)
   = \mathcal{R}(z)\left(\frac{z+1}{z-1}\right)^{ic \sigma_3}  \left(\frac{\tau}{2}\right)^{\beta \sigma_3}(1-\gamma)^{\frac{1}{2}\sigma_3}\Phi(\zeta^{(0)}(z);\alpha,\beta)(1-\gamma)^{-\frac{1}{2}\sigma_3}
   e^{\frac{1}{2}\pi i \alpha\sigma_3}
\end{equation}
for $z \in U(0, \delta) $ and $\arg z \in (\frac{\pi}{4},\frac{\pi}{2})$. The above formula, together with the local behavior of $\Phi$ in \eqref{localbehaviorforphinear0}, yields
\begin{equation}
 \begin{aligned}
  \Psi(z;\tau)
   = \mathcal{R}(z)\left(I+O(z)\right)\left(\frac{\tau}{2}\right)^{\beta \sigma_3}
  \begin{pmatrix}
   \frac{\Gamma(1+\alpha-\beta)}{\Gamma(1+2\alpha)} & *\\
   \frac{\Gamma(1+\alpha+\beta)}{\Gamma(1+2\alpha)} & *
  \end{pmatrix}
  \left(I+O(z)\right)
  \left(\frac{\tau z}{2}\right)^{\alpha \sigma_3}(1-\gamma)^{-\frac{1}{2}\sigma_3},
 \end{aligned}
\end{equation}
as $z \to 0$, where * denotes certain unimportant entries. Substituting \eqref{rstoinfinity} into the above formula,  a direct comparison with \eqref{localbehaviorforpsinear0} shows that
\begin{eqnarray}
 \begin{aligned}
   \left(\Psi_0^{(0)}(\tau)\right)_{11} & = & \frac{\Gamma(1+\alpha-\beta)}{\Gamma(1+2\alpha)}\left(\frac{\tau}{2}\right)^{\alpha+\beta}(1-\gamma)^{-\frac{1}{2}}\left(1+O\left(\frac{1}{\tau}\right)\right), \\
   \left(\Psi_0^{(0)}(\tau)\right)_{21} & = & \frac{\Gamma(1+\alpha+\beta)}{\Gamma(1+2\alpha)}\left(\frac{\tau}{2}\right)^{\alpha-\beta}(1-\gamma)^{-\frac{1}{2}}\left(1+O\left(\frac{1}{\tau}\right)\right),
 \end{aligned}
\end{eqnarray}
as $i \tau \to +\infty$. Therefore, with \eqref{ypsi0} and \eqref{dpsi0}, we have, as $i \tau \to +\infty$,
\begin{eqnarray}
 y(\tau) & = & \frac{\Gamma(1+\alpha-\beta)}{\Gamma(1+\alpha+\beta)}\left(\frac{\tau}{2}\right)^{2\beta}\left(1+O\left(\frac{1}{\tau}\right)\right), \label{y}     \\
 d(\tau) & = & \frac{2\alpha}{1-\gamma}\frac{\Gamma(1+\alpha-\beta)\Gamma(1+\alpha+\beta)}{(\Gamma(1+2\alpha))^2}\left(\frac{\tau}{2}\right)^{2\alpha}\left(1+O\left(\frac{1}{\tau}\right)\right).\label{d}
\end{eqnarray}

Next, we move to derive asymptotics of $u_1(\tau)$ and $v_1(\tau)$. Recalling the coefficient matrix $A_1(\tau)$  in \eqref{a1}, we have
\begin{equation} \label{eq: v1-u1-relation}
v_1(\tau) = \frac{(A_1)_{21}}{(A_1)_{11}} y(\tau), \qquad u_1(\tau) = -\frac{(A_1)_{11}}{v_1(\tau)}.
\end{equation}
As $A_1(\tau)$ is the coefficient matrix of the $(z-1)^{-1}$ term in \eqref{l}, we need the asymptotics of $\Psi(z;\tau)$ when $z \to 1$. Let us focus on the sector $z \in U(1,\delta)$ and  $\arg (z-1) \in (\frac{\pi}{4},\frac{\pi}{2})$.
A combination of \eqref{a}, \eqref{conformalmappingat1}, \eqref{p1}, \eqref{r} and \eqref{differentialrelationofphi} yields
\begin{eqnarray*}
  && L(z)  = \frac{\partial}{\partial z}\Psi(z;\tau) \cdot \Psi(z;\tau)^{-1}
   =  \mathcal{R}^\prime(z)\mathcal{R}^{-1}(z)+\mathcal{R}(z)
   \left(E^{(1)}(z)\right)^\prime\left(E^{(1)}(z)\right)^{-1}\mathcal{R}^{-1}(z)       \\
   & & \qquad+\mathcal{R}(z)E^{(1)}(z)\left(-\frac{1}{2}\sigma_3 + \frac{2}{e^{\pi i}\tau(z-1)}
  \begin{pmatrix} -ic & \frac{\Gamma(1-ic)}{\Gamma(ic)} \\ \frac{\Gamma(1+ic)}{\Gamma(-ic)} & ic \end{pmatrix} \right)\left(E^{(1)}(z)\right)^{-1}\mathcal{R}^{-1}(z)(\zeta^{(1)}(z)) ^\prime,
\end{eqnarray*}
where $(\cdot)^\prime$ represents the derivative with respect to $z$. Note that both $\mathcal{R}(z)$ and $E^{(1)}(z)$ are analytic function for $z \in U(1,\delta)$. It then follows from \eqref{e1} and \eqref{rstoinfinity} that
\begin{equation}
  A_1(\tau)=
  \begin{pmatrix}
  -ic & * \\
  -ic\frac{\Gamma(1+ic)}{\Gamma(1-ic)}\tau^{-2ic}e^{-\frac{\tau}{2}}e^{-\pi i (\alpha-\beta)}(1-\gamma)^{-\frac{1}{2}} & *
  \end{pmatrix}
  \left(I+O\left(\frac{1}{\tau}\right)\right), \quad i \tau \to +\infty.
\end{equation}
Therefore, the asymptotics for $u_1(\tau)$ and $v_1(\tau)$ in \eqref{u1infinity} and \eqref{v1infinity} follow from \eqref{y}, \eqref{eq: v1-u1-relation} and the above formula.

The derivation of asymptotics of $u_2(\tau)$ and $v_2(\tau)$ in \eqref{u2infinity}  and \eqref{v2infinity} is analogous to that of $u_1(\tau)$ and $v_1(\tau)$. From \eqref{a}, \eqref{conformalmappingat-1}, \eqref{p-1}, \eqref{r} and \eqref{differentialrelationofphi}, we have
\begin{eqnarray*}
  && L(z)  = \frac{\partial}{\partial z}\Psi(z;\tau) \cdot \Psi(z;\tau)^{-1}  =  \mathcal{R}^\prime(z)\mathcal{R}^{-1}(z)+\mathcal{R}(z)\left(E^{(-1)}(z)\right)^\prime
             \left(E^{(-1)}(z)\right)^{-1}\mathcal{R}^{-1}(z)     \\
       & & \qquad  +\mathcal{R}(z)E^{(-1)}(z)\left(-\frac{1}{2}\sigma_3 + \frac{2}{e^{\pi i}\tau(z+1)}
  \begin{pmatrix} ic & \frac{\Gamma(1+ic)}{\Gamma(-ic)} \\ \frac{\Gamma(1-ic)}{\Gamma(ic)} & -ic \end{pmatrix} \right)\left(E^{(-1)}(z)\right)^{-1}\mathcal{R}^{-1}(z)(\zeta^{(-1)}(z))^\prime
\end{eqnarray*}
for $z \in U(-1, \delta)$ and $\arg (z+1) \in (\frac{\pi}{4},\frac{\pi}{2})$. Since the remaining computations are similar,  we omit the details.

This completes the proof of Theorem \ref{huvasymptotics}.  \hfill \qed

\section{Asymptotic analysis of the model RH problem as $i \tau \to 0^+$} \label{Sec: small-s}

In this section, we carry out a nonlinear steepest descent analysis for $\Psi(z;\tau)$ as $i \tau \to 0^+$. The analysis is  relatively simpler than the large-$\tau$ case in the previous section. Moreover, it is similar to the case $\gamma = 1$ in \cite[Sec. 3]{Xu:Zhao2020}.

\subsection{Global parametrix}

We first introduce a rescaling as
\begin{equation}\label{x}
 X(z;\tau)=\left(\frac{|\tau|}{2}\right)^{-\beta \sigma_3}\Psi\left(\frac{2z}{|\tau|};\tau\right).
\end{equation}
Under this transformation, the interval $[-1,1]$ is mapped to $\left[-\frac{|\tau|}{2},\frac{|\tau|}{2}\right]$, which is contained in  $U(0, \delta)$ for an arbitrary small but fixed positive constant $\delta>0$ as $i \tau \to 0^+$. Recalling the model RH problem \ref{modelrhp} for $\Psi(z;\tau)$, the parameter $\gamma$ only appears in the jump on $(-1,1)$ and the local behaviours near the endpoints $\pm 1$ and 0. As a consequence, the global parametrix for $z \in  \mathbb{C} \setminus U(0, \delta)$ is exactly the same as that in \cite[Sec. 3.1]{Xu:Zhao2020}. Here, we decide to skip the details and  just list the results for brevity.

The global parametrix is constructed in terms of the confluent hypergeometric functions as follows
\begin{equation}\label{pinfinityassto0}
 X^{(\infty)}(z)= e^{\frac{1}{2}\pi i \beta \sigma_3}\sigma_3 M(e^{\frac{\pi i}{2}}z) \sigma_3, \qquad \arg z \in (-\frac{\pi}{2},\frac{3}{2}\pi),
\end{equation}
where $M(z)$ is the $2 \times 2$ matrix-valued function given in \cite[Sec. 4.2.1]{Cla:Its:Kra2011}. More precisely, we have
\begin{equation}
 M(\zeta)=\zeta^{\alpha \sigma_3}
 \begin{pmatrix}
  \begin{smallmatrix}
   U(\alpha+\beta, 1+2\alpha, \zeta) & \frac{\Gamma(1+\alpha-\beta)}{\alpha+\beta}U(1+\alpha-\beta, 1+2\alpha, e^{-\pi i}\zeta)e^{-\pi i \alpha}e^{-\pi i \beta}\\
   \frac{\Gamma(1+\alpha+\beta)}{\alpha-\beta}U(1-\alpha+\beta, 1-2\alpha, \zeta)e^{2\pi i \beta} &
   U(-\alpha-\beta, 1-2\alpha, e^{-\pi i}\zeta)e^{\pi i \alpha}e^{\pi i \beta}
  \end{smallmatrix}
 \end{pmatrix}
 e^{-\frac{\zeta}{2}\sigma_3},
\end{equation}
for $\arg \zeta \in (\frac{\pi}{4}, \frac{3 \pi}{4})$, where $U(a,b,z)$ is the Kummer function; see \cite[Sec. 13.2]{NIST}. In addition, with the explicit expression of $X^{(\infty)}(z)$, we are able to obtain the  large-$z$ expansion for later use:
\begin{equation}\label{infinitybehaviorforpinfinityassto0}
 X^{(\infty)}(z)=\left(I+\frac{X_1^{(\infty)}(\tau)}{z}+O\left(\frac{1}{z^2}\right)\right)z^{-\beta \sigma_3}e^{-\frac{i}{2}z\sigma_3} \qquad \textrm{as }  z \to \infty,
\end{equation}
with
\begin{equation}\label{pinfinity1}
 X_1^{(\infty)}(\tau)=
 \begin{pmatrix}
  -(\alpha^2-\beta^2)i & ie^{-\pi i\beta}\frac{\Gamma(1+\alpha-\beta)}{\Gamma(\alpha+\beta)} \\
  -ie^{\pi i\beta}\frac{\Gamma(1+\alpha+\beta)}{\Gamma(\alpha-\beta)} & (\alpha^2-\beta^2)i
 \end{pmatrix}.
\end{equation}

\subsection{Local parametrix}

In $U(0,\delta)$, we seek a function $X^{(0)}(z)$ satisfying a RH problem as follows. Note that, the parameter $\gamma$ will take effect in this problem.

\begin{rhp}\label{rhpforp0assto0}
 \quad
 \begin{itemize}
  \item [(a)] $X^{(0)}(z)$ is analytic in $U(0,\delta) \setminus \Sigma^X$, where $\Sigma^X$ represents the jump contours for $X(z;\tau)$.
  \item [(b)] $X^{(0)}(z)$ satisfies the same jump condition as $X(z;\tau)$ in $U(0,\delta) \cap \Sigma_X$.
  \item [(c)] As $i \tau \to 0^+$, we have the following matching condition
              \begin{equation}\label{matchingconditionforp0assto0}
                X^{(0)}(z)=(I+O(\tau^{2\alpha+1}))X^{(\infty)}(z), \qquad z \in \partial U(0,\delta).
              \end{equation}
 \end{itemize}
\end{rhp}

To construction the solution to the above RH problem, let us define the following functions
\begin{equation}\label{k}
 k(\zeta):=-\frac{\gamma e^{\pi i (\alpha-\beta)}}{2 \pi i}\int_0^\frac{1}{2} \frac{x^{2\alpha}}{x-\zeta}dx
 -\frac{\gamma e^{\pi i (\alpha+\beta)}}{2 \pi i}\int_{-\frac{1}{2}}^0 \frac{|x|^{2\alpha}}{x-\zeta}dx,
\end{equation}
and
\begin{small}
\begin{eqnarray}
  &&X^{(\infty,0)}(z)
   :=e^{-\frac{iz}{2}}   \label{pinfinity0}        \\
  && \times \begin{pmatrix}
   \frac{\Gamma(1+\alpha-\beta)}{\Gamma(1+2\alpha)}\mathrm{M}(\alpha+\beta,1+2\alpha,iz)e^{-\frac{1}{2}\pi i (\alpha+\beta)}   &
   -\frac{\Gamma(2\alpha)}{\Gamma(\alpha+\beta)}\mathrm{M}(-\alpha+\beta,1-2\alpha,iz)e^{\frac{1}{2}\pi i (\alpha-\beta)}        \\
   \frac{\Gamma(1+\alpha+\beta)}{\Gamma(1+2\alpha)}\mathrm{M}(1+\alpha+\beta,1+2\alpha,iz)e^{-\frac{1}{2}\pi i (\alpha-\beta)} &
   \frac{\Gamma(2\alpha)}{\Gamma(\alpha-\beta)}\mathrm{M}(1-\alpha+\beta,1-2\alpha,iz)e^{\frac{1}{2}\pi i (\alpha+\beta)}
  \end{pmatrix}, \nonumber
\end{eqnarray}
\end{small}
where $\mathrm{M}(a,b,z)$ is the Kummer function; see \cite[Sec. 13.2]{NIST}. Then, we have the following result.

\begin{proposition} \label{prop: small-s-para}
Let $k(\zeta)$ and $X^{(\infty,0)}(z)$ be given in \eqref{k} and \eqref{pinfinity0}. Then, the solution to RH problem \ref{rhpforp0assto0} is given by, when $2\alpha \notin \mathbb{N}$,
  \begin{equation}\label{p0assto0}
    X^{(0)}(z)=X^{(\infty,0)}(z)
    \begin{pmatrix} 1 & |\tau|^{2 \alpha}k(\frac{z}{|\tau|}) \\ 0 & 1 \end{pmatrix}
      z^{\alpha \sigma_3}
      \begin{pmatrix} 1 & \frac{\sin \pi (\alpha+\beta)}{\sin 2\pi \alpha} \\ 0 & 1
    \end{pmatrix}C_i, \qquad   z \in \Omega^X_i, \\
  \end{equation}
where $C_i$ are the constant matrices given by the jump matrices in \eqref{jumpforpsi}, more precisely,
\begin{equation}\label{ci}
  C_1=I,\ C_2=J_2^{-1},\ C_3=J_2^{-1}J_3^{-1},\ C_4=J_1^{-1}J_5^{-1},\ C_5=J_1^{-1}.
\end{equation}
\end{proposition}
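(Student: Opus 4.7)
The plan is to verify the three properties of RH problem \ref{rhpforp0assto0} for the proposed $X^{(0)}(z)$. The guiding idea is that the scalar Cauchy-type integral $k(\zeta)$ in \eqref{k} is engineered precisely so that its Plemelj jump across the rescaled deformation interval $(-|\tau|/2,|\tau|/2)$ encodes the $\gamma$-dependent part of the jumps $J_6,J_7$, while the rest of the parametrix (i.e. $X^{(\infty,0)}(z)\,z^{\alpha\sigma_3}D\,C_i$, with $D$ denoting the middle unipotent factor) handles the undeformed jumps on $\Sigma_1,\ldots,\Sigma_5$ and the local behavior at $z=0$. Since the factors $X^{(\infty,0)}$, $z^{\alpha\sigma_3}D\,C_i$ are exactly those appearing in \cite{Xu:Zhao2020}, I would first recall from there that this product already solves the undeformed ($\gamma=0$ on the local scale) RH problem: it is analytic in $U(0,\delta)\setminus\Sigma^X$, satisfies the jumps prescribed by $J_1,\ldots,J_5$ and $e^{2\pi i\beta\sigma_3}$, and reproduces the correct local structure $z^{\alpha\sigma_3}$ at the origin.

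Next, I would check that left-multiplying by the triangular factor $K(z)=I+|\tau|^{2\alpha}k(z/|\tau|)E_{12}$ (where $E_{12}$ is the standard matrix unit) preserves all the jumps off $(-|\tau|/2,|\tau|/2)$ (because $k$ is analytic there) and produces exactly the desired modification on that interval. For $z_0\in(0,|\tau|/2)$, Plemelj applied to \eqref{k} gives
\begin{equation*}
k_+\!\bigl(z_0/|\tau|\bigr)-k_-\!\bigl(z_0/|\tau|\bigr)=-\gamma\,e^{\pi i(\alpha-\beta)}\,\bigl(z_0/|\tau|\bigr)^{2\alpha},
\end{equation*}
so $|\tau|^{2\alpha}\bigl(k_+-k_-\bigr)=-\gamma e^{\pi i(\alpha-\beta)}z_0^{2\alpha}$, and a direct matrix computation shows that this scalar jump, sandwiched between the $z^{\alpha\sigma_3}$ and the off-diagonal skeleton coming from $D\,C_i$ together with $J_6$ for the undeformed piece, assembles correctly into the full jump matrix $J_6$. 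The analogous calculation on $(-|\tau|/2,0)$ handles $J_7$.

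For the matching condition, I would expand $k(\zeta)$ for large $|\zeta|$:
\begin{equation*}
k(\zeta)=-\frac{1}{2\pi i\,\zeta}\left[\gamma e^{\pi i(\alpha-\beta)}\int_0^{1/2}\!x^{2\alpha}dx+\gamma e^{\pi i(\alpha+\beta)}\int_{-1/2}^{0}\!|x|^{2\alpha}dx\right]+O(\zeta^{-2}),
\end{equation*}
so on $\partial U(0,\delta)$ one has $|\tau|^{2\alpha}k(z/|\tau|)=O(|\tau|^{2\alpha+1})$. Combined with the identity $X^{(\infty,0)}(z)\,z^{\alpha\sigma_3}D\,C_i=X^{(\infty)}(z)$ on $\partial U(0,\delta)$—which is the content of the matching already established in \cite{Xu:Zhao2020} and follows from Kummer's connection formulas relating $\mathrm{M}(a,b,z)$ to $U(a,b,z)$—this yields \eqref{matchingconditionforp0assto0}.

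I expect the main obstacle to be the bookkeeping in step two: the jump of $K(z)$ is a scalar added to the $(1,2)$-entry, but to match the full matrix $J_6$ (whose $(2,2)$-entry is $1-\gamma$, not $1$) one must verify that the undeformed action of $X^{(\infty,0)}(z)\,z^{\alpha\sigma_3}D\,C_i$ across $(0,|\tau|/2)$ already contributes the $(1-\gamma)$ correctly, i.e. that the constants $C_i$ on either side of $\Sigma_6^X$ differ by the correct factor. Tracking signs, branches of $z^{\alpha\sigma_3}$, and the precise form of $D\,C_i^{-1}\,D^{-1}$ (with the $\frac{\sin\pi(\alpha+\beta)}{\sin 2\pi\alpha}$ factor responsible for turning the off-diagonal piece of $J_6$ into the diagonal piece after conjugation by $z^{\alpha\sigma_3}$) is where the calculation is most delicate.
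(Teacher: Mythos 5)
Your proposal is correct and follows essentially the same route as the paper: cite Xu--Zhao for the $\gamma$-independent structure, obtain the jump on $(-\tfrac{|\tau|}{2},\tfrac{|\tau|}{2})$ from the Plemelj jump of $k$ conjugated through $z^{\alpha\sigma_3}$, and deduce the matching condition from $k(\zeta)=O(\zeta^{-1})$, which gives the $O(\tau^{2\alpha+1})$ error on $\partial U(0,\delta)$. On the point you flag as delicate: the entry $1-\gamma$ in $J_6$ does not come from the ($\gamma$-independent) constants $C_i$ themselves, but from right-multiplying the constant jump $C_4^{-1}C_1=J_5J_1$ (whose $(1,1)$-entry vanishes) by the unipotent factor carrying the jump of $k$; this is precisely the identity $J_5J_1\begin{pmatrix}1 & -\gamma e^{\pi i(\alpha-\beta)}\\ 0 & 1\end{pmatrix}=J_6$ verified in the paper's proof.
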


\begin{proof}
In the definition of $X^{(0)}(z)$ in \eqref{p0assto0}, the only $\gamma$-dependent factor is $k(\zeta)$ given in \eqref{k}.
When $\gamma  =1$,  $k(\zeta)$ is the same as the function in \cite[Eq. (3.12)]{Xu:Zhao2020}. Therefore, $X^{(0)}(z)$ equals to that in \cite[Eq. (3.12)]{Xu:Zhao2020} when $\gamma = 1$.

Let us just focus on the $\gamma$-dependent jump on $(- \frac{|\tau|}{2},\frac{|\tau|}{2})$. It is easy to see from \eqref{k} that, $k(\zeta)$ is analytic for $\zeta \in \mathbb{C} \setminus [-\frac{1}{2}, \frac{1}{2}]$ with the following jump condition
\begin{equation}\label{jumpfork}
     k_+(\zeta)-k_-(\zeta)=\begin{cases}
       -\gamma e^{\pi i (\alpha-\beta)}\zeta^{2\alpha}, \qquad   & \zeta \in \left(0, \frac{1}{2}\right), \\
       -\gamma e^{\pi i (\alpha+\beta)}|\zeta|^{2\alpha}, \qquad & \zeta \in \left(-\frac{1}{2},0\right). \\
     \end{cases}
   \end{equation}
This give us
  \begin{equation}
    \left(X_-^{(0)}(z)\right)^{-1}X_+^{(0)}(z) = J_5(z) J_1(z)
    \begin{pmatrix}
      1 & -\gamma e^{\pi i (\alpha-\beta)} \\
      0 & 1
    \end{pmatrix}
     = J_6(z), \qquad z \in \left(0,\frac{|\tau|}{2}\right).
  \end{equation}
  The jump on $(-\frac{|\tau|}{2},0 )$ can be verified similarly.

  Next we verify the matching condition. It follows from \eqref{k} that
  \begin{equation}\label{kinfinity}
     k(\zeta) = \frac{\gamma e^{\pi i \alpha}\cos \beta \pi}{\pi i 2^{2\alpha+1}(2\alpha+1)}\cdot\frac{1}{\zeta}+ O\left(\frac{1}{\zeta ^2}\right), \qquad \zeta \to \infty.
   \end{equation}
  Then, as $i \tau \to 0^+$,  $|z| = \delta$ and $z \in \Omega_1^X$, we have  from \eqref{pinfinityassto0}, \eqref{p0assto0} and the above formula that
  \begin{equation}\label{jrassto0}
   \begin{aligned}
       & X^{(0)}(z)\left(X^{(\infty)}(z) \right)^{-1}       \\
     & = I + |\tau|^{2 \alpha}k\left(\frac{z}{|\tau|}\right) X^{(\infty,0)}(z)
     \begin{pmatrix} 0 & 1 \\ 0 & 0 \end{pmatrix}\left(X^{(\infty,0)}(z)\right)^{-1}    \\
     & = I + \frac{\gamma e^{\pi i \alpha}\cos \beta \pi}{\pi i 2^{2\alpha+1}(2\alpha+1)}\cdot\frac{|\tau|^{2\alpha+1}}{z}
     X^{(\infty,0)}(z)\begin{pmatrix} 0 & 1 \\ 0 & 0 \end{pmatrix}\left(X^{(\infty,0)}(z)\right)^{-1}
     +O\left(\tau^{2\alpha+2}\right).
   \end{aligned}
  \end{equation}
  The matching condition in other sectors can be verified similarly. This completes our proof.
\end{proof}
\begin{remark} \label{rmk: small-s-para}
When $2 \alpha \in \mathbb{N}$, there is a logarithmic singularity in $\Psi(z;\tau)$ near $z = 0$; see \eqref{localbehaviorforpsinear0}. Nevertheless, the solution to RH problem \ref{rhpforp0assto0} can be constructed in a similar manner; see \cite[Sec. 3.2]{Xu:Zhao2020} for more details.
\end{remark}

\subsection{Final transformation}

Now we define the final transformation as
\begin{equation}\label{r_}
 Z(z) = X(z;\tau) \begin{cases}
   \left(X^{(\infty)}(z)\right)^{-1}, \qquad & |z|>\delta, \\
   \left(X^{(0)}(z)\right)^{-1}, \qquad & |z|<\delta.      \\
 \end{cases}
\end{equation}
It is easily seen that $Z(z)$ satisfies the following RH problem.
\begin{rhp}
  The $2 \times 2$ matrix-valued function $Z(z)$ defined in \eqref{r_} has the following properties:
  \begin{itemize}
    \item [(a)] $Z(z)$ is analytic in $\mathbb{C} \setminus \partial U(0,\delta)$.
    \item [(b)] $Z(z)$ satisfies the jump condition
    \begin{equation}
      Z_+(z)=Z_-(z)J_{Z}(z), \qquad |z| = \delta,
    \end{equation}
    where the jump contour is taken clockwise and
    \begin{equation}
      J_{Z}(z)=X^{(0)}(z)\left(X^{(\infty)}(z)\right)^{-1}.
    \end{equation}
    \item [(c)] As $z \to \infty$, we have
    \begin{equation}\label{infinitybehaviorforr_}
     Z(z)=I+\frac{Z_1(\tau)}{z}+O\left(\frac{1}{z^2}\right).
    \end{equation}
  \end{itemize}
\end{rhp}

The expression of $J_{Z}(z)$ has been given in \eqref{jrassto0}, which indeed holds for $|z| = \delta$. As $i \tau \to 0^+$, the above RH problem is again a small-norm one. Finally, a direct residue computation gives us
\begin{equation}\label{rassto0}
  Z(z)=\begin{cases}
   I + \ds \frac{\gamma e^{\pi i \alpha}\cos \beta \pi}{\pi i 2^{2\alpha+1}(2\alpha+1)}\cdot\frac{|\tau|^{2\alpha+1}}{z}
   \mathcal{X}(0)  +O\left(\tau^{2\alpha+2}\right),    \ &  |z|>\delta,   \\
   I + \ds \frac{\gamma e^{\pi i \alpha}\cos \beta \pi}{\pi i 2^{2\alpha+1}(2\alpha+1)}\cdot\frac{|\tau|^{2\alpha+1}}{z} \left(\mathcal{X}(0) - \mathcal{X}(z)\right)
   +O\left(\tau^{2\alpha+2}\right), \ & |z|<\delta,
  \end{cases}
\end{equation}
where
\begin{equation}
    \mathcal{X}(z) = X^{(\infty,0)}(z)\begin{pmatrix} 0 & 1 \\ 0 & 0 \end{pmatrix}\left(X^{(\infty,0)}(z)\right)^{-1}.
\end{equation}

\subsection{Proof of Theorem \ref{huvasymptotics1}}
Based on Propositions \ref{laxpairprop} and \ref{ydprop} and the above RH analysis, we are ready to derive the small-$\tau$ asymptotics in Theorem \ref{huvasymptotics1}.

First let us consider the Hamiltonian $H(\tau)$. Denote the large-$z$ behavior for $X(z;\tau)$ in \eqref{x} as
\begin{equation}
  X(z;\tau)=\left(I + \frac{X_1(\tau)}{z} + O(z^{-2})\right)z^{-\beta \sigma_3}e^{-\frac{i}{2}z\sigma_3}, \qquad z \to \infty.
\end{equation}
Recalling the transformations in \eqref{r_}, we have from the above formula and \eqref{infinitybehaviorforpsi} that
\begin{equation}
 \left(\Psi_1(\tau)\right)_{11}=\frac{2\left(X_1(\tau)\right)_{11}}{|\tau|} = \frac{2}{|\tau|}\left((Z_1(\tau))_{11}+(X_{1}^{(\infty)}(\tau))_{11}\right).
\end{equation}
With the relation in \eqref{h}, we obtain the asymptotics of $H(\tau)$ in \eqref{hassto0} from \eqref{pinfinity1} and \eqref{rassto0}.

Like in the large-$\tau$ case, we make use of $y(\tau)$ and $d(\tau)$ to derive the asymptotics of $u_k(\tau),v_k(\tau)$, $k = 1,2$. Due to the relations \eqref{ypsi0} and \eqref{dpsi0}, let us investigate the local behavior of $\Psi(z;\tau)$ as $z \to 0$. As mentioned in Proposition \ref{prop: small-s-para} and Remark \ref{rmk: small-s-para},  we only consider the case $2\alpha \notin \mathbb{N}$, while the case for $2\alpha \in \mathbb{N}$ is similar. From \eqref{x} and \eqref{r_}, we have
\begin{equation} \label{eq: Psi-near0-small-s}
\Psi\left(\frac{2z}{|\tau|};\tau\right)= \left(\frac{|\tau|}{2}\right)^{\beta \sigma_3} Z(z) X^{(0)}(z), \qquad z \in U(0,\delta).
\end{equation}
With the explicit expression of $X^{(0)}(z)$ in \eqref {p0assto0} and the approximation \eqref{rassto0}, we obtain from \eqref{localbehaviorforpsinear0} that
\begin{equation}
 \Psi_0^{(0)}(\tau)=
 \begin{pmatrix}
  \left(\frac{|\tau|}{2}\right)^{\alpha+\beta}e^{-\frac{\pi i (\alpha+\beta)}{2}}\frac{\Gamma(1+\alpha-\beta)}{\Gamma(1+2\alpha)}  & * \\
  \left(\frac{|\tau|}{2}\right)^{\alpha-\beta}e^{-\frac{\pi i (\alpha-\beta)}{2}}\frac{\Gamma(1+\alpha+\beta)}{\Gamma(1+2\alpha)} & *
 \end{pmatrix}\left(I+O(\tau^{2\alpha+1})\right), \qquad i \tau \to 0^+.
\end{equation}
It then follows from \eqref{ypsi0} and \eqref{dpsi0} that
\begin{eqnarray}
 y(\tau) & = & \frac{\Gamma(1+\alpha-\beta)}{\Gamma(1+\alpha+\beta)}e^{-\pi i \beta}
 \left(\frac{|\tau|}{2}\right)^{2\beta}\left(1+O(\tau^{2\alpha+1})\right), \quad i \tau \to 0^+, \label{yassto0}  \\
 d(\tau) & = & 2\alpha \frac{\Gamma(1+\alpha-\beta)\Gamma(1+\alpha+\beta)}{(\Gamma(1+2\alpha))^2}
 e^{-\pi i \alpha}\left(\frac{|\tau|}{2}\right)^{2\alpha}\left(1+O(\tau^{2\alpha+1})\right), \quad i \tau \to 0^+.\label{dassto0}
\end{eqnarray}

Now, we are ready to derive asymptotics of $u_1(\tau)$ and $v_1(\tau)$. Combining \eqref{a1intermsofpsi} and \eqref{eq: v1-u1-relation},  we have
\begin{equation}\label{u1v1}
  v_1(\tau) = \frac{\left(\Psi_0^{(1)}(\tau)\right)_{21}}{\left(\Psi_0^{(1)}(\tau)\right)_{11}}y(\tau), \qquad
  u_1(\tau) = -\frac{\gamma e^{\pi i (\alpha-\beta)}}{v_1(\tau)}\left(\Psi_0^{(1)}(\tau)\right)_{11}\left(\Psi_0^{(1)}(\tau)\right)_{21},
\end{equation}
where $\Psi_0^{(1)}(\tau)$ arises in  the local behavior of $\Psi(z;\tau)$ at $z=1$ in \eqref{localbehaviorforpsinear1}. Due to the relation \eqref{eq: Psi-near0-small-s}, we take $\frac{2z}{|\tau|} = 1 $. With the explicit expression of $X^{(0)}(z)$ in \eqref {p0assto0} and the following approximation
\begin{equation}
     k(\zeta) \sim -\frac{\gamma e^{\pi i (\alpha-\beta)}}{2 \pi i}2^{-2\alpha} \ln(\zeta - \frac{1}{2}), \qquad \zeta \to \frac{1}{2},\label{k1/2}
   \end{equation}
a straightforward computation yields
\begin{equation}
 \begin{pmatrix}
  \left(\Psi_0^{(1)}(\tau)\right)_{11} & * \\
  \left(\Psi_0^{(1)}(\tau)\right)_{21} & *
 \end{pmatrix}
 =
 \begin{pmatrix}
  \left(\frac{|\tau|}{2}\right)^{\alpha+\beta}e^{-\frac{\pi i (\alpha+\beta)}{2}}\frac{\Gamma(1+\alpha-\beta)}{\Gamma(1+2\alpha)} & * \\
  \left(\frac{|\tau|}{2}\right)^{\alpha-\beta}e^{-\frac{\pi i (\alpha-\beta)}{2}}\frac{\Gamma(1+\alpha+\beta)}{\Gamma(1+2\alpha)} & *
 \end{pmatrix}\left(I+O(\tau^{2\alpha+1})+O(\tau) \right),
\end{equation}
where the $O(\tau)$-term come from the fact that
\begin{equation}\label{pinfinity0aszto0}
     X^{(\infty,0)}(\pm |\tau|/2) = X^{(\infty,0)}(0)(I+O(\tau)), \qquad \quad i \tau \to 0^+.
 \end{equation}
Therefore, the small-$\tau$ asymptotics of $u_1(\tau)$ and $v_1(\tau)$ in \eqref{u1assto0} and \eqref{v1assto0} follow from the above equation and \eqref{u1v1}.

Following the analogues procedure, we can obtain the small-$\tau$ asymptotics of $u_2(\tau)$ and $v_2(\tau)$. In this case, we need to take $\frac{2z}{|\tau|} = -1 $, and make use of the following approximations
   \begin{equation}
     k(\zeta) \sim \frac{\gamma e^{\pi i (\alpha+\beta)}}{2 \pi i}2^{-2\alpha} \ln(\zeta + \frac{1}{2}), \qquad \zeta \to -\frac{1}{2}, \label{k-1/2}
   \end{equation}
and
\begin{equation}
 \begin{pmatrix}
  \left(\Psi_0^{(-1)}(\tau)\right)_{11} & * \\
  \left(\Psi_0^{(-1)}(\tau)\right)_{21} & *
 \end{pmatrix}
 =
 \begin{pmatrix}
  \left(\frac{|\tau|}{2}\right)^{\alpha+\beta}e^{\frac{\pi i (\alpha-\beta)}{2}}\frac{\Gamma(1+\alpha-\beta)}{\Gamma(1+2\alpha)} & * \\
  \left(\frac{|\tau|}{2}\right)^{\alpha-\beta}e^{\frac{\pi i (\alpha+\beta)}{2}}\frac{\Gamma(1+\alpha+\beta)}{\Gamma(1+2\alpha)} & *
 \end{pmatrix}\left(I+O(\tau^{2\alpha+1})+O(\tau) \right),
 \end{equation}
which comes from  \eqref{localbehaviorforpsinear-1}, \eqref{p0assto0}, \eqref{rassto0} and \eqref{pinfinity0aszto0}.

This completes the proof of Theorem \ref{huvasymptotics1}.  \hfill \qed

\begin{remark}
  Via a further computation, one can show that the coefficients of $O(\tau^{2\alpha+1})$-term in \eqref{v1assto0} and \eqref{v2assto0} are indeed 0. Hence, the asymptotics can be improved to be
  \begin{eqnarray}
    v_1(\tau) & = & 1+O(\tau)+O(\tau^{2\alpha+2}),\\
    v_2(\tau) & = & 1+O(\tau)+O(\tau^{2\alpha+2}).
  \end{eqnarray}
\end{remark}

\section{Proof of main results} \label{sec: main-proof}

In the last section, we first establish the integral representation \eqref{eq:F-TW-formula}  for the deformed Fredholm determinant with $\gamma \in [0,1]$. Then, with this expression, we derive the large gap asymptotics in Theorem \ref{mainresult}.

\subsection{Integral expression for the Fredholm determinant}

First, we show that the integral representation \eqref{eq:F-TW-formula} holds for all $\gamma \in [0,1]$.

\begin{lemma} \label{lem: Integral-H}
  Let $\mathcal{K}^{(\alpha,\beta)}_s$  be the operator acting on $L^2(-s,s)$ with the confluent hypergeometric kernel given in \eqref{chgkernel}, and $H(\tau;\alpha,\beta)$ be the Hamiltonian for the Painlev\'e V equation given in \eqref{sh}. Then,  we have
  \begin{equation}\label{integralforfredholm}
      \det (I - \gamma \mathcal{K}^{(\alpha,\beta)}_s) = \exp\left( \int_0^{-4is} H(\tau; \alpha, \beta) d\tau \right), \qquad s>0, \ \gamma \in [0,1],
  \end{equation}
  for $\alpha > -\frac{1}{2}$ and $\beta \in i \mathbb{R}$.
\end{lemma}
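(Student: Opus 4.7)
The plan is to transfer the large-$n$ asymptotic \eqref{differentialfordn} through the discrete-to-continuous limit \eqref{fredholmandtoeplitz}, reducing the lemma to showing that the constant $\mathcal{L}$ appearing in \eqref{constantl} vanishes.

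First, I would take logarithms in \eqref{fredholmandtoeplitz} and write
\[
\ln\det(I - \gamma \mathcal{K}_s^{(\alpha,\beta)}) \;=\; \lim_{n\to\infty} \int_0^{2s/n} \frac{d}{dt}\ln D_n(t)\, dt.
\]
Substituting \eqref{differentialfordn} and making the change of variables $\tau = -2int$ (which maps the interval $[0, 2s/n]$ onto the segment $[0, -4is]$ of the negative imaginary axis, where $H$ is pole-free by Proposition \ref{Prop:Psi-exist}), the Hamiltonian term contributes $\int_0^{-4is} H(\tau;\alpha,\beta)\, d\tau$, the middle term integrates to $2s\mathcal{L}$, and the uniform $O(1)$ error collapses to $O(1/n)$ and drops out in the limit. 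The resulting identity is
\[
\ln\det(I - \gamma \mathcal{K}_s^{(\alpha,\beta)}) \;=\; \int_0^{-4is} H(\tau;\alpha,\beta)\, d\tau \;+\; 2s\mathcal{L},
\]
so proving the lemma reduces to verifying $\mathcal{L}=0$.

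For that, I would exploit the differential identity \eqref{sH+ic}, which asserts that the expression
\[
\mathcal{F}(\tau) := \tau H(\tau) + \frac{\gamma}{2\pi i}\Bigl(e^{\pi i(\alpha-\beta)}\bigl(\Psi_1^{(1)}(\tau)\bigr)_{21} + e^{-\pi i(\alpha-\beta)}\bigl(\Psi_1^{(-1)}(\tau)\bigr)_{21}\Bigr)
\]
is constant on $-i(0,+\infty)$. By \eqref{constantl}, $\mathcal{L} = 2i\,\mathcal{F}(\tau)/\tau$ for every such $\tau$, so the existence of the limit and its vanishing are simultaneously equivalent to $\mathcal{F}\equiv 0$. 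I would evaluate this constant by sending $i\tau \to 0^+$: Theorem \ref{huvasymptotics1} gives $\tau H(\tau) = O(|\tau|^{2\alpha+1})\to 0$ since $\alpha > -\tfrac{1}{2}$, and a complementary estimate $\bigl(\Psi_1^{(\pm 1)}(\tau)\bigr)_{21} \to 0$ is extracted from the small-$\tau$ steepest-descent analysis of Section \ref{Sec: small-s}.

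The main obstacle is this last small-$\tau$ estimate on $(\Psi_1^{(\pm 1)}(\tau))_{21}$, which goes beyond the information contained in Theorem \ref{huvasymptotics1}. The coefficient $\Psi_1^{(\pm 1)}(\tau)$ is the first-order Taylor coefficient of $\Psi^{(\pm 1)}(z;\tau)$ at $z=\pm 1$; via the rescaling in \eqref{eq: Psi-near0-small-s} this corresponds to the leading regular coefficient of $(|\tau|/2)^{\beta\sigma_3} Z(\zeta) X^{(0)}(\zeta)$ near $\zeta = \pm|\tau|/2$, after peeling off the logarithmic singularities in $X^{(0)}$ indicated by \eqref{k1/2}--\eqref{k-1/2}. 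Using $Z(\zeta) = I + O(|\tau|^{2\alpha+1})$ from \eqref{rassto0}, the approximation $X^{(\infty,0)}(\pm|\tau|/2) = X^{(\infty,0)}(0)(I+O(|\tau|))$ from \eqref{pinfinity0aszto0}, together with the chain-rule factor $|\tau|/2$ inherited from the rescaling, I expect to obtain $(\Psi_1^{(\pm 1)}(\tau))_{21} = O(|\tau|) + O(|\tau|^{2\alpha+1})$; this forces $\mathcal{F}\equiv 0$ and hence $\mathcal{L}=0$, establishing \eqref{integralforfredholm} for $\gamma\in[0,1)$. The endpoint $\gamma=1$ is then recovered either by continuity in $\gamma$ of both sides or by direct appeal to \cite[Theorem 3]{Xu:Zhao2020}.
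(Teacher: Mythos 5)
Your proposal is correct and takes essentially the same route as the paper: integrate the differential identity \eqref{differentialfordn} over $[0,2s/n]$, pass to the limit via \eqref{fredholmandtoeplitz} using the pole-free property of $H$, and reduce everything to $\mathcal{L}=0$, which is obtained from \eqref{sH+ic} combined with the small-$\tau$ estimates $\tau H(\tau)\to 0$ and $\bigl(\Psi_1^{(\pm 1)}(\tau)\bigr)_{21}=O(\tau)+O(\tau^{2\alpha+1})$ that the paper likewise extracts from \eqref{eq: Psi-near0-small-s} and Section \ref{Sec: small-s}. The only cosmetic difference is that you phrase the last step as ``$\mathcal{F}$ is constant by \eqref{sH+ic}, and the constant vanishes at $i\tau\to 0^+$,'' while the paper applies L'Hospital's rule and then invokes \eqref{sH+ic}; these use identical ingredients.
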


\begin{proof}
As the integral representation for $\gamma = 1$ has been established in \cite[Theorem 3]{Xu:Zhao2020}, we focus on the case $0\leq \gamma < 1$.

To achieve the integral expression, we recall \eqref{differentialfordn} and study the limit in \eqref{constantl}. With \eqref{eq: Psi-near0-small-s}, after a direct computation of $\Psi(\frac{2z}{|\tau|},\tau)^{-1}\Psi^{'}(\frac{2z}{|\tau|},\tau)$, we have
  \begin{eqnarray}
    \left(\Psi_1^{(1)}(\tau)\right)_{21} & = & O(\tau^{2\alpha+1}) + O(\tau), \qquad i \tau \to 0^+, \label{psi01assto0}\\
    \left(\Psi_1^{(-1)}(\tau)\right)_{21} & = & O(\tau^{2\alpha+1}) + O(\tau), \qquad i \tau \to 0^+. \label{psi0-1assto0}
  \end{eqnarray}
  With the approximation \eqref{hassto0}, the above two formulas yield
  \begin{equation}
    \lim_{i \tau \to 0^+} \left[ \tau H(\tau) + \frac{\gamma}{2 \pi i }  \left( e^{\pi i (\alpha-\beta)}
     \left(\Psi_1^{(1)}(\tau)\right)_{21}+e^{-\pi i (\alpha-\beta)}\left(\Psi_1^{(-1)}(\tau)\right)_{21} \right) \right] = 0.
  \end{equation}
  Hence, an application of L'Hospital's rule gives us
  \begin{equation}
    \mathcal{L} = 2i  \lim_{i \tau \to 0^+} \left[  \frac{d}{d \tau}(\tau H) + \frac{\gamma}{2 \pi i }  \left( e^{\pi i (\alpha-\beta)}
     \frac{d}{d \tau}\left(\Psi_1^{(1)}(\tau)\right)_{21}+e^{-\pi i (\alpha-\beta)}\frac{d}{d \tau}\left(\Psi_1^{(-1)}(\tau)\right)_{21} \right) \right].
  \end{equation}
Recalling \eqref{sH+ic}, it is immediate to see that $\mathcal{L} = 0$. Then, integrating both sides of \eqref{differentialfordn} with respect to $t$, we have
  \begin{equation}\label{integralexpressionfordn}
   \frac{D_n(t)}{D_n(0)}=\exp \left(\int_0^{-2int} H(\tau)d \tau +O\left(\frac{1}{n}\right) \right), \qquad n \to \infty,
  \end{equation}
  uniformly for $nt$ bounded, where we also use the property that $H(\tau)$ is pole-free for $\tau \in -i(0, +\infty)$; see Proposition \ref{Prop:Psi-exist}. Substituting the above formula into \eqref{fredholmandtoeplitz}, we obtain  \eqref{integralforfredholm}. 

  This completes the proof of the lemma.
\end{proof}

\subsection{Proof of Theorem \ref{mainresult}}


The differential identities in Proposition \ref{prop: diff-iden} play a significant role in our proof. We first integrate \eqref{inth} to get
\begin{equation}\label{1}
 \begin{aligned}
  \int_0^s H(\tau)d\tau
   & =\int_0^s\left(u_1(\tau)\frac{dv_1(\tau)}{d\tau}+u_2(\tau)\frac{v_2(\tau)}{d\tau}-H(\tau)\right)d\tau \\
   & \qquad
   +\left(\tau H(\tau)+\alpha \ln d(\tau) -\beta \ln y(\tau) - 2(\alpha^2-\beta^2)\ln \tau \right)\big|_{\tau=0}^s.
 \end{aligned}
\end{equation}
To tackle integral on the right-hand side of the above formula, we integrate the next differential identity  \eqref{differentialidentitylambda}  on the both sides about $s$, as well as about $\gamma$. This gives us
\begin{equation*}\label{3}
 \begin{aligned}
  & \int_0^s\left(u_1(\tau)\frac{dv_1(\tau)}{d\tau}+u_2(\tau)\frac{v_2(\tau)}{d\tau}-H(\tau)\right)d\tau  - \int_0^s \left(u_1(\tau)\frac{dv_1(\tau)}{d \tau}+u_2(\tau)\frac{v_2(\tau)}{d \tau}-H(\tau)\right)\bigg|_{\gamma=0} d \tau \\
  &  = \int_0^{\gamma}\left(u_1(s)\frac{\partial v_1(s)}{\partial \tilde{\gamma}}+u_2(s)\frac{\partial v_2(s)}{\partial \tilde{\gamma}}\right) d \tilde{\gamma}
  -\lim_{\tau \to 0} \int_0^{\gamma}\left(u_1(\tau) \frac{\partial v_1(\tau)}{\partial \tilde{\gamma}}+u_2(\tau)\frac{\partial v_2(\tau)}{\partial \tilde{\gamma}}\right)d \tilde{\gamma}.
 \end{aligned}
\end{equation*}
It is obvious to see from \eqref{a1intermsofpsi} and \eqref{a2intermsofpsi} that $A_1(\tau)=A_2(\tau)=0$ when $\gamma=0$. This, together with the asymptotics in \eqref{v1assto0},  \eqref{v2assto0}, \eqref{a1} and \eqref{a2} implies $u_1(\tau)=u_2(\tau)\equiv 0$ when $\gamma  = 0$. From the definition of $H(\tau)$ in \eqref{sh}, we also have $H(\tau) \equiv 0$ when $\gamma  = 0$. This shows that the second term on the left-hand side of the above formula vanishes. In addition, as the asymptotics in Theorem \ref{huvasymptotics1} is uniform in $\gamma$, it is easy to check that the second integral on the right-hand side of the above formula tends to 0 as $i \tau \to 0^+$. Combining the above two formulas, we obtain
\begin{equation}\label{4}
 \begin{aligned}
  \int_0^s H(\tau)d\tau
   & =\int_0^{\gamma}\left(u_1(s)\frac{\partial v_1(s)}{\partial \tilde{\gamma}}+u_2(s)\frac{\partial v_2(s)}{\partial \tilde{\gamma}}\right) d \tilde{\gamma}            \\
   & \qquad +\left(\tau H(\tau)+\alpha \ln d(\tau) -\beta \ln y(\tau) - 2(\alpha^2-\beta^2)\ln \tau \right)\big|_{\tau=0}^s.
 \end{aligned}
\end{equation}

To compute asymptotics of the  first term on the  right-hand side of \eqref{4}, we apply the large-$s$ asymptotics for $u_1,u_2,v_1,v_2$. It follows from \eqref{u1infinity} and \eqref{v1infinity} that
\begin{eqnarray}
u_1(s)\frac{\partial v_1(s)}{\partial \gamma} & =& u_1(s) v_1(s) \frac{\partial \ln v_1(s)}{\partial \gamma}  \nonumber \\
& =& ic\left(-2i \ln s +\pi  + \frac{i\Gamma'(1+ic)}{\Gamma(1+ic)}+\frac{i\Gamma'(1-ic)}{\Gamma(1-ic)}+ O(s^{-1}) \right) \frac{\partial c}{\partial \gamma} .  \label{5}
\end{eqnarray}
Similarly, we have from \eqref{u2infinity} and \eqref{v2infinity} that
\begin{equation}\label{8}
 u_2(s)\frac{\partial v_2(s)}{\partial \gamma}
 =ic\left(-2i \ln s +\pi + \frac{i\Gamma'(1+ic)}{\Gamma(1+ic)}+\frac{i\Gamma'(1-ic)}{\Gamma(1-ic)}  + O(s^{-1})\right)\frac{\partial c}{\partial \gamma} .
\end{equation}
Combining the above two formulas, we get
\begin{eqnarray}
 && \int_0^{\gamma}\left(u_1(s)\frac{\partial v_1(s)}{\partial \tilde{\gamma}}+u_2(s)\frac{\partial v_2(s)}{\partial \tilde{\gamma}}\right) d \tilde{\gamma} \nonumber \\
 && = \int_0^{c} 2 i \tilde{c} \left(-2i \ln s +\pi  + \frac{i\Gamma'(1+i\tilde{c})}{\Gamma(1+i\tilde{c})}+\frac{i\Gamma'(1-i\tilde{c})}{\Gamma(1-ic)}+ O(s^{-1}) \right)  d \tilde{c}
\end{eqnarray}
via a change of variables. With the integral expression of the Barnes $G$-function in \cite[Eq. (5.17.4)]{NIST}, we have from the above formula
\begin{equation}\label{9}
 \begin{aligned}
   \int_0^{\gamma}\left(u_1\frac{\partial v_1}{\partial \tilde{\gamma}}+u_2\frac{\partial v_2}{\partial \tilde{\gamma}}\right) d \tilde{\gamma}
   = c^2(2\ln s +\pi i) +2 \ln (G(1+ic)G(1-ic)) - 2c^2+O\left(s^{-1}\right).
 \end{aligned}
\end{equation}

Last, we drive asymptotics of the last term in the  right-hand of \eqref{4} by straightforward computations. From the large-$\tau$ asymptotics in  \eqref{hinfinity}, \eqref{y} and \eqref{d}, we have, as $i \tau \to +\infty$,
\begin{multline} \label{10}
\tau H(\tau)+\alpha \ln d(\tau) -\beta \ln y(\tau) -2(\alpha^2 - \beta^2)\ln \tau =  -ic \tau+2c^2  \\
   +\alpha\ln\left(2\alpha\frac{\Gamma(1+\alpha-\beta)\Gamma(1+\alpha+\beta)}{(\Gamma(1+2\alpha))^2 2^{2\alpha}}e^{2\pi c}\right)
   -\beta\ln\left(\frac{\Gamma(1+\alpha-\beta)}{\Gamma(1+\alpha+\beta) 2^{2\beta}}\right)
   +O\left(\tau^{-1}\right).
\end{multline}
Similarly,  with \eqref{hassto0}, \eqref{yassto0} and \eqref{dassto0}, we get, as $i \tau \to 0^+$,
\begin{equation}\label{11}
 \begin{aligned}
    & \tau H(\tau)+\alpha \ln d(\tau) -\beta \ln y(\tau) -2(\alpha^2 - \beta^2)\ln \tau  \\
  = & \alpha\ln\left(2\alpha\frac{\Gamma(1+\alpha-\beta)\Gamma(1+\alpha+\beta)}{(\Gamma(1+2\alpha))^2 2^{2\alpha}}\right)
  -\beta\ln\left(\frac{\Gamma(1+\alpha-\beta)}{\Gamma(1+\alpha+\beta) 2^{2\beta}}\right)
  +O\left(\tau^{2\alpha+1}\right).
 \end{aligned}
\end{equation}
Finally, substituting \eqref{9}-\eqref{11} into \eqref{4} and using the integral representation \eqref{integralforfredholm}, we obtain \eqref{fdeterminant}.

This finishes the proof of Theorem \ref{mainresult}. \hfill \qed

\section*{Acknowledgments}

The authors thank the anonymous referees for their careful reading and constructive suggestions. We also thank Professors Shuai-Xia Xu and Yu-Qiu Zhao for helpful discussions and comments.

Dan Dai was partially supported by grants from the City University of Hong Kong (Project No. 7005252 and 7005597), and a grant
from the Research Grants Council of the Hong Kong Special Administrative Region, China (Project No.  CityU 11300520).

\begin{appendices}

\section{Confluent hypergeometric parametrix}\label{sec:Whittaker}


This parametrix is given in \cite[Sec. 4.3]{Its:Kra2008} for the case $a=0$ and \cite[Sec. 8.2]{Cha2019} for general parameter $a$. With $a$ and $b$ be two constants, the $2 \times 2$ matrix-valued function $\Phi(\zeta;a,b)$ satisfies the following RH problem.

\begin{rhp}
 \hfill
 \begin{itemize}
  \item [(a)] $\Phi(\zeta)= \Phi(\zeta;a,b)$ is analytic for $\zeta \in \mathbb{C} \setminus \{\cup_{i=1}^8 \Gamma_i \}$,  where the  contours are defined below
        \begin{equation*}
         \begin{aligned}
           & \Gamma_1=e^{\frac{\pi i}{2}}\mathbb{R}^{+}, \qquad \Gamma_2=e^{\frac{3\pi i}{4}}\mathbb{R}^{+}, \qquad
          \Gamma_3=(-\infty,0), \qquad  \Gamma_4=-e^{-\frac{3 \pi i}{4}}\mathbb{R}^{+}, \qquad  \\
           & \Gamma_5=-e^{-\frac{\pi i}{2}}\mathbb{R}^{+}, \qquad \Gamma_6=-e^{-\frac{\pi i}{4}}\mathbb{R}^{+}, \qquad \Gamma_7=(0,\infty), \qquad \Gamma_8=e^{\frac{\pi i}{4}}\mathbb{R}^{+};
         \end{aligned}
       \end{equation*}
       see Figure \ref{figure7} for an illustration.

        \begin{figure}[h]
          \centering
          \includegraphics[width=3.5in]{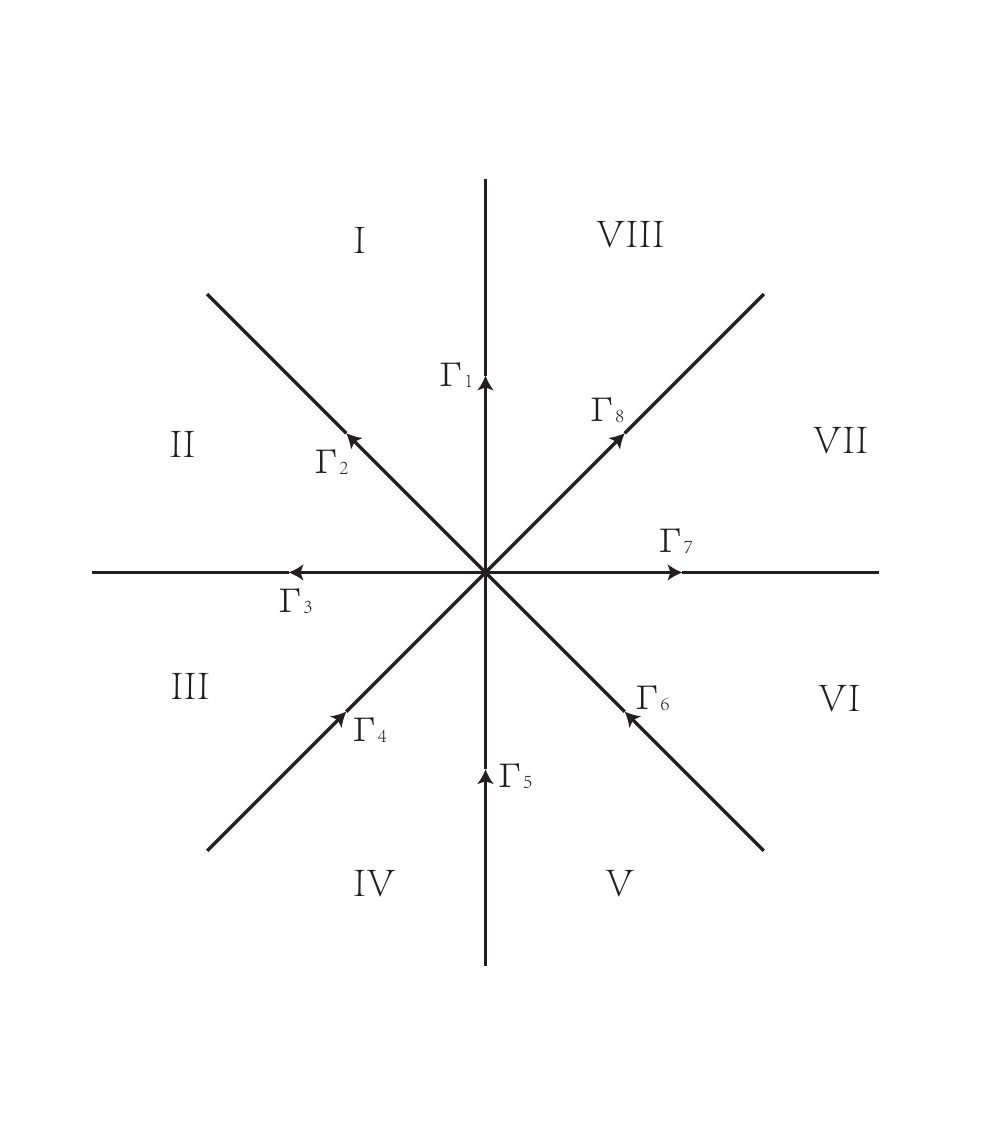}
          \caption{Contours for the RH problem for $\Phi(\zeta)$. Regions I-VIII are also depicted.}
          \label{figure7}
        \end{figure}

  \item [(b)] $\Phi(\zeta)$ satisfies the following jump conditions:
        \begin{equation}\label{jumpforphi}
         \Phi_+(\zeta)=\Phi_-(\zeta)J_{\Phi,i}(\zeta), \qquad \zeta \in \Gamma_i,
       \end{equation}
       where
       \begin{eqnarray}
           && J_{\Phi,1}(\zeta) = \begin{pmatrix} 0 & e^{-\pi i b} \\ -e^{\pi i b} & 0 \end{pmatrix}, \
           J_{\Phi,2}(\zeta) = \begin{pmatrix} 1 & 0 \\ e^{-2\pi i a}e^{\pi i b} & 1 \end{pmatrix}, \
           J_{\Phi,3}(\zeta) = \begin{pmatrix} e^{\pi i a} & 0 \\ 0 & e^{-\pi i a} \end{pmatrix}, \nonumber \\
           && J_{\Phi,4}(\zeta) = \begin{pmatrix} 1 & 0 \\ e^{2\pi i a}e^{-\pi i b} & 1 \end{pmatrix},\
           J_{\Phi,5}(\zeta) = \begin{pmatrix} 0 & e^{\pi i b} \\ -e^{-\pi i b} & 0 \end{pmatrix}, \
           J_{\Phi,6}(\zeta) = \begin{pmatrix} 1 & 0 \\ e^{-2\pi i a}e^{-\pi i b} & 1 \end{pmatrix}, \nonumber  \\
           && J_{\Phi,7}(\zeta) = \begin{pmatrix} e^{\pi i a} & 0 \\ 0 & e^{-\pi i a} \end{pmatrix}, \
           J_{\Phi,8}(\zeta) = \begin{pmatrix} 1 & 0 \\ e^{2\pi i a}e^{\pi i b} & 1 \end{pmatrix}.
       \end{eqnarray}

  \item [(c)] As $\zeta \to \infty$, we have
        \begin{equation}\label{infinitybehaviorforphi}
         \Phi(\zeta)=\left(I + \frac{a^2-b^2}{\zeta} \begin{pmatrix} 1 & \frac{\Gamma(a-b)}{\Gamma(a+b+1)} \\ -\frac{\Gamma(a+b)}{\Gamma(a-b+1)} & -1 \end{pmatrix} + O\left(\frac{1}{\zeta^2}\right) \right)
         \zeta^{-b\sigma_3}e^{-\frac{1}{2}\zeta\sigma_3}C(\zeta),
        \end{equation}
        where the branch cut of $\zeta ^{-b}$ is taken along the negative imaginary axis such that $\arg \zeta \in (-\frac{\pi}{2},\frac{3\pi}{2})$. And, $C(\zeta)$ is the following constant matrix
        \begin{equation}
         C(\zeta)=
         \begin{cases}
           e^{-\frac{1}{2}\pi i a \sigma_3}e^{\pi i b \sigma_3}, \qquad & \zeta \in \mathrm{I} \cup \mathrm{II},     \\
           e^{\frac{1}{2}\pi i a \sigma_3}e^{\pi i b \sigma_3}, \qquad  & \zeta \in \mathrm{III} \cup \mathrm{IV},   \\
           \begin{pmatrix} 0 & -1 \\ 1 & 0 \end{pmatrix} e^{-\frac{1}{2}\pi i a \sigma_3}, \qquad & \zeta \in \mathrm{V} \cup \mathrm{VI},  \\
           \begin{pmatrix} 0 & -1 \\ 1 & 0 \end{pmatrix} e^{\frac{1}{2}\pi i a \sigma_3}, \qquad  & \zeta \in \mathrm{VII} \cup \mathrm{VIII}.
         \end{cases}
        \end{equation}

  \item [(d)] As $\zeta \to 0$, $\Phi(\zeta)$ has the following local behavior:
        \begin{equation} \label{eq: Phi0-local-0}
        \begin{aligned}
          & \mathrm{for} \ a>0, \quad \Phi(\zeta) =
          \begin{cases}
            O\begin{pmatrix} \zeta^{a} & \zeta^{-a} \\ \zeta^{a} & \zeta^{-a} \end{pmatrix}, \qquad & \zeta \in \mathrm{II} \cup \mathrm{III} \cup \mathrm{VI} \cup \mathrm{VII}, \\
            O\begin{pmatrix} \zeta^{a} & \zeta^{a} \\ \zeta^{a} & \zeta^{a} \end{pmatrix}, \qquad & \zeta \in \mathrm{I} \cup \mathrm{IV} \cup \mathrm{V} \cup \mathrm{VIII}.
          \end{cases}
          \\
           & \mathrm{for} \ a=0, \quad \Phi(\zeta) =
           \begin{cases}
             O\begin{pmatrix} 1 & \ln \zeta \\ 1 & \ln \zeta \end{pmatrix}, \qquad & \zeta \in \mathrm{II} \cup \mathrm{III} \cup \mathrm{VI} \cup \mathrm{VII}, \\
             O\begin{pmatrix} \ln \zeta & \ln \zeta \\ \ln \zeta & \ln \zeta \end{pmatrix}, \qquad & \zeta \in \mathrm{I} \cup \mathrm{IV} \cup \mathrm{V} \cup \mathrm{VIII},
           \end{cases}
          \\
         & \mathrm{for} \ a<0, \quad \Phi(\zeta) = O \begin{pmatrix} \zeta^{a} & \zeta^{a} \\ \zeta^{a} & \zeta^{a} \end{pmatrix}, \qquad \qquad \zeta \in \mathbb{C} \setminus \{\cup_{i=1}^8 \Gamma_i \}.
        \end{aligned}
        \end{equation}
 \end{itemize}
\end{rhp}

Let $G$ and $H$ be functions defined in term of the standard   Whittaker functions M($\cdot$) and  W($\cdot$) as follows:
\begin{equation}
 G(s,t,\zeta):=\zeta^{-\frac{1}{2}}\mathrm{M}_{\frac{1}{2}+\frac{t}{2}-s,\frac{t}{2}}(\zeta), \qquad
 H(s,t,\zeta):=\zeta^{-\frac{1}{2}}\mathrm{W}_{\frac{1}{2}+\frac{t}{2}-s,\frac{t}{2}}(\zeta).
\end{equation}
Then, the solution to the above RH problem is given explicitly as, for $\zeta \in \mathrm{II}$,
\begin{equation}\label{phi}
 \Phi_0(\zeta) =
 \begin{pmatrix}
  \frac{\Gamma(1+a-b)}{\Gamma(1+2a)}G(a+b,2a,\zeta)e^{-\frac{3}{2}\pi i a}   &
  -\frac{\Gamma(1+a-b)}{\Gamma(a+b)}H(1+a-b,2a,e^{-\pi i}\zeta) e^{\frac{1}{2}\pi i a}      \\
  \frac{\Gamma(1+a+b)}{\Gamma(1+2a)}G(1+a+b,2a,\zeta)e^{-\frac{3}{2}\pi i a} &
  H(a-b,2a,e^{-\pi i}\zeta)e^{\frac{1}{2}\pi i a}
 \end{pmatrix}.
\end{equation}
The solution in other sectors can be obtained from the above formula and the jump condition in \eqref{jumpforphi}. One can obtain more details about the local behavior as $\zeta \to 0$ in \eqref{eq: Phi0-local-0}. More precisely, we have
\begin{small}
\begin{equation}\label{localbehaviorforphinear0}
  \Phi_0(\zeta)=
  \begin{cases}
  \begin{pmatrix}
    \frac{\Gamma(1+a-b)}{\Gamma(1+2a)} & -\frac{\Gamma(2a)}{\Gamma(a+b)} \\
    \frac{\Gamma(1+a+b)}{\Gamma(1+2a)} & \frac{\Gamma(2a)}{\Gamma(a-b)}
  \end{pmatrix}\left(I+O(\zeta)\right)
  e^{-\frac{3}{2}\pi i a \sigma_3}\zeta^{a \sigma_3},
   & a>\frac{1}{2},
  \\
  \begin{pmatrix}
    \frac{\Gamma(1+a-b)}{\Gamma(1+2a)} & -\frac{\Gamma(2a)}{\Gamma(a+b)} \\
    \frac{\Gamma(1+a+b)}{\Gamma(1+2a)} & \frac{\Gamma(2a)}{\Gamma(a-b)}
  \end{pmatrix}\left(I+O(\zeta\ln\zeta)\right)
  e^{-\frac{3}{2}\pi i a \sigma_3}\zeta^{a \sigma_3},
   & a=\frac{1}{2},
  \\
  \begin{pmatrix}
    \frac{\Gamma(1+a-b)}{\Gamma(1+2a)} & -\frac{\Gamma(2a)}{\Gamma(a+b)} \\
    \frac{\Gamma(1+a+b)}{\Gamma(1+2a)} & \frac{\Gamma(2a)}{\Gamma(a-b)}
  \end{pmatrix}\left(I+O(\zeta^{2a})\right)
  e^{-\frac{3}{2}\pi i a \sigma_3}\zeta^{a \sigma_3},
   & 0<a<\frac{1}{2},
  \\
  \begin{pmatrix}
    \Gamma(1-b) & \frac{\psi(1-b)+2\gamma_E-\pi i}{\Gamma(b)} \\
    \Gamma(1+b) & -\frac{\psi(-b)+2\gamma_E-\pi i}{\Gamma(-b)}
  \end{pmatrix}\left(I+O(\zeta) \right)
  \begin{pmatrix}
    1 & \frac{1}{\Gamma(b)\Gamma(1-b)}\ln \zeta \\ 0 & 1
  \end{pmatrix},
   & a=0,
  \\
  \begin{pmatrix}
    \frac{\Gamma(1+a-b)}{\Gamma(1+2a)}e^{-\frac{3}{2}\pi i a}\zeta^a (1+O(\zeta))
    & -\frac{\Gamma(1+a-b)\Gamma(-2a)}{\Gamma(1-a-b)\Gamma(a+b)}e^{-\frac{1}{2}\pi i a}\zeta^a \left(1+O(\zeta^{-2a})\right) \\
    \frac{\Gamma(1+a+b)}{\Gamma(1+2a)}e^{-\frac{3}{2}\pi i a}\zeta^a (1+O(\zeta))
    & \frac{\Gamma(-2a)}{\Gamma(-a-b)}e^{-\frac{1}{2}\pi i a}\zeta^a
    \left(1+O(\zeta^{-2a})\right)
  \end{pmatrix},
  \ & -\frac{1}{2}<a<0,
 \end{cases}
\end{equation}
\end{small}
where $\gamma_E$ is Euler's constant and $\psi$ is the polygamma function of order 0. In addition, $\Phi(\zeta)$ satisfies a differential equation as follows
\begin{equation}\label{differentialrelationofphi}
 \frac{d \Phi(\zeta)}{d\zeta} \Phi^{-1}(\zeta) = -\frac{1}{2}\sigma_3 + \frac{1}{\zeta}
 \begin{pmatrix} -b & \frac{\Gamma(1+a-b)}{\Gamma(a+b)}e^{4 \pi i a} \\ \frac{\Gamma(1+a+b)}{\Gamma(a-b)}e^{-4\pi i a} & b \end{pmatrix}.
\end{equation}

\end{appendices}

\end{document}